\documentclass{article}

\usepackage{arxiv}

\usepackage{amsmath}
\usepackage{amssymb}
\usepackage{amsthm}
\usepackage{subcaption}
\usepackage{tikz}
\usepackage{booktabs}
\usepackage{braket}
\usepackage{color}
\usepackage{enumitem}
\usepackage{array}

\newtheorem{prop}{Proposition}
\newtheorem{definition}{Definition}
\newtheorem*{problem}{Problem}
\newtheorem{example}{Example}
\usepackage{dsfont}
\definecolor{Green}{rgb}{0.0, 0.5, 0.0}

\usepackage{graphicx,wrapfig}      
\usepackage{natbib}        

\newcommand{\uu}{\mathbf{u}}
\newcommand{\ww}{\mathbf{w}}
\newcommand{\x}{\mathbf{x}}
\newcommand{\y}{\mathbf{y}}

\newcommand{\R}{{\mathbb{R}}}

\newcommand{\N}{{\mathbb{N}}}

\newcommand{\filtration}{\mathds{F}}

\definecolor{amber}{rgb}{1.0, 0.49, 0.0}

\makeatletter
\newcommand{\safenewtheorem}[2]{\@ifundefined{#1}{\newtheorem{#1}{#2}}{}}
\makeatother
\theoremstyle{plain}
\safenewtheorem{thm}{Theorem}
\safenewtheorem{lem}{Lemma}
\safenewtheorem{cor}{Corollary}
\theoremstyle{definition}
\safenewtheorem{assum}{Assumption}
\safenewtheorem{scase}{Special Case}
\theoremstyle{remark}
\safenewtheorem{rem}{Remark}
\makeatletter
\@ifundefined{pf}{\newenvironment{pf}{\par\noindent\textit{Proof.}\ }{\par\medskip}}{}
\makeatother
\providecommand{\qed}{\ensuremath{\square}}

\mathchardef\mathcommachar="602C
\begingroup\catcode`,=\active
\gdef,{\mathcommachar\penalty0\relax}
\endgroup
\mathcode`\,="8000
\newif\ifshowrevisions
\showrevisionsfalse
\ifshowrevisions
  \definecolor{hlcolor}{rgb}{0,0,1}
\else
  \definecolor{hlcolor}{rgb}{0,0,0}
\fi
\let\hl\relax
\newcommand{\hl}[2][]{{\color{hlcolor}#2}}

\title{Compositional Approximately Bisimilar Abstractions of Stochastic Interconnected Systems
\thanks{ This work was partly supported by the ARTPARK and the ANR PIA funding: ANR-20-IDEES-0002.}
}

\author{
 S. Saharsh \\
  Department of Cyber-Physical Systems \\
  IISc, Bengaluru, India\\
  \texttt{saharsh2021@iisc.ac.in} 
  \And
 Sadek Belamfedel Alaoui \\
  College of Computing, University Mohammed VI Polytechnic\\
  Benguerir, Morocco \\
  \texttt{{ sadek.belamfedel@um6p.ma}} 
  \And
 Abdalla Swikir \\
  Mohamed Bin Zayed University of Artificial Intelligence (MBZUAI)\\
  Abu Dhabi, UAE, and Omar Al-Mukhtar University (OMU), Albaida, Libya \\
  \texttt{abdalla.swikir@mbzuai.ac.ae} 
  \And
  Pushpak Jagtap \\
  Department of Cyber-Physical Systems\\
  IISc, Bengaluru, India\\
  \texttt{pushpak@iisc.ac.in}
  \And
 Adnane Saoud \\
  College of Computing, University Mohammed VI Polytechnic\\
  Benguerir, Morocco \\
  \texttt{{adnane.saoud@um6p.ma}} 
}

\begin{document}
\maketitle

\begin{abstract}
This paper addresses the challenge of constructing (alternating) bisimilar symbolic models for discrete-time, stochastic, interconnected systems. By leveraging a refined over-approximation of the stochastic reachable sets and the $q$-th moment input-to-state stability ($\delta$-ISS-$M_q$) property, we establish an approximate bisimulation relation between stochastic systems and their deterministic (noise-free) counterparts. This approach significantly simplifies the analysis of the approximate (alternating) bisimulation between the global interconnected stochastic system and its symbolic model. We derive formal conditions under which local bisimulation relations are preserved when composing subsystems into a global interconnected system. 
The engineering relevance of the proposed compositional framework is demonstrated through temperature regulation case studies involving a fixed ten-room building and a circular cascade interconnection scaled from six to fourteen rooms. The results show that, beyond a small number of subsystems, the proposed method achieves substantially faster abstraction computation times than standard monolithic approaches, with the performance gap widening as the number of interconnected subsystems grows and the monolithic approach failing to converge beyond ten rooms.
\end{abstract}

\vspace{0.5em}
\noindent\textbf{Keywords:} Stochastic systems, Symbolic control, Compositional abstraction, Interconnected systems

\section{Introduction}

In recent years, formal synthesis methods have emerged as a rigorous paradigm for the design of stochastic control systems, attracting significant attention for their ability to automatically generate controllers that are correct-by-construction \cite{lavaei2022automated}. These approaches are particularly valuable for handling complex specifications, such as those expressed in linear temporal logic (LTL) or represented by automata on infinite strings \cite{baier2008principles}. Despite these advances, the automated synthesis of controllers for interconnected stochastic systems remains inherently challenging.

To address this complexity, the utilization of finite abstractions has become a standard methodology. This process involves abstracting the continuous dynamics into a finite set of states, performing analysis and synthesis on this abstract model using algorithmic techniques from the formal methods community \cite{baier2008principles, majumdar2024symbolic}, and mapping the controller back to the concrete system with guaranteed error bounds. The existing literature encompasses various abstraction techniques for stochastic systems, including finite bisimilar abstractions for randomly switched systems \cite{zamani2014symbolic}, incrementally stable switched systems \cite{zamani2015symbolic}, and continuous control systems \cite{zamani2014approximately}. Alternative approaches focus on constructing infinite abstractions for jump-diffusion systems using small-gain conditions \cite{zamani2016approximations}, utilizing stochastic reachability analysis for discrete-time hybrid systems \cite{ABATE20082724}, or proposing symbolic models based on input set discretization \cite{zamani2017towards} and omega-regular objectives \cite{dutreix2022abstraction}.

However, monolithic abstraction procedures frequently encounter the curse of dimensionality, rendering them computationally intractable for large-scale interconnected systems. To circumvent this, compositional approaches have been advocated, wherein a global abstraction is synthesized from local subsystem abstractions by exploiting the inherent modularity and interconnection topology of the underlying dynamics. This compositional paradigm imparts modularity, enabling the reconfiguration of interconnection topologies without necessitating a global re-computation, thus maintaining computational efficiency when integrating auxiliary subsystems.

Early contributions in the compositional domain predominantly addressed deterministic dynamics. Notable works include the employment of simulation functions and small-gain conditions \cite{pola2016symbolic}, partial feedback linearizability \cite{hussien2017abstracting}, joint dissipativity properties \cite{zamani2017compositional}, and alternating simulation functions \cite{swikir2019compositional}. Furthermore, \cite{saoud2021compositional} introduced a generalized framework based on ``approximate composition'' that obviates the need for standard small-gain conditions. More recently, compositional methodologies have been extended to the stochastic domain. Studies such as \cite{awan2023abstractions, lavaei2020compositional} leverage small-gain theorems and stochastic simulation functions, while others address finite Markov decision processes for switched systems \cite{lavaei2020automatica} or employ dissipativity-type properties \cite{Lavaei2022}.

Crucially, however, these existing stochastic frameworks typically establish an approximate (alternating)  {simulation} relation. The primary impetus for this research lies in the derivation of approximate (alternating)  {bisimulation} relations for the class of discrete-time stochastic interconnected systems. Existing compositional frameworks predominantly rely on simulation relations, which characterize a unidirectional behavioral inclusion essentially establishing that the abstract model over-approximates the reachable sets of the concrete dynamics. While such inclusion suffices for safety verification, it inherently introduces conservatism in control synthesis; specifically, the inability to synthesize a controller in the abstract domain does not strictly preclude its existence in the concrete system.
\hl{In contrast, the proposed (alternating) bisimulation relation enforces a bidirectional behavioral equivalence between the stochastic interconnected system and its deterministic abstraction. This stronger relational property is significant because, within the general theory of (alternating) bisimulation for symbolic control \cite{tabuada2009verification}, such bidirectional equivalence enables a control policy synthesized over the finite abstraction to be refined to the concrete stochastic domain with quantifiable error bounds, and underlies decidability results for the existence of valid control policies with respect to a broad class of specifications. In this paper, we rather focus on establishing this bisimulation relation compositionally for stochastic interconnected systems. We demonstrate the practical use of the constructed relation through a safety-controller case study (Section \ref{sec: case study}).}

In the broader context of controller synthesis, compositional methods allow for efficient distributed synthesis without requiring operations on the full state space. Early works utilized tiling \cite{le2016distributed} and small-gain theorems \cite{dallal2015compositional}, often employing assume-guarantee reasoning \cite{saoud2018composition, saoud2021assume} to decompose global control problems into independent local problems, as demonstrated for multi-periodic sampled-data systems in \cite{saoud2020contract}. While methods using disturbance bisimulation relations \cite{mallik2018compositional} and overlapping symbolic models \cite{meyer2017compositional} have attempted to reduce conservatism, they often restrict the feasible control action space or lack formal quantification via alternating bisimulation relations, a goal we achieve in this work. 

This paper extends the preliminary results presented in \cite{sadek2022compositional}. In this version, we extend the results to stochastic systems and introduce a novel approach to compute a less conservative upper bound for the distance between noisy and noise-free trajectories using the $q^{th}$-moment of the input-to-state stability concept. We provide detailed proofs, new examples, and a comprehensive case study illustrating compositional abstraction for discrete-time stochastic systems. Our contributions focus on four main aspects:
\begin{itemize}
    \item \hl{We build on the notion of approximate composition from \cite{saoud2021compositional} and extend it to establish compositional abstraction under approximate stochastic (alternating) bisimulation for interconnected stochastic systems, strengthening the existing one-directional simulation results into a bidirectional (alternating) bisimulation guarantee, which, by the general theory of approximate bisimulation \cite{tabuada2009verification}, supports controller refinement with quantified error.}
    \item We quantify the conservatism between discrete-time stochastic subsystems and their noiseless counterparts using the $\delta$-ISS-M$_q$ property via an approximate bisimulation relation. By transitivity, this establishes a guaranteed approximate bisimulation connection between the global interconnected stochastic system and the abstractions of the noiseless systems.
    \item We derive a tighter upper bound on the $q^{th}$-moment distance between the noisy and noise-free trajectories than the one obtained directly from the $\delta$-ISS-M$_q$ property, under a mild moment inequality on the diffusion term of the discrete-time stochastic system.
    \item Finally, we perform a comparative analysis against monolithic abstraction techniques to demonstrate the reduction in computation time for \hl{large-scale interconnected systems}. We validate the effectiveness and scalability of our approach through a case study involving temperature regulation in a ten-room building.
\end{itemize}
\hl{
In summary, the proposed framework differs from existing compositional
abstraction results in the following respects. Compared with~\cite{saoud2021compositional}, which establishes only approximate
(alternating) simulation between approximate and conservative compositions of
deterministic systems, we prove the converse direction under $\delta$-ISS,
yielding bisimulation.
Compared with the stochastic simulation-function framework
of~\cite{lavaei2020compositional}, which gives probabilistic output bounds under
a max small-gain condition, we give a $q$-th moment guarantee under approximate
composition and a $\delta$-ISS-type condition, and the relation is bidirectional.
Compared with~\cite{zamani2014symbolic,haesaert2017verification}, our
construction is compositional.
Our Lyapunov-free bound on the distance between noisy and noise-free trajectories covers $q=1$, which the Lyapunov-based
bound of~\cite[Lemma~3.8]{zamani2014symbolic} (stated for $q\ge 2$) does not.
}

\section{Preliminaries and Problem statement}
    
\textbf{Notations:}
The sets of positive integers, non-negative integers, real numbers, positive real numbers, and non-negative real numbers are denoted by $\mathbb{N}$, $\mathbb{N}_{0}$, $\mathbb{R}$, $\mathbb{R}^{+}$, and $\mathbb{R}_{0}^{+}$, respectively. 
The null vector of dimension $N \in \mathbb{N}$ is denoted by $\mathbf{0}_{N}:=(0, \ldots, 0)$. 
Given a set $X$, the identity map on $X$ is denoted by $\mathrm{id}_X(s)=s$ for all $s \in X$. A map $\mathbf{d}_{X}: X \times X \rightarrow \mathbb{R}_{0}^{+}$ is called a {pseudometric} if the following conditions hold for any $x_{1}, x_{2}, x_{3} \in X$: 
(i) $x_{1}=x_{2} \Rightarrow \mathbf{d}_{X}\left(x_{1}, x_{2}\right)=0$; 
(ii) $\mathbf{d}_{X}\left(x_{1}, x_{2}\right)=\mathbf{d}_{X}\left(x_{2}, x_{1}\right)$; and 
(iii) $\mathbf{d}_{X}\left(x_{1}, x_{3}\right) \leqslant \mathbf{d}_{X}\left(x_{1}, x_{2}\right)+\mathbf{d}_{X}\left(x_{2}, x_{3}\right)$. 
Given a relation $\mathcal{R} \subseteq A \times B$ \hl{that relates elements $a \in A$ with elements $b \in B$, we denote the relation by $a\mathcal{R}b$ whenever $(a, b) \in \mathcal{R}$.} The relation $\mathcal R$ can equivalently be viewed as the set-valued map $\mathcal{R}: A \rightrightarrows B$ defined 
by $b \in \mathcal{R}(a)$ if and only if $(a, b) \in \mathcal{R}$. The inverse relation is denoted by $\mathcal{R}^{-1} := \{(b,a) \in B \times A \mid (a,b) \in \mathcal{R}\}$. \hl{We denote the indicator function on the set $A$ as $\mathbf{1}_A$. A measurable space is a pair $(X, \mathcal{F})$ consisting of a sample space $X$ and $\sigma$-algebra $\mathcal{F}$. A Polish space $X$ is a separable, completely metrizable topological space. Unless otherwise specified, we equip $X$ with its Borel $\sigma$-algebra $\mathcal{B}(X)$. A probability measure $\mathbb P$ on measurable space $(X, \mathcal{F})$ defines probability space by the triplet $(X, \mathcal{F}, \mathbb P)$, where $x \sim \mathbb P$ represents a realization drawn according to $\mathbb P$. Boldface symbols (e.g., $\x$) denote random variables defined on $(\Omega, \mathcal{F}, \mathbb{P})$, and non-bold symbols (e.g., $x$) denote their realizations. The space of real-valued random variables with finite $q^{th}$ moment is denoted as $L^q$.} For a vector $x \in \mathbb{R}^n$ and a matrix $A \in \mathbb{R}^{n \times m}$, $\|x\|$ and $\|A\|$ denote the infinity norm and the induced infinity norm, respectively. 
For a discrete-time signal $\mathbf{x}:\mathbb{N}_0 \rightarrow X \subseteq \mathbb{R}^n$, the truncated supremum norm is denoted by $\|\mathbf{x}\|_{k} := \sup _{j \in \{0, \ldots, k\}} \|\mathbf{x}(j)\|$. A continuous function $\alpha: \mathbb{R}_{0}^{+} \rightarrow \mathbb{R}_{0}^{+}$ is of class $\mathcal{K}$ if $\alpha(0)=0$ and it is strictly increasing. If the function $\alpha$ is of class $\mathcal{K}$ and it is also unbounded, then it is of class $\mathcal{K}_{\infty}$. 
A function $\sigma: \mathbb{N}_{0} \rightarrow \mathbb{R}_{0}^{+}$ is of class $\mathcal{L}$ if it is strictly decreasing and $\lim _{k \rightarrow \infty} \sigma(k)=0$. 
A function $\beta: \mathbb{R}_{0}^{+} \times \mathbb{N}_{0} \rightarrow \mathbb{R}_{0}^{+}$ is of class $\mathcal{K} \mathcal{L}$ if $\beta(\cdot, k)$ is of class $\mathcal{K}$ for any fixed $k \in \mathbb{N}_0$, and $\beta(s, \cdot)$ is of class $\mathcal{L}$ for any fixed \hl{$s \in \mathbb{R}^+$}.

\subsection{Discrete-time stochastic control systems (DT-SCS)}
Let $(\Omega, \mathcal{F}, \mathbb{P})$ be a probability space endowed with a filtration $\filtration = (\mathcal{F}_k)_{k\in\mathbb{N}_0}$ satisfying the usual conditions of completeness and right continuity \cite[p.48]{karatzas1991brownian}. Let the sequence $(\ww(k))_{k \in \mathbb{N}_0}$ be a $p$-dimensional $\filtration$-adapted discrete-time stochastic process modeling the noise.

\begin{definition}
	\label{Def:DiscreteTimeSCS}
	A discrete-time stochastic control system is defined by the tuple
	$
	\Sigma = \bigl( \mathcal{X}, \mathcal{U}^{ext}, \mathcal{U}^{int}, \ww, f, g \bigr),
	$
	where:
	\begin{itemize}
		\item $\mathcal{X} \subseteq \mathbb{R}^{n}$ is the {Polish state space}, with Borel measurable space $(\mathcal{X}, \mathcal{B}(\mathcal{X}))$;
		\item $\mathcal{U}^{ext}$ is the set of external control inputs;
		\item $\mathcal{U}^{int}$ is the set of internal control inputs;
		\item $\ww := \{ \ww(k) \ | \ \ww(k): \Omega \rightarrow \mathcal{V}_{\ww}, k \in \mathbb{N}_{0} \}$ is the sequence of $p$-dimensional independent and identically distributed (i.i.d.) random variables taking values in the \hl{bounded set $\mathcal{V}_{\ww} \subset \mathbb{R}^{p}, \mathbf{0}_p \in \mathcal{V}_{\ww}$,} with mean $\varpi \in \mathbb{R}^p$ and covariance $\Lambda \in \mathbb{R}^{p \times p}$;
		\item the function $
		f: \mathcal{X}\times\mathcal{U}^{ext}\times\mathcal{U}^{int} \rightarrow \mathcal{X}$
		is a drift term;
		\item the function $
		g: \mathcal{X} \rightarrow \mathbb{R}^{n \times p}$
		is a diffusion term.
	\end{itemize}
    Then, a stochastic process $\x: \mathbb{N}_0 \rightarrow \mathcal{X}$ is said to be a  {solution process} of $\Sigma$ if there exist $ \uu^{ext}:\mathbb{N}_0 \rightarrow \mathcal{U}^{ext}$ and $\uu^{int}:\mathbb{N}_0 \rightarrow \mathcal{U}^{int}$ for any $k \in \mathbb{N}_{0}$ satisfying:
    \begin{equation}\label{dt_sys}   \x(k+1)=f\bigl(\x(k),\uu^{ext}(k),\uu^{int}(k)\bigr) +  g\bigl(\x(k)\bigr)\ww (k).
    \end{equation}
\end{definition}
Similarly, we define the noise-free version $\bar{\Sigma}$ of the DT-SCS in \eqref{dt_sys} as
\begin{align}\label{nfe} 
{\x}(k+1) = f({\x}(k), \uu^{ext}(k), \uu^{int}(k)).
\end{align}
We use $\x_{x\varsigma}(k)$ to represent the value of the solution process $\x_{x\varsigma}$ of $\Sigma$ at the $k^{th}$ instant, starting from the initial state $x\in\mathcal{X}$ under $\varsigma := (\uu^{ext}, \uu^{int})$, where the external and internal input signals $ \uu^{ext}:\mathbb{N}_0 \rightarrow \mathcal{U}^{ext}$ and $\uu^{int}:\mathbb{N}_0 \rightarrow \mathcal{U}^{int}$.
In the sequel, we simplify notation by using
$f_k\bigl(\x,\uu^{ext},\uu^{int}\bigr) \equiv f\bigl(\x(k),\,\uu^{ext}(k),\,\uu^{int}(k)\bigr),
g_k\bigl(\x\bigr) \equiv g\bigl(\x(k)\bigr).$

The stability of DT-SCS is characterized by incremental Input-to-State Stability (ISS) in the $q^{th}$ moment, which is stated as follows.
\subsection{Incremental Input-to-State Stability in $q^{th}$-moment for DT-SCS}

We introduce a discrete-time version of the notion of incremental input-to-state stability in the $q^{th}$-moment ($\delta$-ISS-$M_{q}$), originally proposed in \cite{zamani2014symbolic, jagtap2017backstepping} for continuous-time stochastic systems.

\begin{definition} \label{DefD-ISS-M-q} 
    Consider the stochastic control system $\Sigma$ given in \eqref{dt_sys}. The system $\Sigma$ is said to be globally incrementally input-to-state stable in the $q^{th}$-moment ($\delta$-ISS-$M_{q}$), with $q\in \N$, if there exists a class $\mathcal{K} \mathcal{L}$ function $\beta$, a class $\mathcal{K}_{\infty}$ function $\gamma$, and a constant $\psi \in \mathbb{R}^{+}_{0}$ such that for any $k \in \mathbb{N}_{0}$, any pair of initial states $x_{1}, x_{2} \in \mathcal{X}$, and any input signals $\mathbf{u}_{1}^{int}, \mathbf{u}_{2}^{int}: \mathbb{N}_0\rightarrow \mathcal{U}^{int}$ and $\mathbf{u}_{1}^{ext}, \mathbf{u}_{2}^{ext} : \mathbb{N}_0\rightarrow \mathcal{U}^{ext}, \varsigma_1 = (\mathbf{u}_{1}^{ext}, \mathbf{u}_{1}^{int}), \varsigma_2 = (\mathbf{u}_{2}^{ext}, \mathbf{u}_{2}^{int})$, the following inequality holds:
    \begin{align}\label{D-ISS-M-q}
        \mathbb{E}&\left[\| \x_{x_1\varsigma_1}(k)- \x_{x_2\varsigma_2}(k) \|^{q}\right] \nonumber \\
        &\leqslant \beta\left(\|x_{1}-x_{2}\|^{q}, k\right) + \gamma^{int}\left( \left\|\mathbf{u}_{1}^{int}-\mathbf{u}_{2}^{int}\right\|_{k-1}\right)\nonumber\\
        &\quad + \gamma^{ext}\left( \left \|\mathbf{u}_{1}^{ext}-\mathbf{u}_{2}^{ext}\right\|_{k-1}\right) +\psi, 
    \end{align}
    where the second and third terms on the right-hand side of \eqref{D-ISS-M-q} are defined as $0$ for $k=0$.
\end{definition}

\begin{rem}
    It can be readily verified that the $\delta$-ISS-$M_{q}$ property for the stochastic control system $\Sigma$ reduces to the standard $\delta$-ISS property \cite{bayer2013discrete,sadek2022compositional} in the case of the noise-free control system $\bar{\Sigma}$ (where $q=1$ and $\psi=0$). In such cases, \eqref{D-ISS-M-q} simplifies to:
    \begin{align}\label{D-ISS}
        &\|\x_{x_1\varsigma_1}(k)- \x_{x_2\varsigma_2}(k)\| \nonumber \\
        &\quad \leqslant \beta\left(\|x_{1} - x_{2}\|, k\right) +\gamma^{int}\left( \left\|\mathbf{u}_{1}^{int}-\mathbf{u}_{2}^{int}\right\|_{k-1}\right) \nonumber \\
        &\quad + \gamma^{ext}\left( \left \|\mathbf{u}_{1}^{ext}-\mathbf{u}_{2}^{ext}\right\|_{k-1}\right).
    \end{align}
\end{rem}

We derive the explicit maps $\beta$, $\gamma^{ext}$, and $\gamma^{int}$ that characterize the $\delta$-ISS-$M_{q}$ property for both linear and nonlinear implementations of the DT-SCS $(\Sigma)$ in Appendix \ref{ISSl} and Appendix \ref{ISSnl}, respectively.

\subsection{DT-SCS as a gMDP}
\label{sec: dtss-ts}
\hl{We can model DT-SCS as a general Markov decision process (gMDPs), using the definition in \cite{wang2026correct}, as below.
\begin{definition}\label{def: gmdp}
A general Markov Decision Process (gMDP) is a tuple $ M=\left(X,X^{0}, U^{{ext}}, U^{int}, \mathbb{T}, Y, H\right) $, where Polish state space $X$ is the set of states, $X^{0} \subseteq X$ is the set of initial states, Polish input spaces $U^{ext}$ and $U^{{int}}$ are the set of external and internal inputs, respectively, stochastic kernel $\mathbb{T} : X \times U^{ext} \times U^{int} \times \mathcal{B}(X) \rightarrow [0, 1]$ is a map that assigns each state-input tuple $(x, u^{ext}, u^{int}), x \in X, u^{ext} \in U^{ext}, u^{int} \in U^{int}$ a probability measure $\mathbb{T}(.\vert x, u^{ext}, u^{int})$ over the measurable successor state space $(X, \mathcal{B}(X))$, $Y$ is the set of outputs, and $H: X \rightarrow Y$ is the measurable output map. 
\end{definition}
}
A DT-SCS as in \eqref{dt_sys} is represented by gMDP as below. 
\begin{equation}\label{dtss-trans}
    M(\Sigma) = \left(X,\, X^{0},\, {U}^{ext},\, {U}^{int},\,\mathbb{T},\, Y,\, H\right),
\end{equation}
where the set of states $X = \mathcal{X}$, a Polish state space, is associated with the Borel measurable space $({X}, \mathcal{B}({X}))$, the set of initial states is $X^{0} \subseteq {X}$, and the Polish input spaces, is denoted as input sets ${U}^{ext} = \mathcal{U}^{ext}, {U}^{int} = \mathcal{U}^{int}$. We define the stochastic kernel $\mathbb T : X \times U^{ext} \times U^{int} \times \mathcal{B}(X) \to [0,1]$ by
\begin{equation}\label{kernel}
    \mathbb T(A \mid x, u^{ext}, u^{int}) := \mathbb{P}\big(f(x,u^{ext},u^{int}) + g(x)\ww \in A\big), \qquad \forall A \in \mathcal{B}(X),
\end{equation}
i.e., $\mathbb T(\cdot \mid x, u^{ext}, u^{int})$ is the probability distribution induced on $X$ by the measurable map $w \mapsto f(x,u^{ext},u^{int}) + g(x)w$, where $w$ denotes a realization of the random variable $\ww$, with $w\in\mathcal{V}_{\ww}$.
Consequently, we denote the set of admissible inputs for a state $x \in X$ by $U_{M}(x)$ and define it formally as follows.
\begin{equation}\label{adm_M}
    U_{M}(x) := \big\{ (u^{ext},u^{int}) \in U^{ext}\times U^{int} \;\big|\; \mathbb{T}(X \mid x, u^{ext},u^{int}) = 1 \big\}.
\end{equation}
This ensures that a pair of inputs is allowed at state $x$ only if the next state of \eqref{dt_sys} is guaranteed almost surely to remain within the set of states $X$ for all possible noise realizations. The output set is $Y = X$, and $H$ is the identity map $\mathrm{id}_X(\cdot)$. 

Next, we define the notion of a transition system for modeling the noise-free version of DT-SCS.
\subsection{Noise-free system as Transition system}
We define the notion of  {transition systems} adopted from \cite{tabuada2009verification}, which permits us to represent non-stochastic control systems and their abstractions in a unified way.

\begin{definition}\label{DefTransitionSys}
	A transition system is a tuple $ S=\left(X, X^{0}, U^{{ext}}, U^{int}, \Delta, Y, H\right) $, where $X$ is the set of states (possibly infinite), $X^{0} \subseteq X$ is the set of initial states, $U^{ext}$ and $U^{{int}}$ are the set of external and internal inputs (possibly infinite), respectively, $\Delta \subseteq X \times U^{ext} \times U^{int} \times X$ is the transition relation, $Y$ is the set of outputs, and $H: X \rightarrow Y$ is the output map. 
\end{definition}

The set of admissible inputs for $x\in X$ is denoted by $U_{S}(x):= \{({u}^{ext}, {u}^{int}) \in U^{ext} \times U^{int} \mid  \Delta(x, u^{ext}, {u}^{int}) \neq \emptyset \}.$ We denote by $x^{\prime} \in \Delta\left(x, u^{{ext }}, {u}^{int}\right)$ an alternative representation for a transition $\left(x, {u}^{ext},{u}^{int}, x^{\prime}\right) \in \Delta$, where state $x^{\prime}$ is called a $\left({u}^{ext}, {u}^{int}\right)$-successor (or simply successor) of state $x$. A transition system $S$ is said to be:
\begin{itemize}
	\item[$\bullet$] \textbf{ {pseudometric}}, if the state set $X$, input sets $U^{ext},U^{int}$ and the output set $Y$ are equipped with pseudometrics $\mathbf{d}_X:X\times X\rightarrow \mathbb{R}_0^+$, $\mathbf{d}_{U^{int}}: U^{int} \times U^{int} \rightarrow \mathbb{R}_{0}^{+}$, $\mathbf{d}_{U^{ext}}: U^{ext} \times U^{ext} \rightarrow \mathbb{R}_{0}^{+}$ and $\mathbf{d}_Y: Y \times Y \rightarrow \mathbb{R}_{0}^{+}$;
	\item[$\bullet$] \textbf{ {finite}}, if $X$, $ U^{{int}} $, $ U^{{ext}} $ and $Y$ are finite sets;
    \item[$\bullet$] \textbf{ {deterministic}}, if there is at most one $\left({u}^{ext}, {u}^{int}\right)$-successor, for any $x \in X$ and $\left(u^{{ext}}, u^{{int}}\right) \in$ $U^{{ext}} \times U^{{int}}$.
\end{itemize}


We represent the transition system associated to the noise-free version $\bar{\Sigma}$ in \eqref{nfe} by
\begin{equation}\label{dtss-nf-trans}
    S(\bar{\Sigma}) = \left(X,\, X^{0},\, {U}^{ext},\, {U}^{int}, {\Delta},\, Y,\, H\right),
\end{equation}
where the set of states $X = \mathcal{X}$,
the set of initial states is $X^{0} \subseteq {X}$, and the input sets are ${U}^{ext} = \mathcal{U}^{ext}, {U}^{int} = \mathcal{U}^{int}$. We define the transition relation $\Delta$ such that $x' \in \Delta(x, u^{ext}, u^{int}) \iff x' \in f(x, u^{ext}, u^{int})$.
For the transition system in \eqref{dtss-nf-trans}, we denote the set of admissible inputs for a state $x \in X$ by $U_{S}(x)$ and define it formally as follows.
\begin{equation}\label{adm_S}
    U_{S}(x) := \big\{ (u^{ext},u^{int}) \in U^{ext}\times U^{int} \;\big|\; \Delta(x, u^{ext},u^{int}) \subseteq X,\ \Delta(x, u^{ext},u^{int}) \neq \emptyset \big\}.
\end{equation}
This ensures that a pair of inputs is allowed at state $x$ only if the next state of \eqref{nfe} is guaranteed to remain within the set of states $X$. The output set is $Y = X$, and $H$ is the identity map $\mathrm{id}_X(\cdot)$.
\begin{rem}\label{SS: rem/gMDP-TS}
The gMDP $M(\Sigma)$'s or its noise-free version $S(\bar\Sigma)$'s solution matches, i.e.,
$M(\Sigma) = S(\bar\Sigma)$, precisely when the process noise
vanishes, i.e., $g(x) \equiv 0$ for all $x \in X$ or $\mathcal{V}_{\ww} = \{0\}$, so that $\mathbb T(A \mid x, u^{ext}, u^{int}) = \mathbf{1}_A\big(f(x,u^{ext},u^{int})\big), \forall A \in \mathcal{B}(X),$ i.e., $\mathbb T(\cdot \mid x, u^{ext}, u^{int})$ reduces to the Dirac measure.
Similarly, we model Transition system $S(\bar \Sigma )$ as gMDP $M(S(\bar \Sigma)) := (X, X^{0}, U^{ext}, U^{int}, \mathbb{T}_{\Delta}, Y, H)$, where the stochastic kernel is defined as
\begin{equation}\label{SS: eqn/rem1}
    \mathbb T_\Delta(A \mid x, u^{ext}, u^{int}) = \mathbf{1}_A\big(\Delta(x, u^{ext}, u^{int})\big), \qquad \forall A \in \mathcal{B}(X),
\end{equation}
i.e., the Dirac measure concentrated at the reachable successor subspace.
\end{rem}

\subsection{Approximate (alternating) stochastic bisimulation relations}

In this subsection, we establish the stochastic (alternating) bisimulation relation between stochastic gMDPs, as defined in \eqref{dtss-trans}. We further define the corresponding (alternating) bisimulation relation between a stochastic gMDP and the deterministic transition system, as well as between transition systems. Consider two gMDPs $M_{1}=(X_{1}, X_{1}^{0}, U_{1}^{ext}, U_{1}^{int}, \mathbb{T}_{1}, Y_{1}, H_{1})$ and $M_{2}=(X_{2}, X_{2}^{0}, U_{2}^{ext}, U_{2}^{int}, \mathbb{T}_{2}, Y_{2}, H_{2})$, two transition systems $S_{1}=(X_{1}, X_{1}^{0}, U_{1}^{ext}, U_{1}^{int}, \Delta_{1}, Y_{1}, H_{1})$ and $S_{2}=(X_{2}, X_{2}^{0}, U_{2}^{ext}, U_{2}^{int}, \Delta_{2}, Y_{2}, H_{2})$. 
We assume that 
\begin{equation}\label{assum_sp}
    Y_1,Y_2\subseteq Y, \qquad U_j^{ext}\subseteq U^{ext}, \qquad U_j^{int}\subseteq U^{int}, \quad j\in\{1,2\},
\end{equation}where $(Y,\mathbf{d}_Y)$, $(U^{ext},\mathbf{d}_{U^{ext}})$, $(U^{int},\mathbf{d}_{U^{int}})$ are pseudometric spaces.
\hl{For defining equivalence type relations between gMDPs, we need coupling of probability measures $\mathbb{T}_1, \mathbb{T}_2$ over Polish spaces. We borrow the notion of $\hl{\varrho}-$lifting from \cite[Definition 5]{haesaert2017verification} for coupling of the measures, rewritten as below.
\begin{definition}\label{DefinitionLifting}
    Let $X_1, X_2$ be two sets with associated probability spaces $(X_1, \mathcal{B}(X_1), \Gamma), (X_2, \mathcal{B}(X_2), \Theta)$, and let $\mathcal{R} \subseteq X_1 \times X_2$ be a relation such that $\mathcal{R} \in \mathcal{B}(X_1 \times X_2)$. For $\hl{\varrho} \in [0,1]$, we denote by $\bar{\mathcal{R}}_{\hl{\varrho}} \subseteq \mathbb{P}(X_1, \mathcal{B}(X_1)) \times \mathbb{P}(X_2, \mathcal{B}(X_2))$ the corresponding lifted relation (stated as $\Gamma \bar{\mathcal{R}}_{\hl{\varrho}} \Theta$), if there exists a probability space $\mathbb{S} = (X_1 \times X_2, \mathcal{B}(X_1 \times X_2), \mathbb{W})$ satisfying
    \begin{itemize}
        \item[(i)] for all $\mathbf{X}_1 \in \mathcal{B}(X_1): \mathbb{W}(\mathbf{X}_1 \times X_2) = \Gamma(\mathbf{X}_1)$;
        \item[(ii)] for all $\mathbf{X}_2 \in \mathcal{B}(X_2): \mathbb{W}({X}_1 \times \mathbf{X}_2) = \Theta(\mathbf{X}_2)$;
        \item[(iii)] for the probability space $\mathbb{S}$ and for a random pair $(s, t) \sim \mathbb{W}$, it holds that $s\mathcal{R}t$ with probability at least $1 - \hl{\varrho}$ or $\mathbb{W}(\mathcal{R}) \geq 1 - \hl{\varrho}$. 
    \end{itemize}
\end{definition}}
\hl{Intuitively, $\bar{\mathcal{R}}_{\hl{\varrho}}$ lifts the point-to-point relation $\mathcal{R}$ to probability distributions by requiring the existence of a joint coupling $(s, t) \sim \mathbb{W}$ (with marginal distributions $\Gamma$ and $\Theta$) such that $s\mathcal{R}t$ holds with probability at least $1-\hl{\varrho}$.}
We take advantage of the above defined $\hl{\varrho}-$lifting relation to define approximate stochastic simulation relations, which are as follows. 
\begin{definition}\label{DefinitionBisimulation}
    For $\varepsilon, \mu \in \R_0^{+}, \hl{\hl{\varrho} \in [0,1]}$, $M_{1}$ is said to be $(\varepsilon, \mu, \hl{\varrho})$-approximately \hl{stochastically} simulated by $M_{2}$, if \hl{\eqref{assum_sp} holds} and there exists a relation $\mathcal{R} \subseteq X_{1} \times X_{2}$, for which there exists a \hl{Borel measurable coupling kernel $\mathbb{W}_{\mathbb{T}}(. \vert u^{ext}_1, u^{int}_1, x_1, x_2)$ on $X_1 \times X_2$} satisfying:
    \begin{itemize}
        \item[(i)] $\forall x_{1}^{0} \in X_{1}^{0}, \exists x_{2}^{0} \in X_{2}^{0}$ such that $\left(x_{1}^{0}, x_{2}^{0}\right) \in \mathcal{R}$; 
        \item[(ii)] $\forall\left(x_{1}, x_{2}\right) \in \mathcal{R}, \mathbf{d}_{Y}\left(H_{1}\left(x_{1}\right), H_{2}\left(x_{2}\right)\right)\leqslant \varepsilon$;
        \item[(iii)] {$\forall\left(x_{1}, x_{2}\right) \in \mathcal{R}, \forall\left({u}_{1}^{ext}, u_{1}^{int}\right) \in U_{M_{1}}(x_{1}),$  $ \exists\left({u}_{2}^{ext}, u_{2}^{int}\right) \in U_{M_{2}}\left(x_{2}\right)$ with  $\max (\mathbf{d}_{{U}^{ext}}(u_{1}^{ext}, {u}_{2}^{ext}), \mathbf{d}_{{U}^{int}}({u}_{1}^{int}, {u}_{2}^{int})) \leqslant \mu$ and \hl{$\mathbb{T}_1(. \vert x_1, u^{ext}_1, u^{int}_1) \bar{\mathcal{R}}_{\hl{\varrho}} \mathbb{T}_2(. \vert x_2, u^{ext}_2, u^{int}_2)$, with lifted probability measure $\mathbb{W}_{\mathbb{T}}(. \vert u^{ext}_1, u^{int}_1, x_1, x_2)$.} }
    \end{itemize}
    Moreover, $M_{1}$ is said to be $(\varepsilon,\mu, \hl{\hl{\varrho}})$-approximately stochastically bisimilar to $M_{2}$, if $M_{1}$ is $(\varepsilon,\mu, \hl{\hl{\varrho}})$-approximately stochastically simulated by $M_{2}$, and $M_{2}$ is $(\varepsilon,\mu, \hl{\varrho})$-approximately stochastically simulated by $M_{1}$. Stochastic simulation and bisimulation relations are denoted, respectively, by $M_{1} \preccurlyeq^{\varepsilon, \mu}_{\hl{\varrho}} M_{2}$ and $M_{1} \approx^{\varepsilon, \mu}_{\hl{\varrho}} M_{2}$.
\end{definition}
Now, we consider different variants of the stochastic bisimulation relation defined above under two special cases.
\hl{
\begin{scase}(Stochastic Bisimulation between gMDP $M_1$ and Transition system $S_2$)
\label{SS: scase/bisim1}
Consider the deterministic transition system $S_2$ as the gMDP $M(S_2)$ (Remark~\ref{SS: rem/gMDP-TS}), whose kernel
$\mathbb{T}_2(\cdot\vert x_2,u^{ext}_2,u^{int}_2)$ is the Dirac measure on
$\Delta_2(x_2,u^{ext}_2,u^{int}_2)$. The relation is now between a random variable and a point, where $(\x_1,x_2)\in\mathcal{R}$ relates
a (random) state $\x_1$ of $M_1$ to a deterministic state $x_2$ of $S_2$.
$M_1$ is $(\varepsilon,\mu,\varrho)$-approximately stochastically simulated by $M(S_2)$ if there
exists such a relation $\mathcal{R}$ with $q \in \N$, for any realization $x_1 \sim \x_1,$ and a Borel measurable coupling kernel
$\mathcal{W}_{\mathbb{T}}(\cdot\vert u^{ext}_1,u^{int}_1,x_1,x_2)$ satisfying:
\begin{enumerate}
\item[(i)] $\forall x_1^0\in X_1^0$, $\exists x_2^0\in X_2^0$ such that $(x_1^0,x_2^0)\in\mathcal{R}$;
\item[(ii)] $\forall(\mathbf{x}_1,x_2)\in\mathcal{R}$,
$\mathbb{E}\big[\mathbf{d}_Y\big(H_1(\mathbf{x}_1),H_2(x_2)\big)^q\big]^{\frac1q}\leqslant\varepsilon$;
\item[(iii)] $\forall(\mathbf{x}_1,x_2)\in\mathcal{R}$, $\forall(u^{ext}_1,u^{int}_1)\in U_{M_1}({x}_1)$,
$\exists(u^{ext}_2,u^{int}_2) \in U_{M(S_2)}(x_2)$ with 
$\max\big(\mathbf{d}_{U^{ext}}(u^{ext}_1,u^{ext}_2),\mathbf{d}_{U^{int}}(u^{int}_1,u^{int}_2)\big)\leqslant\mu$
and $\mathbb{T}_1(\cdot\vert{x}_1,u^{ext}_1,u^{int}_1)\,\overline{\mathcal{R}}_\varrho\,\mathbb{T}_2(\cdot\vert x_2,u^{ext}_2,u^{int}_2)$
with lifted measure $\mathcal{W}_{\mathbb{T}}$.
\end{enumerate}
Since membership of the successor pair $(\x_1', x_2'), \x_1' = f(\x_1, \uu^{ext}, \uu^{int}) + g(\x_1)\ww, x_1' \sim \mathbb T(.|x_1, u_1^{ext}, u_1^{int}), x_2' \sim \mathbb T(.|x_2, u_2^{ext}, u_2^{int}),$ in $\mathcal{R}$ is a deterministic property (the moment
bound holds or fails), $\varrho\in\{0,1\}$. Bisimulation requires the converse simulation, and is denoted
$M_1\approx^{\varepsilon,\mu}_{\varrho}M(S_2)$.
\end{scase}}
\hl{Note that in contrast with Definition \ref{DefinitionBisimulation}, here $\mathcal R$ collects pairs $(\x_1,x_2)$ with $\x_1$ an $L^q$ random
variable and $x_2\in X_2$, where the successor of $\x_1$ is generated by
$\mathbb T_1$ given the past, and condition (ii) becomes the deterministic moment
condition $\mathbb{E}\big[\mathbf{d}_Y(H_1(\x_1),H_2(x_2))^q\big]^{1/q}\le\varepsilon$;
hence $\varrho=0$ whenever this bound holds at the successor pair.}
\hl{\begin{scase} (Bisimulation between Transition systems $S_1, S_2$) :
 As Transition systems do not require equivalence relations between probability measures, we further simplify condition (iii) in Definition \ref{DefinitionBisimulation} as below.
 \begin{itemize}
     \item[(iii)] $\forall\left(x_{1}, x_{2}\right) \in \mathcal{R}, \forall\left({u}_{1}^{ext}, u_{1}^{int}\right) \in U_{S_{1}}(x_{1}), \forall x_{1}^{\prime} \in \Delta_{1}\left(x_{1}, {u}_{1}^{ext}, u_{1}^{int}\right), \exists\left({u}_{2}^{ext}, u_{2}^{int}\right) \in U_{S_{2}}\left(x_{2}\right)$ with $\max (\mathbf{d}_{{U}^{ext}}(u_{1}^{ext}, {u}_{2}^{ext}), \mathbf{d}_{{U}^{int}}({u}_{1}^{int}, {u}_{2}^{int})) \leqslant \mu$ and $\exists x_{2}^{\prime} \in \Delta_{2}\left(x_{2}, {u}_{2}^{ext}, {u}_{2}^{int}\right)$ such that
         $\left(x_{1}^{\prime}, x_{2}^{\prime}\right) \in \mathcal{R}$.
 \end{itemize}
 Together with conditions (i) and (ii), we define the bisimulation relation (not stochastic) between Transition systems, denoted as $S_{1} \approx^{\varepsilon, \mu} S_{2}$. \label{SS: scase/bisim2}
\end{scase}}
In verification problems, approximate (bi)simulation relations are commonly utilized. Specifically, the concept of approximate alternating (bi)simulation relations, as introduced in \cite{tabuada2009verification}, is particularly relevant when the objective is to synthesize controllers. These relations play a crucial role in ensuring correct behavior and correctness guarantees while addressing compositional controller synthesis scenarios.

\begin{definition}\label{DefinitionAlternatingBisimulation}
For $\varepsilon, \mu \in \R_0^{+}, \hl{\hl{\varrho} \in [0,1]}$, $M_{1}$ is said to be $(\varepsilon, \mu, \hl{\varrho})$-approximately \hl{stochastically} alternatingly simulated by $M_{2}$, if \hl{\eqref{assum_sp} holds} and there exists a relation $\mathcal{R} \subseteq X_{1} \times X_{2}$, for which there exists a \hl{Borel measurable coupling kernel $\mathbb{W}_{\mathbb{T}}(. \vert u^{ext}_2, u^{int}_2, x_1, x_2)$ on $X_1 \times X_2$} satisfying:
    \begin{itemize}
        \item[(i)] $\forall x_{1}^{0} \in X_{1}^{0}, \exists x_{2}^{0} \in X_{2}^{0}$ such that $\left(x_{1}^{0}, x_{2}^{0}\right) \in \mathcal{R}$;
        \item[(ii)] $\forall\left(x_{1}, x_{2}\right) \in \mathcal{R}, \mathbf{d}_{Y}\left(H_{1}\left(x_{1}\right), H_{2}\left(x_{2}\right)\right) \leqslant \varepsilon$;
        \item[(iii)] $\forall\left(x_{1}, x_{2}\right) \in \mathcal{R}, \forall\left({u}_{2}^{ext}, u_{2}^{int}\right) \in U_{M_{2}}(x_{2}),$  $ \exists\left({u}_{1}^{ext}, u_{1}^{int}\right) \in U_{M_{1}}\left(x_{1}\right)$ with  $\max (\mathbf{d}_{{U}^{ext}}(u_{1}^{ext}, {u}_{2}^{ext}), \mathbf{d}_{{U}^{int}}({u}_{1}^{int}, {u}_{2}^{int})) \leqslant \mu$ and \hl{$\mathbb{T}_1(. \vert x_1, u^{ext}_1, u^{int}_1) \bar{\mathcal{R}}_{\hl{\varrho}} \mathbb{T}_2(. \vert x_2, u^{ext}_2, u^{int}_2)$, with lifted probability measure $\mathbb{W}_{\mathbb{T}}(. \vert u^{ext}_2, u^{int}_2, x_1, x_2)$.}
    \end{itemize}
    Moreover, $M_{1}$ is said to be $(\varepsilon,\mu, \hl{\varrho})$-approximately stochastically alternatingly bisimilar to $M_{2}$, if $M_{1}$ is $(\varepsilon,\mu, \hl{\varrho})$-approximately stochastically alternatingly simulated by $M_{2}$, and $M_{2}$ is $(\varepsilon,\mu, \hl{\varrho})$-approximately stochastically alternatingly simulated by $M_{1}$. The stochastic alternating simulation and stochastic alternating bisimulation relations are denoted, respectively, by $M_{1} \widehat{\preccurlyeq}_{\hl{\varrho}}^{\varepsilon, \mu} M_{2}$ and $M_{1} \widehat{\approx}_{\hl{\varrho}}^{\varepsilon, \mu} M_{2}$.
\end{definition}
Similarly, as with the stochastic bisimulation relation, we consider the different variants of Definition \ref{DefinitionAlternatingBisimulation} depending on the type of system models (gMDP or Transition system).
\hl{\begin{scase} (Stochastic Alternating Bisimulation between gMDP $M_1$and Transition system $S_2$) : We can represent a transition system $S_2$ using gMDP $M(S_2)$, as in Remark \ref{SS: rem/gMDP-TS}, same as in Special Case \ref{SS: scase/bisim1}. Once both the systems are modeled as gMDP, we follow conditions (ii) relaxed as in Special Case \ref{SS: scase/bisim1}, (i), (iii), as in Definition \ref{DefinitionAlternatingBisimulation} to define the stochastic alternating bisimulation as $M_{1} \widehat\approx^{\varepsilon, \mu}_{\hl{\varrho}} M(S_{2}).$ \label{SS: scase/abisim1}
\end{scase}}
\hl{\begin{scase} (Alternating Bisimulation between Transition systems $S_1, S_2$) :
 As Transition systems do not require equivalence relations between probability measures, we further simplify condition (iii) in Definition \ref{DefinitionAlternatingBisimulation} as below.
 \begin{itemize}
     \item[(iii)] For all $(x_1,x_2)\in\mathcal{R}$ and all
$({u}_2^{ext},{u}_2^{int})\in U_{S_2}(x_2)$,
there exists
$(u_1^{ext},u_1^{int})\in U_{S_1}(x_1)$
with $\max\!\left(
  \mathbf{d}_{U^{ext}}(u_1^{ext},u_2^{ext}),
  \mathbf{d}_{U^{int}}(u_1^{int},u_2^{int})
\right)
\le \mu$
such that for all
$x_1'\in\Delta_1(x_1,u_1^{ext},u_1^{int})$, $\exists x_{2}^{\prime} \in\Delta_{2}\left(x_{2}, {u}_{2}^{ext}, {u}_{2}^{int}\right)$ such that $\left(x_{1}^{\prime}, x_{2}^{\prime}\right) \in \mathcal{R}$.
 \end{itemize}
 Together with conditions (i) and (ii), we define the alternating bisimulation relation (not stochastic) between Transition systems denoted as $S_{1} \widehat\approx^{\varepsilon, \mu} S_{2}$. \label{SS: scase/abisim2}
\end{scase}}
Contrary to the concepts of approximate (bi)-simulation relation introduced in \cite{tabuada2009verification} and \cite{girard2007approximation}, the concept of approximate (bi)-simulation relation introduced in Definition \ref{DefinitionBisimulation} is more relaxed since it allows a mismatch on the choice of inputs for the transition systems, and requires successor matching in probability. In particular, when $\mu=0, \hl{\hl{\varrho} = 0},$ and $\mathbf{d}_{U^{int}}$ is a metric, the relation proposed in Definition \ref{DefinitionBisimulation} reduces to the notion of approximate bisimulation introduced in \cite{girard2007approximation}, and when $\mu=\infty, \hl{\hl{\varrho} = 0}$, it covers the approximate bisimulation relation given in \cite{tabuada2009verification}. Furthermore, the concept of approximate alternating bisimulation of Definition \ref{DefinitionAlternatingBisimulation} recovers the one in \cite{pola2009symbolic} by taking $\mu=\infty, \hl{\hl{\varrho} = 0}$.

To gather all the necessary ingredients to present our main results, the following two propositions are required. These properties primarily demonstrate the ordering and transitivity properties of the introduced relationships. 

\begin{prop}
    \label{prop1}
    
    \hl{Given three gMDPs $M_{1}, M_{2}$ and $M_{3}$ with shared output space. For any $\mu, \mu^{\prime} \in \R_0^{+}$, $\varepsilon, \varepsilon^{\prime} \in \R_0^{+}$, and $\hl{\varrho}, \hl{\varrho}^{\prime} \in [0,1]$, the following statements hold:
    \begin{itemize}
        \item if $M_{1} \preccurlyeq^{\varepsilon, \mu}_{\hl{\varrho}}  M_{2}$, $M_{2} \preccurlyeq^{\varepsilon', \mu'}_{\hl{\varrho}'} M_{3}$, then $M_{1} \preccurlyeq^{\varepsilon+\varepsilon^{\prime}, \mu+\mu^{\prime}}_{\hl{\varrho} + \hl{\varrho}'}  M_{3}$;
        \item if $M_{1} \approx^{\varepsilon, \mu}_{\hl{\varrho}} M_{2}$, $M_{2} \approx^{\varepsilon', \mu'}_{\hl{\varrho}'} M_{3}$, then $M_{1} \approx^{\varepsilon+\varepsilon^{\prime}, \mu+\mu^{\prime}}_{\hl{\varrho} + \hl{\varrho}'} M_{3}$;
        \item if $M_{1} \widehat{\preccurlyeq}^{\varepsilon, \mu}_{\hl{\varrho}}  M_{2}$, $M_{2} \widehat{\preccurlyeq}^{\varepsilon', \mu'}_{\hl{\varrho}'} M_{3}$, then $M_{1} \widehat{\preccurlyeq}^{\varepsilon+\varepsilon^{\prime}, \mu+\mu^{\prime}}_{\hl{\varrho} + \hl{\varrho}'}  M_{3}$;
        \item if $M_{1} \widehat{\approx}^{\varepsilon, \mu}_{\hl{\varrho}} M_{2}$, $M_{2} \widehat{\approx}^{\varepsilon', \mu'}_{\hl{\varrho}'} M_{3}$, then $M_{1} \widehat{\approx}^{\varepsilon+\varepsilon^{\prime}, \mu+\mu^{\prime}}_{\hl{\varrho} + \hl{\varrho}'} M_{3}$.
    \end{itemize}}
\end{prop}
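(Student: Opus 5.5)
The plan is to use the composed relation as the witness. Given the relations $\mathcal{R}_{12}\subseteq X_1\times X_2$ and $\mathcal{R}_{23}\subseteq X_2\times X_3$ that witness the two hypotheses, set
\[
\mathcal{R}_{13}:=\{(x_1,x_3)\mid \exists x_2\in X_2:\ (x_1,x_2)\in\mathcal{R}_{12},\ (x_2,x_3)\in\mathcal{R}_{23}\}.
\]
We then check conditions (i)--(iii) of Definition~\ref{DefinitionBisimulation}, or of Definition~\ref{DefinitionAlternatingBisimulation} for the alternating items.

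Condition (i) is immediate by chaining the initial-state conditions. Condition (ii) follows from the triangle inequality of the pseudometric $\mathbf{d}_Y$ on the shared output space: $\mathbf{d}_Y(H_1(x_1),H_3(x_3))\le \varepsilon+\varepsilon'$. For the input part of (iii), in the simulation case we fix $(u_1^{ext},u_1^{int})\in U_{M_1}(x_1)$. The first hypothesis gives an admissible $u_2$ within $\mu$ of $u_1$, and the second gives an admissible $u_3$ within $\mu'$ of $u_2$. The triangle inequality for $\mathbf{d}_{U^{ext}}$ and $\mathbf{d}_{U^{int}}$, applied componentwise before taking the max, gives distance at most $\mu+\mu'$. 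In the alternating case the quantifiers run in the other direction: start from $u_3\in U_{M_3}(x_3)$, obtain $u_2$ from the second hypothesis, then $u_1$ from the first. The bisimulation items follow by applying the simulation items to both directions, using that $(\mathcal{R}_{12}\circ\mathcal{R}_{23})^{-1}=\mathcal{R}_{23}^{-1}\circ\mathcal{R}_{12}^{-1}$.

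The main obstacle is the lifting part of (iii). We have couplings $\mathbb{W}_{12}$ of $(\mathbb{T}_1,\mathbb{T}_2)$ with $\mathbb{W}_{12}(\mathcal{R}_{12})\ge 1-\varrho$, and $\mathbb{W}_{23}$ of $(\mathbb{T}_2,\mathbb{T}_3)$ with $\mathbb{W}_{23}(\mathcal{R}_{23})\ge 1-\varrho'$. We need a coupling of $(\mathbb{T}_1,\mathbb{T}_3)$ that puts mass at least $1-\varrho-\varrho'$ on $\mathcal{R}_{13}$. The plan is the gluing lemma. Since the spaces are Polish, disintegrate $\mathbb{W}_{12}(dx_1,dx_2)=\mathbb{T}_2(dx_2)K_{1}(dx_1\mid x_2)$ and $\mathbb{W}_{23}=\mathbb{T}_2(dx_2)K_3(dx_3\mid x_2)$ with Borel kernels. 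Define the measure on $X_1\times X_2\times X_3$ by $\mathbb{T}_2(dx_2)K_1(dx_1\mid x_2)K_3(dx_3\mid x_2)$, and let $\mathbb{W}_{13}$ be its $(1,3)$-marginal. Its marginals are $\mathbb{T}_1$ and $\mathbb{T}_3$. A union bound on the complements gives
\[
\mathbb{P}\big((x_1,x_2)\in\mathcal{R}_{12},\,(x_2,x_3)\in\mathcal{R}_{23}\big)\ge 1-\varrho-\varrho',
\]
and this event is contained in $\{(x_1,x_3)\in\mathcal{R}_{13}\}$. If $\varrho+\varrho'>1$ the claim is vacuous.

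Two technical points need attention. First, $\mathcal{R}_{13}$, being a projection of a Borel set, is in general only analytic, hence universally measurable. We would either assume the composed relation is Borel, as Definition~\ref{DefinitionLifting} requires, or argue via the completion. Second, the composed coupling kernel must depend measurably on $(u,x_1,x_3)$. This requires a measurable selection of the intermediate $x_2$ and the intermediate input $u_2$, which we would obtain from a measurable selection theorem or simply assume, as is customary in \cite{haesaert2017verification}.

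For the special cases where some $M_i$ is a transition system $M(S_i)$, the kernels are Dirac measures and the gluing is trivial. For the moment-type condition (ii) of Special Case~\ref{SS: scase/bisim1}, Minkowski's inequality in $L^q$ replaces the triangle inequality.
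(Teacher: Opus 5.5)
Your proposal is correct and follows the paper's route. Both arguments use the composed relation $\mathcal{R}_{13}$, the triangle inequality for outputs and for inputs (with the input selection order reversed in the alternating items), obtain bisimulation by applying the simulation result in both directions, and compose the couplings through $X_2$. The only difference is that you carry out the gluing and union-bound step explicitly and flag the Borel measurability of $\mathcal{R}_{13}$ and of the composed kernel; the paper delegates these points to Haesaert et al.\ without comment.
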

\begin{pf}
Items 1 and 2 follow from \cite[Theorem 5, Corollary 1]{haesaert2017verification} with respect to $(\varepsilon,\hl{\varrho})$ for the transitivity property, taking $\mathcal R_{13}:=\{(x_1,x_3)\mid \exists x_2:(x_1,x_2)\in\mathcal R_{12},\,(x_2,x_3)\in\mathcal R_{23}\}$ and composing the coupling kernels $\mathbb W_{\mathbb T_{12}}$ and $\mathbb W_{\mathbb T_{23}}$ through $X_2$. For the input deviation, let $(x_1,x_3)\in\mathcal R_{13}$ with witness $x_2$ and $(u_1^{ext},u_1^{int})\in U_{M_1}(x_1)$. Condition (iii) for $\mathcal R_{12}$ yields $u_2\in U_{M_2}(x_2)$ within distance $\mu$, and condition (iii) for $\mathcal R_{23}$ then yields $u_3\in U_{M_3}(x_3)$ within distance $\mu'$ of $u_2$. By the triangle inequality on $\mathbf d_{U^{ext}}$ and $\mathbf d_{U^{int}}$, both distances between $u_1$ and $u_3$ are at most $\mu+\mu'$. Item 2 follows by applying item 1 to $M_1\preccurlyeq M_2\preccurlyeq M_3$ and $M_3\preccurlyeq M_2\preccurlyeq M_1$.
Items 3 and 4 follow analogously, with the order of input selection reversed: given $u_3\in U_{M_3}(x_3)$, select $u_2$ through $\mathcal R_{23}$ and then $u_1$ through $\mathcal R_{12}$, and compose the coupling kernels in the same order. \hfill $\qed$
\end{pf}
Using Remark \ref{SS: rem/gMDP-TS}, the same transitivity property holds for transition systems, as they can be modeled as gMDP using \eqref{SS: eqn/rem1}. All four items can be stated with $\hl{\varrho} = 0$ for the three pseudometric transition systems $S_1, S_2, S_3$ as below. We remove the coupling measure $\hl{\varrho}$ as the relations are unaffected by the noise process.
\begin{itemize}
        \item if $S_{1} \preccurlyeq^{\varepsilon, \mu}  S_{2}$, $S_{2} \preccurlyeq^{\varepsilon', \mu'} S_{3}$, then $S_{1} \preccurlyeq^{\varepsilon+\varepsilon^{\prime}, \mu+\mu^{\prime}}  S_{3}$;
        \item if $S_{1} \approx^{\varepsilon, \mu} S_{2}$, $S_{2} \approx^{\varepsilon', \mu'} S_{3}$, then $S_{1} \approx^{\varepsilon+\varepsilon^{\prime}, \mu+\mu^{\prime}} S_{3}$;
        \item if $S_{1} \widehat{\preccurlyeq}^{\varepsilon, \mu}  S_{2}$, $S_{2} \widehat{\preccurlyeq}^{\varepsilon', \mu'} S_{3}$, then $S_{1} \widehat{\preccurlyeq}^{\varepsilon+\varepsilon^{\prime}, \mu+\mu^{\prime}}  S_{3}$;
        \item if $S_{1} \widehat{\approx}^{\varepsilon, \mu} S_{2}$, $S_{2} \widehat{\approx}^{\varepsilon', \mu'} S_{3}$, then $S_{1} \widehat{\approx}^{\varepsilon+\varepsilon^{\prime}, \mu+\mu^{\prime}} S_{3}$.
    \end{itemize}
\begin{prop}
    \label{prop2}
    \hl{Given two gMDPs $M_{1}$ and $M_{2}$. For any $\mu^{\prime} \geq \mu \geqslant 0$, $\varepsilon^{\prime} \geqslant \varepsilon \geqslant 0$, and $\hl{\varrho}, \hl{\varrho}' \in [0,1], \hl{\varrho}' \geq \hl{\varrho}$, the following statements hold:
    \begin{itemize}
        \item if $M_{1} \preccurlyeq^{\varepsilon, \mu}_{\hl{\varrho}} M_{2}$ then $M_{1} \preccurlyeq^{\varepsilon^{\prime}, \mu^{\prime}}_{\hl{\varrho}'} M_{2}$;
        \item if $M_{1} \widehat{\preccurlyeq}_{\hl{\varrho}}^{\varepsilon, \mu} M_{2}$ then $M_{1} \widehat{\preccurlyeq}^{\varepsilon^{\prime}, \mu^{\prime}}_{\hl{\varrho}'} M_{2}$.
    \end{itemize}}
\end{prop}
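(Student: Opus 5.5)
The plan is to reuse the relation witnessing $M_{1} \preccurlyeq^{\varepsilon, \mu}_{\varrho} M_{2}$ without modification. Every defining condition of Definition~\ref{DefinitionBisimulation} is a monotone inequality in the parameters $(\varepsilon,\mu,\varrho)$, so enlarging any of them only weakens the requirement. Concretely, I will take the relation $\mathcal{R}\subseteq X_1\times X_2$ and the Borel measurable coupling kernel $\mathbb{W}_{\mathbb{T}}(\cdot\vert u^{ext}_1,u^{int}_1,x_1,x_2)$ given by the hypothesis, and show that they also witness $M_{1} \preccurlyeq^{\varepsilon', \mu'}_{\varrho'} M_{2}$.

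I check the conditions in order. Assumption \eqref{assum_sp} and condition (i) do not involve the parameters, so they carry over verbatim. For (ii), every $(x_1,x_2)\in\mathcal{R}$ satisfies $\mathbf{d}_Y(H_1(x_1),H_2(x_2))\leqslant\varepsilon\leqslant\varepsilon'$. For (iii), fix $(x_1,x_2)\in\mathcal{R}$ and $(u_1^{ext},u_1^{int})\in U_{M_1}(x_1)$. The hypothesis provides $(u_2^{ext},u_2^{int})\in U_{M_2}(x_2)$ whose maximal input distance is at most $\mu\leqslant\mu'$. It also provides a coupling $\mathbb{W}_{\mathbb{T}}$ whose marginals are $\mathbb{T}_1(\cdot\vert x_1,u_1^{ext},u_1^{int})$ and $\mathbb{T}_2(\cdot\vert x_2,u_2^{ext},u_2^{int})$ and which satisfies $\mathbb{W}_{\mathbb{T}}(\mathcal{R})\geqslant 1-\varrho\geqslant 1-\varrho'$. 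Items (i) and (ii) of Definition~\ref{DefinitionLifting} are unchanged and item (iii) holds with $\varrho'$, so $\mathbb{T}_1\,\bar{\mathcal{R}}_{\varrho'}\,\mathbb{T}_2$ holds with the same lifted measure.

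The second item, for the alternating relation of Definition~\ref{DefinitionAlternatingBisimulation}, follows in the same way. The only change is that in (iii) the quantifiers over inputs are reversed: for each $(u_2^{ext},u_2^{int})\in U_{M_2}(x_2)$ one chooses $(u_1^{ext},u_1^{int})$, and the kernel is indexed by $(u_2^{ext},u_2^{int})$. The same three monotone inequalities close the argument. If desired, the bisimilar versions follow at once by applying these items in both directions, as in the proof of Proposition~\ref{prop1}.

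I do not expect any real obstacle. The only point that needs care is the lifting step: one must confirm that $\bar{\mathcal{R}}_{\varrho}\subseteq\bar{\mathcal{R}}_{\varrho'}$ for $\varrho\leqslant\varrho'$ directly from Definition~\ref{DefinitionLifting}, so that no new coupling has to be constructed. This inclusion is immediate because the marginal constraints do not depend on $\varrho$.
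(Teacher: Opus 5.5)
Your proposal is correct and takes the same approach as the paper: both keep the witnessing relation $\mathcal{R}$ (and its coupling kernel) unchanged, note that conditions (i)--(iii) are monotone in $\varepsilon$, $\mu$ and $\varrho$, and handle the alternating case identically. The only difference is that you spell out the inclusion $\bar{\mathcal{R}}_{\varrho}\subseteq\bar{\mathcal{R}}_{\varrho'}$ for the lifting step, which the paper's one-line proof leaves implicit.
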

\begin{pf}
  \hl{For proof, the same relation $\mathcal{R}$ holds, since conditions (i)--(iii) of Definition~\ref{DefinitionBisimulation} are monotone in $\varepsilon,\mu,\varrho$ ($\varepsilon \le \varepsilon'$, $\mu \le \mu'$, $1-\varrho \ge 1-\varrho'$); the second statement follows identically.} \hfill $\qed$
\end{pf}
Similarly, using Remark \ref{SS: rem/gMDP-TS}, we can rewrite the above proposition for noise-free transition systems:
\begin{itemize}
        \item if $S_{1} \preccurlyeq^{\varepsilon, \mu} S_{2}$ then $S_{1} \preccurlyeq^{\varepsilon^{\prime}, \mu^{\prime}} S_{2}$;
        \item if $S_{1} \widehat{\preccurlyeq}^{\varepsilon, \mu} S_{2}$ then $S_{1} \widehat{\preccurlyeq}^{\varepsilon^{\prime}, \mu^{\prime}} S_{2}$.
\end{itemize}
    
In the following subsection, we define the interconnections among DT-SCS modeled either as gMDPs, in order to state our main problem of preserving the (alternating) bisimulation relation between an interconnected stochastic system and its noise-free counterpart.

\subsection{Interconnections among DT-SCS and   $\Xi$-compatibility composition}
	
An interconnection of discrete-time stochastic transition systems as defined in \eqref{dtss-trans} consists of a collection of $ N (\in \mathbb{N}) $ gMDPs $\left\{M_{i}(\Sigma)\right\}_{i \in I}$, indexed by the set of vertices $ I = \left\lbrace 1,\dots, N \right\rbrace $. The interconnection is defined by a binary connectivity relation $ \mathcal{I} \subseteq I \times I $, where each vertex $ i \in I $ is labeled with the gMDP $ M_{i}(\Sigma) $. For each $ i \in I $, we define $ \mathcal{N}(i) = \left\lbrace j \in I \mid (j, i) \in \mathcal{I} \right\rbrace $ as the set of neighboring components from which the incoming edges originate. The $ i^{th} $ subsystem is described by $M_{i}(\Sigma) =\left(X_{i}, X_{i}^{0}, U_{i}^{{ext}}, U_{i}^{{int}}, \mathbb{T}_{i}, Y_{i}, H_{i}\right)$, where $H_{i}$ is the identity map $ H_{i}(x)=x $. 

Consider the collection $\left\{M_{i}(\Sigma)\right\}_{i \in I}$. For each $i \in I$, let the Borel measurable space $(X_i, \mathcal{B}(X_i))$ be induced by the noise sequence $\ww_i(k): \Omega \rightarrow \mathcal{V}_{\ww}$, a $p$-dimensional $\filtration$-adapted discrete-time stochastic process as defined in Definition \ref{Def:DiscreteTimeSCS}. We assume that for all $i \in I$, the product of neighbor output sets $\prod_{j \in \mathcal{N}(i)} Y_{j}$ and the internal input set $U_{i}^{{int}}$ are subsets of the same pseudometric space equipped with a pseudometric $\mathbf{d}_{U_{i}^{int}}$ such that

\begin{equation}\label{Eq9}
    \mathbf{d}_{U_{i}^{{int}}}\left(u_{i}^{1, {int}}, u_{i}^{2, {int}}\right) = \max _{j \in \mathcal{N}(i)}\left\{\mathbf{d}_{Y_{j}}\left(y_{j}^{1}, y_{j}^{2}\right)\right\},
\end{equation}
where $u_{i}^{l, {int}}$ corresponds to the tuple of neighbor outputs $(y_{j}^{l})_{j \in \mathcal{N}(i)}$ for $l \in \{1,2\}$.

\begin{definition}\label{DefComposition}
    Let $\Xi:=\left(\mu_{1}, \ldots, \mu_{N}\right) \in\left(\mathbb{R}_{0}^{+}\right)^{N}$. We say that the collection of independent local gMDPs $\left\{M_{i}(\Sigma)\right\}_{i \in I}$ is compatible for the $\Xi$-approximate composition with respect to $\mathcal{I}$ if for each $i \in I$ and for every tuple of neighbor outputs, denoted as random variable $\y_{\mathcal{N}(i)} \in \prod_{j \in \mathcal{N}(i)} Y_{j}$, there exists a \hl{deterministic} internal input $u_{i}^{int} \in U_{i}^{int}$ such that:\hl{
    \begin{equation}\label{inp_ms}
        \mathbb E[\mathbf{d}_{U_{i}^{{int}}}\left(u_{i}^{{int}}, \y_{\mathcal{N}(i)}\right)^q]^{\frac 1q} \leqslant \mu_{i},
    \end{equation}}
    where $U_{i}^{{int}}, Y_i, i \in I$, share the same pseudometric space.
   \hl{ We denote the $\Xi$-approximate composed system by $\left\langle M_{i}(\Sigma)\right\rangle_{i \in I}^{\Xi, \mathcal{I}}$. It is defined by the tuple $\left(X, X^{0}, U^{ext}, \mathbb{T}_{\Xi}, Y, \right. $ $\left. H\right)$, where:
    \begin{itemize}
        \item $X=\prod_{i \in I} X_{i}$; $X^{0}=\prod_{i \in I} X_{i}^{0}$; $U^{ext}=\prod_{i \in I} U_{i}^{ext}$; $Y=\prod_{i \in I} Y_{i}$;
        \item $H(x)=\left(H_{1}\left(x_{1}\right), \ldots, H_{N}\left(x_{N}\right)\right)$;
        \item The stochastic kernel $\mathbb{T}_{\Xi}(\cdot \mid x, u^{ext})$ of the interconnected gMDP is well-defined if and only if, for every subsystem $i \in I$, there exists an internal input $u_i^{int} \in U_i^{int}$ such that
\begin{enumerate}
    \item the measurable selection $u_i^{int}$ exists such that input mismatch \eqref{inp_ms} holds; 
    \item $(u_i^{ext}, u_i^{int}) \in U_{M_i}(x_i)$;
    \item the local successor state is generated according to the stochastic kernel
$x_i' \sim \mathbb{T}_i(\cdot \mid x_i, u_i^{ext}, u_i^{int}).$
\end{enumerate}
Consequently, the stochastic kernel of the interconnected gMDP is given by the product kernel 
where, for every measurable set $A=\prod_{i\in I}A_i$ with
$A_i\in\mathcal{B}(X_i)$,

\begin{equation}\label{prodkernel}
    \mathbb{T}_{\Xi}(A \mid x, u^{ext}) = \prod_{i \in I} \mathbb{T}_{i}(A_i \mid x_i, u_i^{ext}, u_i^{int}).
\end{equation}
This construction assumes that the subsystem transitions are conditionally independent given the current state and the external and internal inputs.
\end{itemize}}
    Here, $x = (x_1, \dots, x_N)$ and $x' = (x_1', \dots, x_N')$. The stochastic kernel $\mathbb{T}_i$ is based on the solution processes $\x_i$ of the subsystems, as defined in \eqref{dt_sys}.
\end{definition}
\begin{rem}
    We considered an interconnected system with discrete-time stochastic subsystems, modeled as gMDPs $\left\{M_{i}(\Sigma)\right\}_{i \in I}$ for defining $\Xi$-approximate composition. However, Definition \ref{DefComposition} applies analogously to:
    \begin{itemize}
        \item The interconnected system $\left\{S_{i}(\bar{\Sigma})\right\}_{i \in I}$, composed of noise-free counterparts modeled as transition systems. The transition relation is defined by the deterministic map ${\Delta}_{i}$ rather than by the probability measure $\mathbb{T}_i$, where for any $x_i \in X_i$, the successor is as per \eqref{nfe}.
        \item We denote the $\Xi$-approximately composed transition system by $\left\langle S_{i}(\bar \Sigma)\right\rangle_{i \in I}^{\Xi, \mathcal{I}}$. It is defined by the tuple $\left(X, X^{0}, U^{ext}, \Delta_{\Xi}, Y, \right. $ $\left. H\right)$, where a transition $x^{\prime} \in \Delta_{\Xi}\left(x, {u}^{ext}\right)$ exists if and only if for all $i \in I$, there exists an internal input $u_{i}^{int} \in U_{i}^{int}$ such that:
        \begin{enumerate}
            \item $\mathbf{d}_{U_{i}^{int}}\left(u_{i}^{int}, y_{\mathcal{N}(i)}\right) \leqslant \mu_{i}$, where $y_{\mathcal{N}(i)} = (H_j(x_j))_{j \in \mathcal{N}(i)}$;
            \item $(u_{i}^{ext}, u_{i}^{int}) \in U_{S_{i}}(x_{i})$;
            \item $x_{i}^{\prime} \in \Delta_{i}\left(x_{i}, u_{i}^{ext}, u_{i}^{int}\right)$.
        \end{enumerate}
    \end{itemize}
\end{rem}

For an illustration of the notion of approximate composition, we refer the reader to the following example with transition systems.

\begin{example}
    An example of the interconnection of three transition systems $S_1, S_2$ and $S_3$ is presented in Fig. \ref{fig:interconnectedsys}. The connectivity relation is defined by $\mathcal{I}= \{(1,2), (2,3), (3,1)\}$, implying neighbor sets $\mathcal{N}(1)=\{3\}, \mathcal{N}(2)=\{1\}, \mathcal{N}(3)=\{2\}$. Fig. \ref{fig:interconnectedsys} illustrates the case of a composition with an approximate composition parameter vector $\Xi=(\mu_1,\mu_2,\mu_3)$.

    \begin{figure}[h]
        \centering
            \centering
            \includegraphics[scale = 0.24]{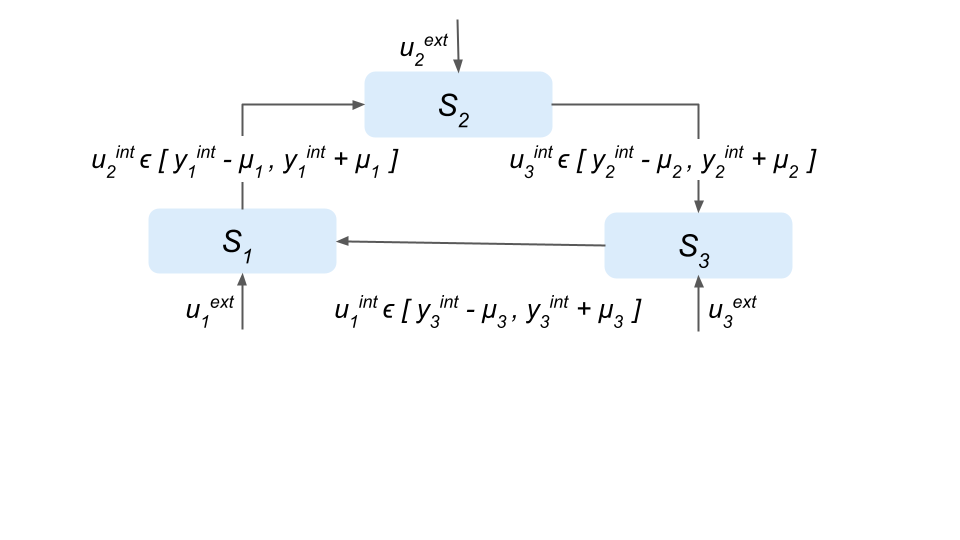}
            \vspace{-50pt}
            \caption{Example of composition with approximate parameter $\Xi = [\mu_1,\ \mu_2,\ \mu_3]$.}
        \label{fig:interconnectedsys}
    \end{figure}
\end{example}

We equip the composed output space with the metric such that for any \hl{random variable} $\y^{(j)} \in Y$ with $\y^{(j)}=(\y_{1}^{(j)}, \ldots, \y_{N}^{(j)})$ for $j \in\{1,2\}$, \hl{for product distribution,
\begin{align}\label{Eq10}
	\mathbf{d}\left(\y^{(1)}, \y^{(2)}\right)=\max _{i \in I}\left\{\mathbb E\big[\mathbf{d}_{Y_{i}}\big(\y_{i}^{(1)}, \y_{i}^{(2)}\big)\big]\right\}.
\end{align}
When the output $y^{(j)}$ represents points for a noise-free system, the metric resolves into a point-wise maximum metric.}
Similarly, we equip the composed external and internal input spaces with the pseudometric, shown for external inputs:
For any $u^{j,ext} \in U^{ext}$ with $u^{j,ext}=(u_{1}^{j,ext}, \ldots, u_{N}^{j,ext})$ for $j \in\{1,2\}$,

\hl{\begin{equation}\label{Eq11}
    \mathbf{d}_{U^{ext}}(u^{1,ext}, u^{2,ext}) = \max _{i \in I}\left\{\mathbb E\big[\mathbf{d}_{U_{i}^{ext}}(u_{i}^{1,ext}, u_{i}^{2,ext})\big]\right\}.
\end{equation}}

Using the interconnected gMDPs for stochastic system as well as interconnected transition system for its noise-free counterpart as defined in this section, we state our main problem as follows.

\subsection{Problem formulation}

This work addresses the problem of establishing a formal behavioral equivalence, in the sense of stochastic bisimulation or alternating bisimulation, between a large-scale interconnected discrete-time stochastic control system and a corresponding noise-free interconnected abstraction. The interconnected system of noise-free abstractions is denoted as $\{\hat{S}_{i}(\bar{\Sigma})\}_{i \in I}$, where each subsystem is defined as $\hat{S}_{i}=(\hat{X}_{i}, \hat{X}_{i}^{0}, \hat{U}_{i}^{ext}, \hat{U}_{i}^{int}, \hat{\Delta}_{i}, \hat{Y}_{i}, \hat{H}_{i})$. Existing approaches typically rely on monolithic abstraction techniques, which scale poorly and do not exploit the modularity of the system. We therefore formulate the problem as follows.

\begin{problem}
    Given a large-scale $\Xi$-compatible interconnected discrete-time stochastic control system $\left\langle M_{i}(\Sigma)\right\rangle_{i \in I}^{\Xi, \mathcal{I}}$ (as per Definition \ref{DefComposition}), construct a scalable and computationally efficient deterministic abstraction $\langle \hat{S}_{i}(\bar{\Sigma})\rangle_{i \in I}^{\hat{\Xi}, \mathcal{I}}$ that preserves the stochastic (alternating) bisimulation relation (Definitions \ref{DefinitionBisimulation} and \ref{DefinitionAlternatingBisimulation}) with the concrete interconnected stochastic system, such that desired behavioral guarantees on the deterministic abstractions can be formally transferred to the global interconnected stochastic system.
\end{problem}

To address this, we propose a method of compositional abstraction that preserves the (alternating) bisimulation relation from the subsystem level to the global interconnected level using a bottom-up approach. We leverage the $q^{th}$-moment incremental input-to-state stability property to first establish a stochastic (alternating) bisimulation relation between each stochastic subsystem $M(\Sigma)$ and its noise-free counterpart $S(\bar{\Sigma})$, as detailed in the following section. Since the subsequent analysis focuses on individual subsystems, we omit the index $i$ for brevity.

The $\delta$-ISS-$M_q$ analysis provides the necessary bounds to quantify the deviation between the solution process of the stochastic system and the trajectory of its corresponding noise-free system, which we address in the following section.
	
\section{On the distance between the noisy and noise-free system trajectories}\label{sec5}

In this section, we aim to quantify the distance between the solutions of the DT-SCS $\Sigma$ and those of the corresponding noise-free control system $\bar{\Sigma}$. Although a conservative upper bound can be readily obtained under the $\delta$-ISS-$M_q$ property assumption, we establish a novel bound that yields a tighter estimate than that provided by the direct application of $\delta$-ISS-$M_q$.

As illustrated in Fig.~\ref{fig:distquant}, given any noisy solution process $\x$ of \eqref{dt_sys} and any corresponding noise-free trajectory $\bar{\x}$ of \eqref{nfe}, the $q^{\text{th}}$ moment of the maximum distance between any pair of noisy processes is denoted by $d_1$, and that between any pair of noise-free trajectories is denoted by $d_2$. These can be upper-bounded using \eqref{D-ISS-M-q} and \eqref{D-ISS}, respectively. The objective of this subsection is to derive a tighter (less conservative) upper bound on the distance between the noisy and noise-free solutions, namely on $\mathbb{E}[\|\x(k) - \bar{\x}(k)\|^q], \forall k \in \mathbb{N}_{0}$. This is particularly relevant for the first moment $q=1$, which is not typically handled in the existing literature using $\delta$-ISS Lyapunov functions (as highlighted in \cite{zamani2014symbolic}). We achieve this by constraining the magnitude of the noise affecting the system at a given time step via the following assumption.

\begin{figure}[h]
    \centering
    
    \includegraphics[scale=0.3]{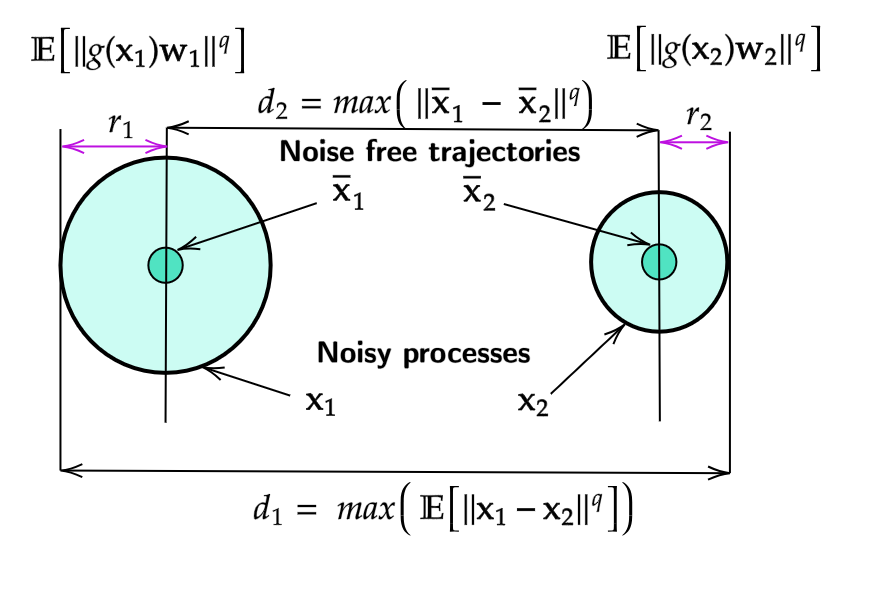}
    \caption{Geometric interpretation of the trajectory bounds. The noisy and noise-free solutions are represented by a horizontal cross-section of a cylindrical tube at time index $k$. The smaller dark green circle denotes the set of noise-free solutions ($\bar{\x}_1, \bar{\x}_2$), while the larger light green circle illustrates the noisy solutions ($\x_1,\x_2$), capturing the spread caused by the diffusion term $g(\x)\ww$. The radii $r_1$ and $r_2$ represent the noise effect, while $d_1$ and $d_2$ indicate the maximum deviations between the noisy solutions and the noise-free solutions, respectively.}
    \label{fig:distquant}
\end{figure}

\begin{assum}\label{Assum8}
    Consider the solution trajectories $\x_{x_1\varsigma_1}$, $\varsigma_1 := (\uu^{ext}_1, \uu^{int}_1)$, and $\x_{x_2\varsigma_2}$, $\varsigma_2 := (\uu^{ext}_2, \uu^{int}_2)$, of the DT-SCS $\Sigma$, starting from any initial states $x_1 \in \mathcal{X}$ and $x_2 \in \mathcal{X}$, respectively, such that $\|x_1 - x_2\| \leqslant \vartheta' < \vartheta$ (where $\vartheta, \vartheta' \in \mathbb{R}^+_0$).
    Assume the system is driven by the same internal inputs $\mathbf{u}_1^{int}=\mathbf{u}_2^{int}$, external inputs $\mathbf{u}_1^{ext}=\mathbf{u}_2^{ext}$, and different independent noise realization $\ww_k := \ww(k)$, $\ww'_k := \ww'(k), k \in \mathbb{N}_{0}$, and satisfies the $\delta$-ISS-$M_q$ stability property with functions $\beta, \psi$ as defined in Definition \ref{DefD-ISS-M-q}. The drift function $f(.)$ is affine in state with coefficient $L_{f}^x <1, L_f^x\in \mathbb{R}^{+}$.
    We assume that there exists a scalar $\lambda \in (0,1)$ such that the $q^{th}$-moment of the convex combination of the diffusion terms corresponding to the two noisy solutions is upper bounded by a function ${\bar{\alpha}}(k): \mathbb{N}_{0} \rightarrow \mathbb{R}_{0}^{+}$, satisfying:
    \begin{align}\label{mainkey}
        &\mathbb{E}\left[\|(1 - \lambda)g_{k}(\x_{x_1\varsigma_1})\ww_{k} + \lambda g_{k}(\x_{x_2\varsigma_2})\ww'_{k}\|^q\right] \leqslant {\bar{\alpha}}(k+1), \nonumber \\
        &({\bar{\alpha}}(k+1) + (L_f^x)^q{\bar{S}_{\alpha}}(k)) < (4^{1-q} - \lambda) \beta(\|x_1-x_2\|^q, k+1) \nonumber \\
        &+ (\lambda - 1) (\beta(\|x_1-x_2\|, k+1))^{q} + (4^{-(q-1)} - \lambda)\psi,
    \end{align}
where
\begin{equation}\label{Sbar}
    \bar{S}_{\alpha}(k) = \frac{1}{L_f^x}\,\frac{1}{(1-L_f^x)^{q-1}}\sum_{j=0}^{k-1}(L_f^x)^{k-j}\,\bar \alpha(j+1).
\end{equation}
\end{assum}
Note that $\bar{S}_{\alpha}(0) = 0$ and the Assumption \ref{Assum8} at coincident initial states reduces to a single noise-moment inequality, satisfied since $\psi \in \mathbb R^{+}$ (due to different independent noise) carries the amplification from $\|A\|<1$; for $q=1$ it is independent of $\|x_1-x_2\|$, making it the only case to verify. The detailed feasibility analysis of the assumption is illustrated for linear or non-linear DT-SCS by the verification procedure in Appendix~\ref{sec: example}. 
Assumption~\ref{Assum8} needs to hold, together with a closed-form
range of admissible $\lambda$, which is less conservative than assuming a bound on the diffusion function $g(x)$ as in Appendix \ref{ISSnl}.
Under this setup, the following results provide a tight upper bound on the distance between noisy and noise-free trajectories.
\begin{figure}[t]
    \centering
    \includegraphics[scale=0.3]{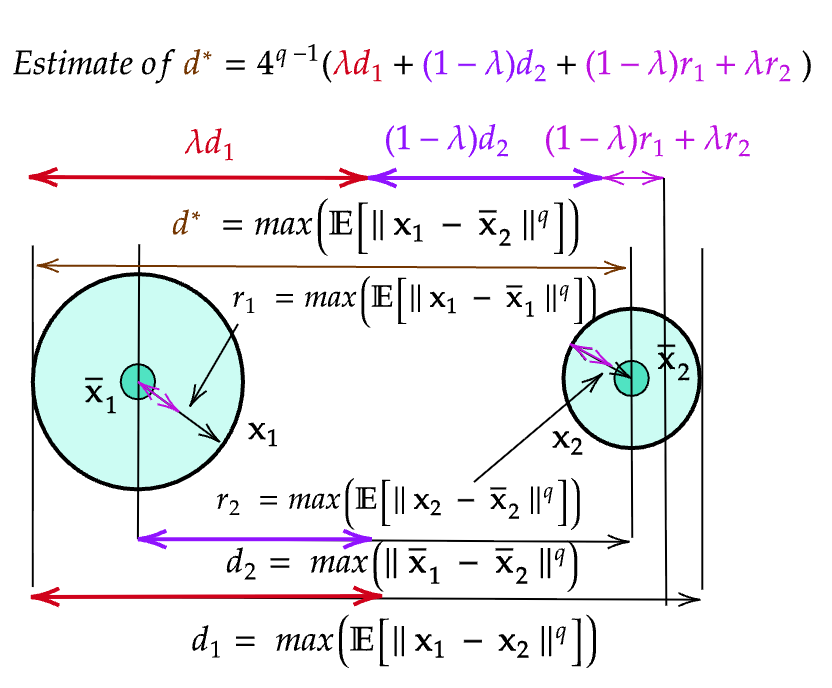}
    \caption{The maximum deviation between the noisy solution process $\x_1$ and noise-free trajectory $\bar{\x}_2$, denoted as $d^{*}$, is estimated via \eqref{Eq6}, utilizing the upper bounds provided by the $\delta$-ISS-$M_{q}$ ($d_1$) and $\delta$-ISS ($d_2$) properties.}
    \label{fig:nfe_bound}
\end{figure}
\begin{lem}\label{lem10}
    Consider a $\delta$-ISS-$M_q$ stable DT-SCS as in \eqref{dt_sys} and its associated $\delta$-ISS stable noise-free system as in \eqref{nfe}. The drift function $f(.)$ is affine in state argument $x$ with coefficient $L_f^x <1, L_f^x\in \mathbb{R}^{+}$. For any $q \in \N$ and initial states $x_1, x_2$ such that $\|x_1 - x_2\| \leqslant \vartheta' < \vartheta$ (where $\vartheta, \vartheta' \in \mathbb{R}^+_0$), with the same input signals $\mathbf{u}_1^{ext} = \mathbf{u}_2^{ext}$ and $\mathbf{u}_1^{int} = \mathbf{u}_2^{int}$, with different independent noise realization $\ww_k := \ww(k)$, $\ww'_k := \ww'(k), k \in \mathbb{N}_0$ and functions $\beta, \psi$ as in Definition \ref{DefD-ISS-M-q}, if there exists a parameter $\lambda \in (0,1)$ and a function ${\bar{\alpha}}(\cdot)$ such that Assumption \ref{Assum8} holds, then for the noisy trajectory $\x_{x_1\varsigma_1}$, $\varsigma_1 := (\uu^{ext}_1, \uu^{int}_1)$, and the noise-free trajectory $\bar \x_{x_2\varsigma_2}$, $\varsigma_2 := (\uu^{ext}_2, \uu^{int}_2)$, we have:
    \begin{align}\label{Eq6}
        \mathbb{E}\left[\|\x_{x_1\varsigma_1}(k)-\bar \x_{x_2\varsigma_2}(k)\|^{q}\right] \leqslant \mathcal{B}_{\vartheta,k}, 
    \end{align}
    where the upper bound $\mathcal{B}_{\vartheta}, \mathcal{B}_{\vartheta,0} = \vartheta',$ is defined by:
   \begin{equation}\label{Def_B_eps}
    \mathcal{B}_\vartheta = \sup_{k\in\mathbb{N}_0}\ \sup_{\substack{x_1,x_2\in X\\ \|x_1-x_2\|\le\vartheta'}}\ \sup_{\substack{(u^{ext},u^{int})\\ \text{admissible}}} \mathcal{B}_{\vartheta,k},
\end{equation}
    where
\begin{align}
    \mathcal{B}_{\vartheta, k} &= (1 - \lambda)\, 4^{q-1}\big(\beta(\|x_{1}-x_{2}\|, k)\big)^q + \lambda\, 4^{q-1}\big(\beta(\|x_{1}-x_{2}\|^q, k) + \psi\big) \nonumber\\
    &\quad + 4^{q-1}\big(\bar{\alpha}(k)+ (L_f^x)^{q}\bar{S}_{\alpha}(k-1)\big), \label{Bk}\\
    \bar{S}_{\alpha}(k-1) &= \frac{1}{L_f^x}\Big(\frac{1}{1-L_f^x}\Big)^{q-1}\sum_{j=0}^{k-2}(L_f^x)^{k-1-j}\,\bar \alpha(j+1), \qquad \forall k \ge 2. \nonumber
\end{align}
\end{lem}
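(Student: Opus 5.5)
The plan is to compare the noisy trajectory $\x_{x_1\varsigma_1}$ with the noise-free one $\bar\x_{x_2\varsigma_2}$ through two auxiliary objects. The first is a second noisy trajectory $\x_{x_2\varsigma_2}$ driven by the independent noise $\ww'$. The second is the noise-free trajectory $\bar\x_{x_1\varsigma_1}$ started from $x_1$. Fix $\lambda\in(0,1)$ as in Assumption~\ref{Assum8}. The error $\x_{x_1\varsigma_1}(k)-\bar\x_{x_2\varsigma_2}(k)$ is written as a convex-type split:
\begin{align*}
\x_{x_1\varsigma_1}(k)-\bar\x_{x_2\varsigma_2}(k)
&=(1-\lambda)\big(\bar\x_{x_1\varsigma_1}(k)-\bar\x_{x_2\varsigma_2}(k)\big)
+\lambda\big(\x_{x_1\varsigma_1}(k)-\x_{x_2\varsigma_2}(k)\big)\\
&\quad+\big[(1-\lambda)(\x_{x_1\varsigma_1}(k)-\bar\x_{x_1\varsigma_1}(k))+\lambda(\x_{x_2\varsigma_2}(k)-\bar\x_{x_2\varsigma_2}(k))\big].
\end{align*}
The bracket is itself split into a "current-step noise" part and a "propagated past noise" part, which gives four terms in total. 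We then apply the elementary inequality $\|\sum_{i=1}^4 a_i\|^q\le 4^{q-1}\sum_i\|a_i\|^q$ (convexity of $s\mapsto s^q$) and take expectations. This explains the common factor $4^{q-1}$ in \eqref{Bk}. For $k=0$ the bound $\mathcal B_{\vartheta,0}=\vartheta'$ is immediate from $\|x_1-x_2\|\le\vartheta'$, so we assume $k\ge1$.

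The first two terms are handled by the stability properties. The deterministic term is bounded by \eqref{D-ISS} with equal inputs, giving $\|\bar\x_{x_1\varsigma_1}(k)-\bar\x_{x_2\varsigma_2}(k)\|^q\le\beta(\|x_1-x_2\|,k)^q$. This produces the coefficient $(1-\lambda)4^{q-1}\beta(\cdot,k)^q$, since the weight $(1-\lambda)^q\le 1-\lambda$ can be bounded that way. The second term is bounded by \eqref{D-ISS-M-q} with equal inputs, giving $\mathbb E\|\x_{x_1\varsigma_1}(k)-\x_{x_2\varsigma_2}(k)\|^q\le\beta(\|x_1-x_2\|^q,k)+\psi$, with coefficient $\lambda 4^{q-1}$ after $\lambda^q\le\lambda$.

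For the bracket, we use that $f$ is affine in $x$ with linear part of norm $L_f^x$. We unroll the recursion
\[
e_j(k+1)=A\,e_j(k)+g_k(\x_{x_j\varsigma_j})\ww^{(j)}_k,
\]
where $\ww^{(1)}=\ww$ and $\ww^{(2)}=\ww'$. The weighted combination then equals the current injection $(1-\lambda)g_{k-1}(\x_{x_1\varsigma_1})\ww_{k-1}+\lambda g_{k-1}(\x_{x_2\varsigma_2})\ww'_{k-1}$ plus $\sum_{j=0}^{k-2}A^{k-1-j}(\cdot)_j$. The $q$-th moment of the current injection is at most $\bar\alpha(k)$ by \eqref{mainkey}. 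For the past sum, we use $\|A^{m}\|\le (L_f^x)^m$ together with a weighted Jensen inequality with weights proportional to $(L_f^x)^{k-1-j}$, normalized by $\sum_m (L_f^x)^m\le 1/(1-L_f^x)$. This yields
\[
\mathbb E\Big\|\sum_{j}A^{k-1-j}(\cdot)_j\Big\|^q\le (1-L_f^x)^{-(q-1)}\sum_{j=0}^{k-2}(L_f^x)^{k-1-j}\bar\alpha(j+1)=(L_f^x)\cdot L_f^x{}^{q-1}\ldots,
\]
which after bookkeeping of powers of $L_f^x$ becomes exactly $(L_f^x)^q\bar S_\alpha(k-1)$ as defined in \eqref{Sbar}. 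Collecting the four contributions gives $\mathcal B_{\vartheta,k}$ as in \eqref{Bk}.

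Finally, we take the suprema over $k$, over the admissible initial pairs with $\|x_1-x_2\|\le\vartheta'$, and over the inputs, which defines $\mathcal B_\vartheta$ in \eqref{Def_B_eps}. The second inequality in \eqref{mainkey} is used only to certify that the resulting bound is no worse than the direct estimate $d_1+d_2$ obtained from Definition~\ref{DefD-ISS-M-q} and \eqref{D-ISS}. It is not needed for the validity of \eqref{Eq6} itself.

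The main obstacle is the third step: making the split of the bracket into current-step and propagated noise legitimate, and matching the constants $(L_f^x)^q$ and $(1-L_f^x)^{1-q}$ of \eqref{Sbar}. This requires the affine structure of $f$, so that the differences $\x-\bar\x$ evolve linearly. It also requires care that the convex weights $(1-\lambda),\lambda$ pass through the sum and that \eqref{mainkey} can be applied at every past index $j$. I would first try the Jensen-with-geometric-weights argument, and fall back on Minkowski's inequality if the constants do not match.
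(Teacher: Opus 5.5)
Your proposal follows the paper's proof essentially step for step. It uses the same coupling of four trajectories ($\x_1,\bar\x_2$ and the auxiliary $\bar\x_1,\x_2$ with independent noise $\ww'$). It makes the same $\lambda$-weighted split into the $d_2$ term, the $d_1$ term, the propagated past noise $(1-\lambda)(f_{k-1}(\x_1)-f_{k-1}(\bar\x_1))+\lambda(f_{k-1}(\x_2)-f_{k-1}(\bar\x_2))$, and the current noise combination. It applies the same $4^{q-1}$ power-mean inequality with $(1-\lambda)^q\le 1-\lambda$ and $\lambda^q\le\lambda$, and it invokes \eqref{D-ISS}, \eqref{D-ISS-M-q} and the first inequality of \eqref{mainkey} in the same places. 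You are also right that the second inequality in \eqref{mainkey} is only used afterwards, in Remark~\ref{rem11}.

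The one thing to fix is the normalization in your weighted Jensen step. If you bound the total weight by $1/(1-L_f^x)$, your displayed estimate equals $L_f^x\,\bar S_\alpha(k-1)$. For $q\ge2$ this is strictly larger than the required $(L_f^x)^q\bar S_\alpha(k-1)$, so the ``bookkeeping'' claim fails as written. The exponents in the past sum start at $1$, so use instead $\sum_{j=0}^{k-2}(L_f^x)^{k-1-j}=\sum_{m=1}^{k-1}(L_f^x)^m\le L_f^x/(1-L_f^x)$. Hölder then gives $\big(\tfrac{L_f^x}{1-L_f^x}\big)^{q-1}\sum_{j=0}^{k-2}(L_f^x)^{k-1-j}\bar\alpha(j+1)=(L_f^x)^q\bar S_\alpha(k-1)$. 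This is exactly the intermediate term in the paper's proof, so no fallback to Minkowski is needed.

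A minor aside on $k=0$: there you only get $\|x_1-x_2\|^q\le(\vartheta')^q$, which is $\le\vartheta'$ only if $q=1$ or $\vartheta'\le1$. The paper's convention $\mathcal B_{\vartheta,0}=\vartheta'$ has the same issue.
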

\begin{pf}
    With the bound already satisfied with the initial deterministic points, the $q^{th}$-moment of the distance between the noisy trajectory $\x_1 \equiv \x_{x_1\varsigma_1}$ and the noise-free trajectory $\bar{\x}_2 \equiv \bar \x_{x_2\varsigma_2}$ can be expanded as follows.
    To this end, we construct a coupling of four trajectories on a common probability space, i.e., the noisy and noise-free trajectories $\x_1$ and $\bar \x_2$, starting at $x_1$ and $x_2$, and their auxiliary trajectories (analysis device) $\bar{\x}_1$ following \eqref{nfe}, and ${\x}_2$ following \eqref{dt_sys}, under identical inputs, starting at $x_1$ and $x_2$. 
    Consider the distance at $k^{th}$ step, where $k \in \N$:
    \begin{align*}
        \mathbb{E}&\left[\|\x_{1}(k)-\bar{\x}_{2}(k)\|^{q}\right] \nonumber \\
        &= \mathbb{E}\Big[\|f_{k-1}(\x_1, ., .) - f_{k-1}(\bar \x_2, ., .) + g_{k-1}(\x_1)\ww_{k-1}\|^{q}\Big] \nonumber\\ 
        &= \mathbb{E}\Bigg[\Big\| (1 - \lambda)\Big(f_{k-1}(\x_1,.,.) - f_{k-1}( \bar \x_2,.,.)\Big) \nonumber \\
        &+\lambda\Big(f_{k-1}(\x_1,..) - f_{k-1}( \bar \x_2,.,.) \nonumber \\
        &+ g_{k-1}(\x_{1})\ww_{k-1} - g_{k-1}( \x_{2})\ww'_{k-1}\Big) \nonumber \\
        &+ \Big((1 - \lambda) g_{k-1}(\x_{1})\ww_{k-1} + \lambda g_{k-1}( \x_{2})\ww'_{k-1}\Big)\Big\|^{q}\Bigg] \nonumber \\
        &= \mathbb{E}\Bigg[\Big\|\underbrace{(1 - \lambda)(\bar{\x}_{1}(k) - \bar{\x}_{2}(k))}_{d_2} \nonumber \\
        &\quad + (1 - \lambda)(f_{k-1}(\x_1,.,.) - f_{k-1}(\bar \x_1,.,.)) \nonumber \\
        &\quad + \underbrace{\lambda(\x_{1}(k) - \x_{2}(k))}_{d_1} + \lambda(f_{k-1}(\x_2,.,.) - f_{k-1}(\bar \x_2,.,.)) \nonumber \\
        &\quad + \underbrace{(1 - \lambda)g_{k-1}(\x_{1})\ww_{k-1} + \lambda g_{k-1}(\x_{2})\ww'_{k-1}}_{\text{Noise combination}} \Big\|^{q}\Bigg].
    \end{align*}
    Using the linearity of the expectation operator and the generalized Minkowski inequality, we simplify as
    \begin{align*}
        \mathbb{E}&[\|\x_{1}(k) - \bar{\x}_{2}(k)\|^{q}] \\
        &\leqslant (1 - \lambda)4^{q-1}\|\bar{\x}_{1}(k) - \bar{\x}_{2}(k)\|^{q} \\
        &+ \lambda 4^{q-1} \mathbb{E}[\|\x_{1}(k) - \x_{2}(k)\|^{q}] \\
        &+ 4^{q-1} \big(\Big(\tfrac{L_f^x}{1-L_f^x}\Big)^{q-1}\sum_{j=0}^{k-2}(L_f^x)^{(k-1-j)}\bar \alpha(j+1) \\
        &+ \mathbb{E}[\|(1 - \lambda)g_{k-1}(\x_{1})\ww_{k-1} 
        + \lambda g_{k-1}(\x_{2})\ww'_{k-1}\|^{q}]\big).
    \end{align*}
    Applying the $\delta$-ISS and $\delta$-ISS-$M_{q}$ properties to the first two terms, and using Assumption \ref{Assum8} to bound the third term with ${\bar{\alpha}}(k)$, we obtain
        
\begin{align*}
    \mathbb{E}[\|\x_{1}(k)-\bar{\x}_{2}(k)\|^q] &\leqslant (1 - \lambda)\, 4^{q-1}\big(\beta(\|x_{1}-x_{2}\|,k)\big)^q + \lambda\, 4^{q-1}\big(\beta(\|x_{1}-x_{2}\|^q,k) + \psi\big)\\
    &\quad + 4^{q-1}\big(\bar{\alpha}(k)+ (L_f^x)^q\bar{S}_{\alpha}(k-1)\big).
\end{align*}
    This completes the proof. \hfill $\qed$
\end{pf}

\begin{rem}\label{otherbound}
    If Assumption \ref{Assum8} does not hold, the distance between the noisy solution process $\x_{x_1\varsigma_1}$ and the noise-free trajectory $\bar \x_{x_2\varsigma_2}$ can be upper-bounded using the standard $\delta$-ISS-$M_q$ property as:
    $ \mathcal{B}_{\vartheta, k}^{(2)} = \beta(\|x_{1}-x_{2}\|^q, k) + \psi. $
\end{rem}
\begin{rem}\label{rem11} 
    Given the upper bound $\mathcal{B}_{\vartheta, k}^{(1)} = \mathcal{B}_{\vartheta, k}$ as  in \eqref{Def_B_eps} on the reachable set between any noisy and noise-free trajectories, and the $\delta$-ISS-$M_{q}$ upper bound $\mathcal{B}_{\vartheta, k}^{(2)}$, the following relation holds:
    $ \mathcal{B}_{\vartheta, k}^{(1)} - \mathcal{B}_{\vartheta, k}^{(2)} < 0, $ explicitly under Assumption \ref{Assum8}.
\end{rem}
Rewriting and subtracting the two upper bounds, we obtain:
    \begin{align}\label{tight}
    \mathcal{B}_{\vartheta, k}^{(1)} - \mathcal{B}_{\vartheta, k}^{(2)}
    &= (4^{q-1} - \bar{\lambda})\big(\beta(\|x_{1}-x_{2}\|,k)\big)^q + \bar{\lambda}\big(\beta(\|x_{1}-x_{2}\|^q,k) + \psi\big) \nonumber\\
    &\quad + 4^{q-1}\big({\bar{\alpha}}(k) + (L_f^x)^q{\bar{S}_{\alpha}}(k-1)\big) - \big(\beta(\|x_{1} - x_{2}\|^q,k) + \psi\big) \nonumber\\
    &= (\bar{\lambda} -1)\big(\beta(\|x_{1}-x_{2}\|^q,k) + \psi\big) + (4^{q - 1} - \bar{\lambda})\big(\beta(\|x_{1}-x_{2}\|,k)\big)^q \nonumber\\
    &\quad + 4^{q-1}\big({\bar{\alpha}} (k)+(L_f^x)^q{\bar{S}_{\alpha}}(k-1)\big),
\end{align}
    where $\bar{\lambda} = \lambda 4^{q-1}$. Since \eqref{mainkey} holds for $q \in \N $, we have 
    \begin{align*}
     & {({\bar{\alpha}}(k) + (L_f^x)^q{\bar{S}_{\alpha}}(k-1))} < (4^{-(q-1)} - \lambda) \beta(\|x_1 - x_2\|^q, k)\nonumber \\
     & + (\lambda - 1) (\beta(\|x_1 - x_2\|, k))^{q} + (4^{-(q-1)} - \lambda)\psi.
    \end{align*}
    Multiplying the above inequality with $4^{q-1}$, we obtain the same terms as in RHS of \eqref{tight}, which justify that $\mathcal{B}_{\vartheta, k}^{(1)} - \mathcal{B}_{\vartheta, k}^{(2)} < 0$. Therefore, $\mathcal{B}_{\vartheta, k}^{(1)}$ is a tighter upper bound as compared to the $\delta$-ISS-$M_{q}$ upper bound $\mathcal{B}_{\vartheta, k}^{(2)}$.
\begin{rem}
    In \eqref{tight}, since $0 < {\lambda} < 4^{1-q}$, the term $(\bar{\lambda} - 1)(\beta(\|x_{1}-x_{2}\|^q,k) + \psi)$ dominates in the negative direction due to $\psi$, yielding $\mathcal{B}_{\vartheta, k}^{(1)} - \mathcal{B}_{\vartheta, k}^{(2)} < 0$, especially for small differences in initial conditions where $x_1 \approx x_2$.
\end{rem}

{We define a pseudometric on the output space $Y$ as $\mathbf{d}_{Y}(H(x),H(\bar{x})) = \|H(x) - H(\bar{x})\|.$} Using this metric, we establish the stochastic bisimulation relations below.

Using this metric, we establish the stochastic bisimulation relations below.

\begin{thm}\label{THM-NoiseFree}
    Consider a DT-SCS as \hl{gMDP} $\hl{M(\Sigma)} = \left(X,\, X^{0},\, {U}^{ext},\, {U}^{int},\,\hl{\mathbb{T}},\, Y,\, H\right)$ and its noise-free counterpart $S(\bar{\Sigma}) = (X, X^{0}, U^{ext}, U^{int}, \Delta, Y, H)$ as transition system, represented as the \hl{gMDP $M(S(\bar\Sigma))$ whose stochastic kernel ${\mathbb{T}}_{\Delta}(\cdot \vert \bar x, u^{ext}, u^{int})$ is the Dirac measure on $\Delta(\bar x, u^{ext}, u^{int}), \forall \bar{x} \in X$.} For a scalar $q \in \N$, assume:
    \begin{itemize}
        \item $M(\Sigma)$ is $\delta$-ISS-$M_q$;
        \item \hl{For any gMDP solution process $\x_{x_0\varsigma_1}$, and transition system solution $\bar \x_{\bar x_0\varsigma_1}$, such that $\|x_0 - \bar x_0\| \leqslant \vartheta' < \vartheta$}, the upper bound $\mathcal{B}_{\vartheta}$ (from Lemma \ref{lem10}) satisfies $\mathcal{B}_{\vartheta} \le \vartheta^q$ for all $k \in \mathbb{N}_{0}$.
    \end{itemize}
    \hl{Then, the relation $\mathcal{R}$ between random variable $\x$ and point $\bar x$ defined as} $\mathcal{R} = \{(\x, \bar{x}) \in L^q(\Omega;X) \times X \mid 
        \hl{\mathbb{E}[\mathbf{d}_{Y}(H(\x),H(\bar{x}))^q] ^{\frac1q}} \le \vartheta \},$ is an $(\vartheta, 0, 0)$-approximate \hl{stochastic} bisimulation relation between $M(\Sigma)$ and $M(S(\bar{\Sigma}))$, i.e., $M(\Sigma) \approx^{\vartheta,0}_{0} M(S(\bar\Sigma))$\hl{, as in Special Case \ref{SS: scase/bisim1} of stochastic bisimulation}.
\end{thm}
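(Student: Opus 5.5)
We verify the three conditions of Special Case~\ref{SS: scase/bisim1} for the relation $\mathcal{R}$, once in each direction, with $\varepsilon=\vartheta$, $\mu=0$ and $\varrho=0$. Throughout we take the pseudometric $\mathbf{d}_Y(H(\x),H(\bar x))=\|\x-\bar x\|$, since $H=\mathrm{id}_X$. Hence membership $(\x,\bar x)\in\mathcal{R}$ means exactly $\mathbb{E}[\|\x-\bar x\|^q]\le\vartheta^q$.

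\emph{Conditions (i) and (ii).} For (i), given $x^0\in X^0$, choose $\bar x^0=x^0\in X^0$. The two systems share the same initial set, and the deterministic pair has distance $0\le\vartheta$, so $(x^0,x^0)\in\mathcal{R}$. The converse direction is identical. Condition (ii) is exactly the defining inequality of $\mathcal{R}$, so it holds trivially for every related pair. This is why the output precision is $\vartheta$.

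\emph{Condition (iii), $M(\Sigma)\preccurlyeq M(S(\bar\Sigma))$.} Take $(\x,\bar x)\in\mathcal{R}$ and $(u^{ext},u^{int})\in U_{M}(x)$. We answer with the \emph{same} input $(u^{ext},u^{int})$, which gives $\mu=0$. For admissibility in $U_{S}(\bar x)$ we use that the noise set contains $\mathbf{0}_p$: a.s.\ invariance of $X$ under $f(x,\cdot)+g(x)\ww$ should yield $f(\bar x,u)\in X$, with the state-mismatch handled through the same admissibility convention. The successors are $\x'=f(\x,u)+g(\x)\ww$ and $\bar x'=f(\bar x,u)$, and we must show $\mathbb{E}[\|\x'-\bar x'\|^q]\le\vartheta^q$. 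To do so we view the pair as the one-step-later states of a noisy solution and a noise-free solution driven by identical input signals, started from points within $\vartheta'$ of each other. Lemma~\ref{lem10} then gives $\mathbb{E}[\|\x(k)-\bar\x(k)\|^q]\le\mathcal{B}_{\vartheta,k}\le\mathcal{B}_\vartheta$. The second hypothesis, $\mathcal{B}_\vartheta\le\vartheta^q$, closes the bound, so the successor pair lies in $\mathcal{R}$. Because membership is a deterministic moment condition, the lifted relation holds with the product coupling $\mathcal{W}_{\mathbb{T}}=\mathbb{T}(\cdot\mid x,u)\otimes\delta_{\bar x'}$. Its marginals are the required ones, and $\mathcal{W}_{\mathbb{T}}(\mathcal{R})=1$, so $\varrho=0$. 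Since $\bar x'$ is a Dirac point, this product is the only coupling; measurability in $(u,x,\bar x)$ follows from measurability of $f$ and $g$.

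\emph{Condition (iii), converse direction.} Given $(\x,\bar x)\in\mathcal{R}$ and an admissible input of $S(\bar\Sigma)$ at $\bar x$, we again pick the same input for $M(\Sigma)$. The same lemma-based estimate shows that the successor pair stays in $\mathcal{R}$.

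\emph{Expected obstacle.} The hard step is applying Lemma~\ref{lem10} to an arbitrary related pair. The lemma is stated for deterministic initial states with $\|x_1-x_2\|\le\vartheta'$. Membership in $\mathcal{R}$, however, only gives a $q$-th moment bound by $\vartheta$ on a random state. My plan is to restrict attention to pairs reachable along trajectories started from related initial points, that is, to replace $\mathcal{R}$ by its reachable part. Each such pair $(\x(k),\bar\x(k))$ then comes from a noisy and a noise-free solution whose initial gap is at most $\vartheta'$. The uniform supremum in \eqref{Def_B_eps}, over $k$, over initial gaps and over admissible inputs, bounds all future moments at once. This is why the hypothesis is stated in terms of $\mathcal{B}_\vartheta$ rather than a one-step bound. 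If this restriction is not acceptable, the fallback is to condition on $\mathcal{F}_k$, apply the lemma pathwise with $k=1$, and use the tower property, which requires controlling the $\beta$ terms by a moment bound.
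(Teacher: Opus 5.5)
Your proposal follows the paper's proof essentially verbatim. The paper also treats $\mathcal{R}$ as consisting of pairs $(\x_{x_0\varsigma_1}(k),\bar{\x}_{\bar x_0\varsigma_1}(k))$ generated by runs started within $\vartheta'$, uses the same input on both sides so that $\mu=0$, relies on the uniformity of $\mathcal{B}_\vartheta\le\vartheta^q$ over $k$ and admissible inputs (Lemma~\ref{lem10}) so that no restart from an arbitrary related pair is needed, and uses the product coupling with the Dirac kernel $\mathbb{T}_\Delta$ to get $\varrho=0$ in both directions. Your fallback via conditioning on $\mathcal{F}_k$ is therefore unnecessary, and your remark that the argument only certifies the reachable part of $\mathcal{R}$, not every pair at moment distance at most $\vartheta$, is a caveat the paper glosses over by simply asserting that $\mathcal{R}$ consists of run-generated pairs.
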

\begin{pf}
\hl{The starting pair is deterministic and the inputs are identical, which are exactly the hypotheses of Lemma~\ref{lem10}. Hence, $\mathbb E[\|\x_{x_0\varsigma_1}(k)-\bar \x_{\bar x_0\varsigma_1}(k)\|^q]\le\vartheta^q, \forall k\in\mathbb N_0$. Since $H=\mathrm{id}_X$ and $\mathbf d_Y(y,y')=\|y-y'\|$, we conclude $\big(\mathbb E[\mathbf d_Y(H(\x_{x_0\varsigma_1}(k)),H(\bar \x_{\bar x_0\varsigma_1}(k)))^q]\big)^{1/q}\le\vartheta$, so every pair $(\x_{x_0\varsigma_1}(k),\bar \x_{\bar x_0\varsigma_1}(k))$ generated by noisy and noise free solutions belongs to $\mathcal R$. 
Condition (i) holds with the hypothesis on initial conditions. Condition (ii) holds by the run bound, since $\mathcal R$ consists of
pairs generated by runs, and the relational metric on a (random variable, point) pair is the $q^{th}$-moment distance.
For (iii), let $(\x_{x_0\varsigma_1}(k),\bar \x_{\bar x_0\varsigma_1}(k))\in\mathcal R$ and choose the same input on both sides ($\mu=0$,
admissible by $ U_{M(\Sigma)}= U_{S(\bar\Sigma)}$). The successors $(\x_{x_0\varsigma_1}(k+1),\bar \x_{\bar x_0\varsigma_1}(k+1))$ belong to the
same run, hence to $\mathcal R$. Let $\mathbb W_{\mathbb T}$ be their coupling product kernel given
$\mathcal F_k$; its marginals are $\mathbb T(\cdot|x,u^{ext},u^{int})$ and $\mathbb T_\Delta(\cdot|\bar x,u^{ext},u^{int})$, and
$\mathbb W_{\mathbb T}(\mathcal R| x,\bar x,u^{ext},u^{int})=1$, so the coupling failure probability $\hl{\varrho} = 0$ and the symmetric relation holds in the converse direction.}
\hfill $\qed$
\end{pf}
\hl{Note that since the bound of Lemma \ref{lem10} is uniform in $k$ and over admissible inputs,
no restart from a pair in $\mathcal R$ is required}.
Similarly, the stochastic alternating bisimulation relation can be established, as discussed next.

\begin{thm}\label{THM-NoiseFree-Alt}
Consider a DT-SCS as \hl{gMDP} $\hl{M(\Sigma)} = \left(X,\, X^{0},\, {U}^{ext},\, {U}^{int},\,\hl{\mathbb{T}},\, Y,\, H\right)$ and its noise-free counterpart $S(\bar{\Sigma}) = (X, X^{0}, U^{ext}, U^{int}, \Delta, Y, H)$ as transition system, represented as the \hl{gMDP $M(S(\bar\Sigma))$} whose \hl{stochastic kernel $\mathbb{T}_{\Delta}(\cdot \vert \bar x, u^{ext}, u^{int})$ is the Dirac measure on $\Delta(\bar x, u^{ext}, u^{int}), \forall \bar{x} \in X$.} For a scalar $q \in \N$, assume:
    \begin{itemize}
        \item $M(\Sigma)$ is $\delta$-ISS-$M_q$;
        \item \hl{For any gMDP solution process run $\x_{x_0\varsigma_1}$, and transition system solution $\bar \x_{\bar x_0\varsigma_1}$, such that $\|x_0 - \bar x_0\| \leqslant \vartheta' < \vartheta$,} the upper bound (from Lemma \ref{lem10}) satisfies $\mathcal{B}_{\vartheta} \le \vartheta^q$ for all $k \in \mathbb{N}_{0}$.
    \end{itemize}
    \hl{Then, the relation $\mathcal{R}$ as defined in Theorem \ref{THM-NoiseFree}} is an $(\vartheta, 0, 0)$-approximate alternating \hl{stochastic} bisimulation relation between $M(\Sigma)$ and $M(S(\bar{\Sigma}))$, i.e., $M(\Sigma) \widehat \approx^{\vartheta,0}_{0} M(S(\bar\Sigma))$\hl{, as in Special Case \ref{SS: scase/abisim1} of stochastic alternating bisimulation}.	
\end{thm}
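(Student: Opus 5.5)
The proof mirrors that of Theorem~\ref{THM-NoiseFree}. The only difference is the quantifier order in condition (iii) of Special Case~\ref{SS: scase/abisim1}: the input is now chosen first on the side being simulated, and a matching input must then be found on the other side. Because $\mu=0$ and the admissible input sets coincide, $U_{M(\Sigma)}(x)=U_{S(\bar\Sigma)}(x)$, the same input can always be copied across. So the plan is to reuse the run-based relation $\mathcal R$ and the uniform moment bound of Lemma~\ref{lem10}, and to re-verify (i)--(iii) in both directions.

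First, fix initial states $x_0,\bar x_0$ with $\|x_0-\bar x_0\|\le\vartheta'<\vartheta$. This gives condition (i), since $\mathcal B_{\vartheta,0}=\vartheta'\le\vartheta^q$. Next, take an arbitrary input sequence $\varsigma$ admissible along the run. Lemma~\ref{lem10} then gives
\[
\mathbb E\big[\|\x_{x_0\varsigma}(k)-\bar\x_{\bar x_0\varsigma}(k)\|^q\big]\le\mathcal B_\vartheta\le\vartheta^q \quad \text{for all } k.
\]
Since $H=\mathrm{id}_X$, every pair generated along such a run lies in $\mathcal R$, which gives condition (ii). This bound holds for every admissible input sequence, as the supremum in \eqref{Def_B_eps} ranges over all admissible inputs. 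That uniformity is what handles the alternating quantifiers: the input may be picked adversarially at each step by the simulating side without affecting the bound.

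For condition (iii) in the direction $M(\Sigma)\,\widehat{\preccurlyeq}\,M(S(\bar\Sigma))$, take a pair in $\mathcal R$ and any $(u^{ext},u^{int})\in U_{M(S(\bar\Sigma))}(\bar x)$. Choose the identical input for $M(\Sigma)$, so the input distance is $0\le\mu$. Then couple the successors with the product kernel $\mathbb W_{\mathbb T}=\mathbb T(\cdot|x,u)\otimes\mathbb T_\Delta(\cdot|\bar x,u)$, conditioned on $\mathcal F_k$. Its marginals are correct, and the successor pair extends the same run under the extended input sequence. By the uniform bound it lies in $\mathcal R$ almost surely, so $\varrho=0$. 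The converse direction is identical with the roles exchanged: given $u\in U_{M(\Sigma)}(x)$, pick the same $u$ for $M(S(\bar\Sigma))$. Combining both directions gives $M(\Sigma)\,\widehat\approx^{\vartheta,0}_0\,M(S(\bar\Sigma))$.

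The main obstacle is admissibility. The chosen input must be admissible for the other system, which requires $U_{M(\Sigma)}(x)=U_{S(\bar\Sigma)}(\bar x)$ at related, not necessarily equal, states $x$ and $\bar x$. I would handle this as in Theorem~\ref{THM-NoiseFree}, by appealing to the identification of admissible sets used there, or by restricting to inputs admissible for both systems, over which the supremum in \eqref{Def_B_eps} is taken. A second subtlety is that membership of the successors in $\mathcal R$ must hold for the newly chosen input. This is why the bound must be uniform over inputs rather than fixed along a single predetermined $\varsigma$; the supremum in $\mathcal B_\vartheta$ supplies exactly that uniformity.
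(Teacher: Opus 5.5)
Your proposal is correct and follows essentially the same route as the paper: conditions (i) and (ii) are inherited from Theorem~\ref{THM-NoiseFree}, and for (iii) the identical input is copied to the other side ($\mu=0$), after which the same product coupling gives $\mathbb{W}_{\mathbb{T}}(\mathcal{R}\mid u,x,\bar x)=1$, hence $\varrho=0$, in both directions. The paper settles the admissibility obstacle you flag by your first option: bounded noise support on the enlarged state space gives $\mathbb{T}(X\mid x,u^{ext},u^{int})=1$ for all states and inputs, so $U_{M(\Sigma)}(x)=U_{M(S(\bar\Sigma))}(\bar x)$; your fallback of restricting to commonly admissible inputs would not suffice on its own, since condition (iii) quantifies over every admissible input of the side that moves first.
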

\begin{pf}
Conditions (i) and (ii) are identical to those in Definition \ref{DefinitionBisimulation} and were already verified in the proof of Theorem \ref{THM-NoiseFree}.
    For condition (iii) of Definition \ref{DefinitionAlternatingBisimulation}, note that the only difference from the stochastic bisimulation condition (iii) is the order of quantification over inputs. \hl{Since $U^{ext}_{M(\Sigma)}(x) = U^{ext}_{M(S(\bar\Sigma))}(\bar x)$ and $U^{int}_{M(\Sigma)}(x) = U^{int}_{M(S(\bar\Sigma))}(\bar x)$ under the condition on bounded noise support that the kernel of gMDP and transition system holds $\mathbb T(X|x,u^{ext},u^{int}) = 1$ for all states and inputs,} the same input can always be chosen on the other side, giving $\mu = 0$ regardless of which system's input is quantified first. Thus, for any $(u_2^{ext},u_2^{int}) \in U_{M(S(\bar\Sigma))}(\bar x)$, choosing $(u_1^{ext},u_1^{int}) = (u_2^{ext},u_2^{int}) \in U_{M(\Sigma)}(x)$ and repeating verbatim \hl{the coupling construction $\mathbb{W}_{\mathbb{T}}$ from the proof of Theorem \ref{THM-NoiseFree} (in both directions) yields
    $\mathbb{W}_{\mathbb{T}}(\mathcal{R} \vert u_1^{ext},u_1^{int},x,\bar x) = 1 \ge 1-\hl{\varrho}, \quad \hl{\varrho} = 0,$
    establishing $\mathbb{T}(\cdot\vert x,u_1^{ext},u_1^{int}) \, \bar{\mathcal{R}}_0 \, \mathbb{T}_{\Delta}(\cdot\vert \bar x,u_2^{ext},u_2^{int})$ in either direction.}
    Hence $M(\Sigma) \widehat\preccurlyeq^{\vartheta,0}_{0} M(S(\bar\Sigma))$ and $M(S(\bar\Sigma)) \widehat\preccurlyeq^{\vartheta,0}_{0} M(\Sigma)$, so $M(\Sigma) \widehat\approx^{\vartheta,0}_{0} M(S(\bar\Sigma))$.
    \hfill $\qed$
\end{pf}
\hl{
Having established, with Theorems \ref{THM-NoiseFree} and \ref{THM-NoiseFree-Alt}, that the stochastic gMDP $M(\Sigma)$ can be related to its noise-free counterpart $M(S(\bar\Sigma))$ through a $(\vartheta,0,0)$-approximate stochastic (alternating) bisimulation relation, we henceforth work directly with the transition system $S(\bar\Sigma)$ in place of $M(\Sigma)$, at the cost of the quantified error $\vartheta$. 
This translation allows us to shift, in the remainder of the paper, from the probabilistic setting to the purely relational one. We study interconnected transition systems and their $\Xi-$compositions, and establish compositionality results using the bisimulation and alternating bisimulation relations as in Special Cases \ref{SS: scase/bisim2} and \ref{SS: scase/abisim2}. In this setting, condition (iii) no longer requires coupling of stochastic kernels via $\bar{\mathcal{R}}_{\hl{\varrho}}$, but reduces to direct matching of successors under $\Delta_1,\Delta_2$, since transition systems do not carry probability measures.  
Moving forward, unless stated, we drop the coupling measure term $\hl{\varrho}$, and use $(\varepsilon, \mu)$-approximate (simulation) bisimulation relation as in Special Case \ref{SS: scase/bisim2}, denoted by ($S_1 \preccurlyeq^{\varepsilon',\mu'} S_2$) $S_1 \approx^{\varepsilon',\mu'} S_2$ as well as use $(\varepsilon, \mu)$-approximate (alternating simulation) alternating bisimulation as in Special Case \ref{SS: scase/abisim2}, denoted by ($S_1 \widehat \preccurlyeq^{\varepsilon',\mu'} S_2$) $S_1 \widehat \approx^{\varepsilon',\mu'} S_2$.
}

\section{Compositional bisimilar abstractions for interconnected transition systems}

In this section, we delve into the interconnections among transition systems, using the concept of $\Xi$-compatible composition as defined in Definition \ref{DefComposition}. The subsequent subsection explores the relationship between $\Xi$-compatible composition and a more conservative form defined by $\bar{\Xi}\geqslant \Xi$. We provide conditions that ensure the preservation of approximate (alternating) bisimulation relations between the $\Xi$-compatible composition of noise-free systems $S(\bar{\Sigma})$ and their $\bar{\Xi}$-compatible counterparts. Building on these findings, we establish conditions for preserving approximate (alternating) bisimulation relations, while going from noise-free subsystems $S_i(\bar{\Sigma})$ to the global interconnected system $\left\langle {S}_{i}(\bar{\Sigma})\right\rangle_{i \in I}^{\Xi, \mathcal{I}}$. Leveraging transitivity relations and Theorems \ref{THM-NoiseFree} and \ref{THM-NoiseFree-Alt}, we extend these results to encompass the global interconnected concrete stochastic system $\left\langle {M}_{i}({\Sigma})\right\rangle_{i \in I}^{\Xi, \mathcal{I}}$. Figure \ref{fig:compositional} illustrates these four key steps.

\begin{figure*}
    \centering
    \includegraphics[width=1\linewidth]{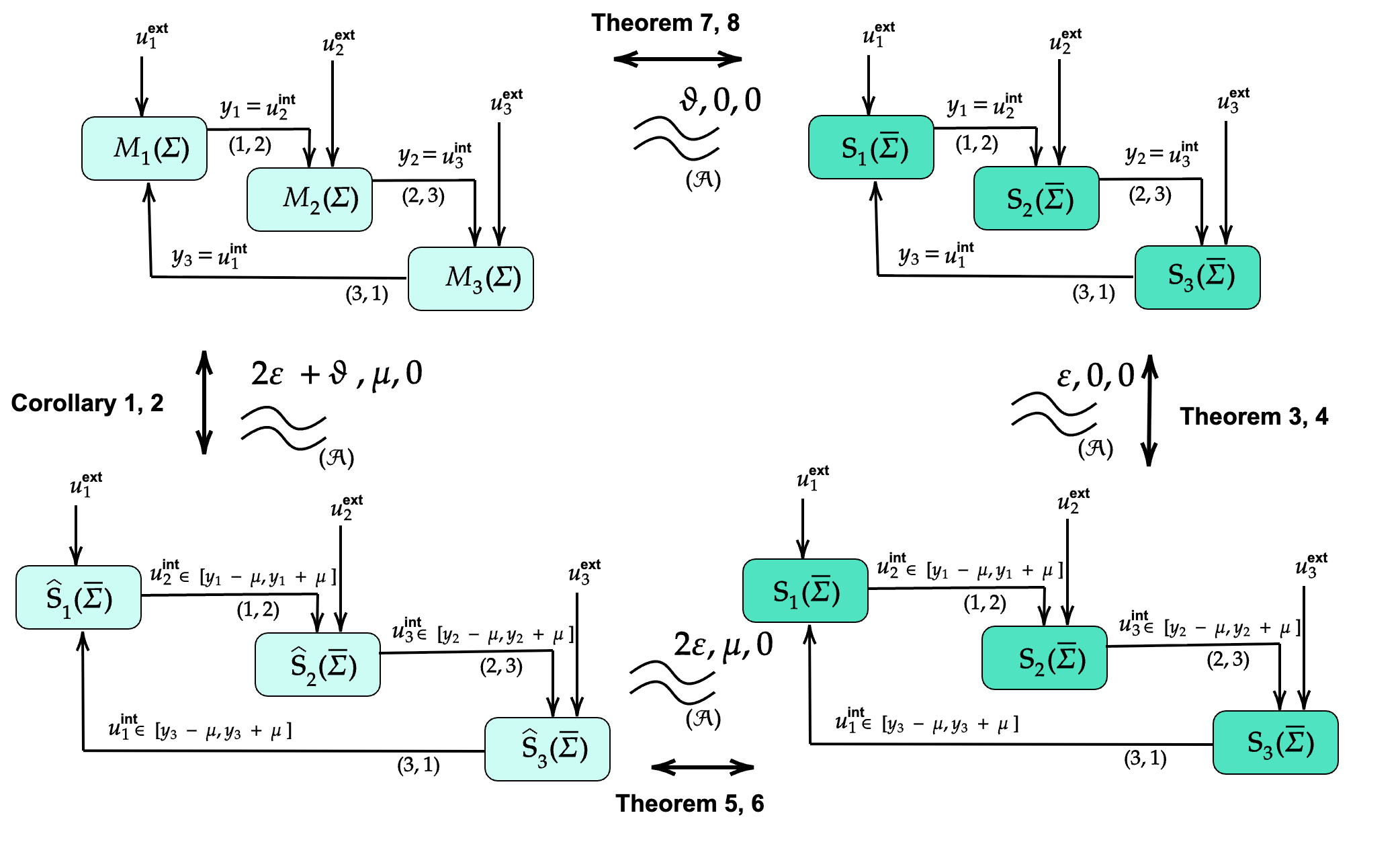}
    \caption{Analyzing the Construction of Bisimilar Abstractions for Interconnected Stochastic Systems. Top Left: Interconnected stochastic systems demonstrating exact composition. Top Right: Interconnected noise-free systems displaying exact composition. Bottom Right: Noise-free interconnected systems demonstrating approximate composition. Bottom Left: Abstraction achieved through approximate composition.}
    \label{fig:compositional}
\end{figure*}

\subsection{Assessing Conservatism in Interconnected Systems with $\delta$-ISS Property and Varying Approximate Composition Parameters}

The behavior of interconnected systems tolerating some composition parameter error becomes more conservative (and less deterministic) when the approximate composition parameters become large. The following result shows that under the $\delta$-ISS property, we can measure the conservatism of the approximate composition when increasing the approximate composition parameter.

\begin{thm}\label{THM6}
    Consider an interconnected noise-free transition system \hl{$\left\langle {S}_{i}(\bar{\Sigma})\right\rangle_{i \in I}^{\Xi, \mathcal{I}}$ and $\Xi=(\mu_{1}, \ldots, \mu_{N}) \in(\mathbb{R}_{0}^{+})^{N}$.} If each subsystem $S_i(\bar{\Sigma})$ is $\delta$-ISS with $\beta_i, \gamma_i^{int}$ as defined in \eqref{D-ISS}, and \hl{ $\left\langle {S}_{i}(\bar{\Sigma})\right\rangle_{i \in I}^{\Xi, \mathcal{I}}$ is compatible for $\Xi$-approximate composition with respect to $\mathcal{I}$, then it is also compatible for $\bar{\Xi}$-approximate composition with respect to $\mathcal{I}$, for any $\bar{\Xi}=\left(\bar{\mu}_{1}, \ldots, \bar{\mu}_{N}\right) \in(\mathbb{R}_{0}^{+})^{N}$ such that $\bar{\Xi} \geqslant \Xi$ (i.e., $\bar{\mu}_{i} \geqslant \mu_{i}, \forall i \in I$), given $U^{int} \subset \R^n$ is convex.} Moreover, for any $\varepsilon \in \mathbb R_0^{+}$ such that 
    \begin{equation}
        \label{eqn:thm6}
        \beta_{i}(\varepsilon,1) + \gamma^{int}_{i}(\varepsilon+\bar{\mu}_i-\mu_i) \leqslant \varepsilon,  \forall i \in I,
    \end{equation}
    using the pseudometric $\mathbf{d}$ as defined in \eqref{Eq10}, composed state $X = \prod_{i \in \mathcal{I}}X_i$, the relation $\mathcal{R}= \{ \left(x^{a}, x^{b}\right)  \in X \times X \: | \: \mathbf{d}(H(x^{a}),H(x^{b}))\leqslant \varepsilon \}$ is an $(\varepsilon,0)$-approximate bisimulation relation between $S_{\bar{\Xi}}=\left\langle S_{i}(\bar{\Sigma})\right\rangle_{i \in I}^{\bar{\Xi}, \mathcal{I}}$ and $S_{\Xi}=\langle S_{i}(\bar{\Sigma})\rangle_{i \in I}^{\Xi, \mathcal{I}}$.
\end{thm}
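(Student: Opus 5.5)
The proof has two parts: compatibility, then the bisimulation conditions (i)–(iii) of Special Case \ref{SS: scase/bisim2} in both directions.

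\textbf{Compatibility.} Fix $i\in I$ and a tuple of neighbour outputs $y_{\mathcal N(i)}$. Since the composition is $\Xi$-compatible, there is $u_i^{int}\in U_i^{int}$ with $\mathbf d_{U_i^{int}}(u_i^{int},y_{\mathcal N(i)})\le\mu_i\le\bar\mu_i$. The same input therefore witnesses $\bar\Xi$-compatibility; convexity of $U^{int}$ is not needed for this step. I will also record that every transition of $S_\Xi$ is a transition of $S_{\bar\Xi}$. Indeed, the internal input used in $S_\Xi$ satisfies the weaker constraint defining $S_{\bar\Xi}$, so $\Delta_\Xi(x,u^{ext})\subseteq\Delta_{\bar\Xi}(x,u^{ext})$.

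\textbf{Conditions (i) and (ii).} Both systems have the same state set $X=\prod_i X_i$ and the same initial set $X^0$. For (i), pair each initial state with itself: $\mathbf d(H(x),H(x))=0\le\varepsilon$. Condition (ii) holds by the definition of $\mathcal R$.

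\textbf{Condition (iii), direction $S_\Xi$ simulated by $S_{\bar\Xi}$.} Take $(x^a,x^b)\in\mathcal R$ with $x^a$ in $S_\Xi$, an external input $u^{ext}$, and a successor $x^{a\prime}$ realised through internal inputs $u_i^{a,int}$ with $\|u_i^{a,int}-y^a_{\mathcal N(i)}\|\le\mu_i$. In $S_{\bar\Xi}$ I use the same $u^{ext}$, so $\mu=0$ in the statement. For the internal input I choose $u_i^{b,int}$ at distance at most $\bar\mu_i$ from $y^b_{\mathcal N(i)}$ and as close as possible to $u_i^{a,int}$.

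To place $u_i^{b,int}$, note that by \eqref{Eq9} and $(x^a,x^b)\in\mathcal R$ we have $\|y^a_{\mathcal N(i)}-y^b_{\mathcal N(i)}\|\le\varepsilon$. Hence $\|u_i^{a,int}-y^b_{\mathcal N(i)}\|\le\varepsilon+\mu_i$. I will move from $u_i^{a,int}$ along the segment towards $y^b_{\mathcal N(i)}$, and this is where convexity of $U^{int}$ keeps the point admissible. If $\varepsilon+\mu_i\le\bar\mu_i$, take $u_i^{b,int}=u_i^{a,int}$. Otherwise stop once the distance to $y^b_{\mathcal N(i)}$ equals $\bar\mu_i$; then $\|u_i^{a,int}-u_i^{b,int}\|\le\varepsilon+\mu_i-\bar\mu_i$.

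Applying the one-step $\delta$-ISS bound \eqref{D-ISS} with $k=1$ to each subsystem gives
\[
\|x_i^{a\prime}-x_i^{b\prime}\|\le\beta_i(\varepsilon,1)+\gamma_i^{int}(\|u_i^{a,int}-u_i^{b,int}\|).
\]
Because $\varepsilon+\mu_i-\bar\mu_i\le\varepsilon+\bar\mu_i-\mu_i$ and $\gamma_i^{int}$ is increasing, this is at most $\beta_i(\varepsilon,1)+\gamma_i^{int}(\varepsilon+\bar\mu_i-\mu_i)\le\varepsilon$ by \eqref{eqn:thm6}. Taking the maximum over $i$ yields $(x^{a\prime},x^{b\prime})\in\mathcal R$.

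\textbf{Condition (iii), direction $S_{\bar\Xi}$ simulated by $S_\Xi$.} Now $u_i^{b,int}$ is given with $\|u_i^{b,int}-y^b_{\mathcal N(i)}\|\le\bar\mu_i$, and I need $u_i^{a,int}$ within $\mu_i$ of $y^a_{\mathcal N(i)}$. I take the point on the segment from $u_i^{b,int}$ towards $y^a_{\mathcal N(i)}$ at distance $\mu_i$ from $y^a_{\mathcal N(i)}$, or $u_i^{b,int}$ itself if it is already within $\mu_i$. Then
\[
\|u_i^{a,int}-u_i^{b,int}\|\le\max\bigl(0,\ \|u_i^{b,int}-y^a_{\mathcal N(i)}\|-\mu_i\bigr)\le\varepsilon+\bar\mu_i-\mu_i,
\]
which is exactly the argument of $\gamma_i^{int}$ in \eqref{eqn:thm6}. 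The same one-step $\delta$-ISS estimate then closes the relation.

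\textbf{Main obstacle.} The delicate step is the segment construction. It needs $y_{\mathcal N(i)}\in U^{int}$ so that the segment stays inside the convex set, which the shared-pseudometric-space assumption is meant to provide. It also needs the selected inputs to remain admissible, i.e. $(u_i^{ext},u_i^{int})\in U_{S_i}(x_i)$. I would handle admissibility by assuming admissible inputs are input-independent, as holds for the unconstrained $\Delta$ of \eqref{nfe}, and state this explicitly where it is used.
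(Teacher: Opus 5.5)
Your proof is correct, modulo the hypothesis you flag yourself, which the paper's proof also needs without stating it. In the direction $S_{\bar\Xi}\preccurlyeq^{\varepsilon,0}S_\Xi$ it is the paper's argument: the same external input, an internal input $u_i^{b,int}$ within $\mu_i$ of $y^b_{\mathcal N(i)}$ and within $\varepsilon+\bar\mu_i-\mu_i$ of $u_i^{a,int}$, then the one-step $\delta$-ISS bound and \eqref{eqn:thm6}.

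You depart from the paper in two places.

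\textbf{The direction $S_\Xi\preccurlyeq S_{\bar\Xi}$.} The paper does not check condition (iii) for $\mathcal R$ in this direction. It cites $S_\Xi\preccurlyeq^{0,0}S_{\bar\Xi}$ from \cite{saoud2021compositional} and relaxes the parameters with Proposition~\ref{prop2}. That is shorter and needs no stability. However, the witnessing relation there is the diagonal, and Proposition~\ref{prop2} leaves the relation unchanged. So it gives bisimilarity, not the literal claim that $\mathcal R$ itself is the bisimulation relation. Your direct segment argument proves the literal claim, and with the sharper mismatch $\max(0,\varepsilon+\mu_i-\bar\mu_i)$.

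\textbf{Existence of $u_i^{b,int}$.} The paper just asserts existence ``by compatibility and convexity''; you construct the input and say what the construction needs: a norm-induced $\mathbf d_{U_i^{int}}$ and $y_{\mathcal N(i)}\in U_i^{int}$. That hypothesis is genuinely necessary. Take $(\mathbb R^2,\|\cdot\|_\infty)$, $y^a=y^b=0$, $0<\mu_i<\bar\mu_i$, and let $U_i^{int}$ be the segment from $(\bar\mu_i,\bar\mu_i)$ to $(\mu_i,-\mu_i)$. This set is convex, and compatibility holds at $0$. Yet the only point within $\mu_i$ of $0$ is $(\mu_i,-\mu_i)$. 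It lies at distance $\bar\mu_i+\mu_i$ from $u_i^{a,int}=(\bar\mu_i,\bar\mu_i)$, which exceeds $\varepsilon+\bar\mu_i-\mu_i$ whenever $\varepsilon<2\mu_i$. So state the inclusion as an explicit assumption. The shared-pseudometric-space convention only puts $\prod_{j\in\mathcal N(i)}Y_j$ and $U_i^{int}$ in a common space; it does not give the inclusion.

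\textbf{Admissibility.} Your worry here is vacuous. In Definition~\ref{Def:DiscreteTimeSCS}, $f$ maps into $\mathcal X$, so by \eqref{adm_S} $U_{S_i}(x_i)=U_i^{ext}\times U_i^{int}$ for every $x_i$. No extra assumption is needed.
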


\begin{pf}
    It was shown in \cite{saoud2021compositional} that $S_{\Xi} \preccurlyeq^{0, 0} S_{\bar{\Xi}}$ is verified for $ \bar{\Xi}\geqslant \Xi$, which implies from Proposition \ref{prop2} that $S_{\Xi} \preccurlyeq^{\varepsilon, 0} S_{\bar{\Xi}}$. The rest of this proof will focus on showing that the symmetrical version $S_{\bar{\Xi}} \preccurlyeq^{\varepsilon, 0} S_{\Xi}$.
    
    The first and second conditions in Definition \ref{DefinitionBisimulation} are directly satisfied. (We have the same transition systems, with the same set of initial conditions and the same sets of states; thus, the distance between two elements $x^{a}$ and $x^{b}$ satisfying $(x^{a},x^{b})\in \mathcal{R}$ is bounded by $\varepsilon$). 
    
    Consider $ (x^{a},x^{b})\in \mathcal{R} $, with $x^{a}=(x_{1}^{a}, \dots, x_{N}^{a})$ and $ x^{b}=(x_{1}^{b}, \dots, x_{N}^{b}) $, any $u^{a,ext} \in U_{S_{\bar{\Xi}}}(x^{a})$ and any $x^{a\prime} \in \Delta_{\bar{\Xi}}(x^{a},u^{a,ext})$. Choose $u^{b,ext}=u^{a,ext}$ and let us show the existence of $x^{b\prime} \in \Delta_{\Xi}(x^{b}, u^{b,ext})$ satisfying $\left(x^{a\prime}, x^{b\prime}\right) \in \mathcal{R}$. Since $x^{a\prime} \in \Delta_{\bar{\Xi}}(x^{a},u^{a,ext})$, there exists $u_{i}^{a,int} \in U_{i}^{a,int}$ with $\mathbf{d}_{U_{i}^{int}}(u_{i}^{a,int}, \prod_{j \in \mathcal{N}(i)}\{y^a_{j}\}) \leqslant \bar{\mu}_{i}$, $(u_{i}^{a,ext}, u_{i}^{a,int}) \in U_{S_{i}(\bar{\Sigma})}(x^a_{i})$ and   $x_{i}^{a\prime} \in \Delta_{i}(x^a_{i}, u_{i}^{a,ext}, u_{i}^{a,int})$. 
    
    Now for $i \in I$, choose $u_{i}^{b,int} \in U_i^{int}$ satisfying $\mathbf{d}_{U_{i}^{int}}(u_{i}^{b,int}, \prod_{j \in \mathcal{N}(i)}\{y^b_{j}\})\leqslant \mu_i$ and $\mathbf{d}_{U_{i}^{int}}(u_{i}^{b,int},u_{i}^{a,int})   \leqslant \varepsilon +\bar{\mu}_{i}-\mu_i$. Such $u_{i}^{b,int}$ always exists using the fact that \hl{the collection $\{S_i(\bar{\Sigma})\}_{i \in I}$ is compatible for $\Xi$-approximate composition and $U^{int} \subset \R^n$ is convex}, $\mathbf{d}(\prod_{j \in \mathcal{N}(i)}\left\{y^a_{j}\right\},\prod_{j \in \mathcal{N}(i)}\left\{y^b_{j}\right\}) \leqslant \varepsilon$ and $\mathbf{d}_{U_{i}^{int}}(u_{i}^{a,int}, \prod_{j \in \mathcal{N}(i)}\{y^a_{j}\}) \leqslant \bar{\mu}_{i}$. Now pick $x^{b\prime} \in \Delta_{{\Xi}}(x^{b},u^{b,ext})$, defined as $x_{i}^{b\prime} \in \Delta_{i}(x^b_{i}, u_{i}^{b,ext}, u_{i}^{b,int})$ for $i \in I$. First, we have that for the chosen values of $u_{i}^{b,int}$, $i \in I$, the transition $x^{b\prime} \in \Delta_{{\Xi}}(x^{b},u^{b,ext})$ is well defined for the approximate composed system $S_{\Xi}$. Let us now show that $(x^{a\prime}, x^{b\prime}) \in \mathcal{R}$. From the definition of the relation $\mathcal{R}$ and under the assumption that each component satisfies the $\delta$-ISS property, we have for all $ i\in I $:
    \begin{align*}
            &\mathbf{d}(H(x^{a\prime}),H(x^{b\prime})) \\
            &= \max _{i \in I}\mathbf{d}_{Y_i}(\Delta_{i}(x_{i}^{a},u_{i}^{a,ext}, u_{i}^{a,int}),\Delta_{i}(x^b_{i},u_{i}^{b,ext},u_{i}^{b,int})) \\
            &\leqslant \max _{i \in I}  \beta_{i}(\varepsilon,1) + \gamma^{int}_{i}(\varepsilon+\bar{\mu_{i}}-\mu_i) \leqslant \varepsilon,
    \end{align*}
    where the last inequality follows from (\ref{eqn:thm6}). Hence, condition (iii) in Special Case \ref{SS: scase/bisim2} of Definition \ref{DefinitionBisimulation} holds and one obtains $S_{\bar{\Xi}} \preccurlyeq^{\varepsilon, 0} S_{\Xi}$. $\hfill\qed$
\end{pf}

\begin{rem}
    Note that the results in \cite{saoud2021compositional} show a simulation relation $S_{\Xi} \preccurlyeq^{\varepsilon,0} S_{\bar{\Xi}}$. The previous result, Theorem \ref{THM6}, shows that under the $\delta$-ISS property of each subsystem the symmetrical relation $S_{\bar{\Xi}} \preccurlyeq^{\varepsilon,0} S_{\Xi}$ holds and thus $S_{\bar{\Xi}} \approx ^{\varepsilon,0}  S_{\Xi}$. Indeed, while any trajectory of the system $S_{\Xi}$ is a trajectory of the system $S_{\bar{\Xi}}$, the proposed result shows that under the $\delta$-ISS property, one can measure the conservatism between $S_{\Xi}$ and $S_{\bar{\Xi}}$, thereby measuring the conservatism of the approximate composition.
\end{rem}

\begin{thm}\label{THM9}
    Consider an interconnected noise-free transition system \hl{$\left\langle {S}_{i}(\bar{\Sigma})\right\rangle_{i \in I}^{\Xi, \mathcal{I}}$ and $\Xi=(\mu_{1}, \ldots, \mu_{N}) \in(\mathbb{R}_{0}^{+})^{N}$.} If each subsystem $S_i(\bar{\Sigma})$ is $\delta$-ISS with $\beta_i, \gamma_i^{int}$ as defined in \eqref{D-ISS}, and \hl{$\left\langle {S}_{i}(\bar{\Sigma})\right\rangle_{i \in I}^{\Xi, \mathcal{I}}$ is compatible for $\Xi$-approximate composition with respect to $\mathcal{I}$, then it is also compatible for $\bar{\Xi}$-approximate composition with respect to $\mathcal{I}$, for any $\bar{\Xi}=\left(\bar{\mu}_{1}, \ldots, \bar{\mu}_{N}\right) \in(\mathbb{R}_{0}^{+})^{N}$ such that $\bar{\Xi} \geqslant \Xi$ (i.e., $\bar{\mu}_{i} \geqslant \mu_{i}, \forall i \in I$), given $U^{int} \subset \R^n$ is convex}. Moreover, for any $\varepsilon \in \mathbb R_0^{+}$ such that 
    \begin{equation}
        \label{eqn:thm8}
        \beta_{i}(\varepsilon,1) + \gamma^{int}_{i}(\varepsilon+\bar{\mu}_i-\mu_i) \leqslant \varepsilon,  \forall i \in I,
    \end{equation}
    using the pseudometric $\mathbf{d}$ as defined in \eqref{Eq10}, composed state $X = \prod_{i \in \mathcal{I}}X_i$, the relation $\mathcal{R}= \{ \left(x^{a}, x^{b}\right)  \in X \times X \: | \: \mathbf{d}(H(x^{a}),H(x^{b}))\leqslant \varepsilon \}$ is an $(\varepsilon,0)$-approximate alternating bisimulation relation between $S_{\bar{\Xi}}=\left\langle S_{i}(\bar{\Sigma})\right\rangle_{i \in I}^{\bar{\Xi}, \mathcal{I}}$ and $S_{\Xi}=\langle S_{i}(\bar{\Sigma})\rangle_{i \in I}^{\Xi, \mathcal{I}}$.
\end{thm}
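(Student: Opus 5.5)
The plan mirrors the proof of Theorem~\ref{THM6}, with the order of input quantification reversed as required by Special Case~\ref{SS: scase/abisim2}. I will establish the two directions $S_{\Xi} \widehat{\preccurlyeq}^{\varepsilon,0} S_{\bar{\Xi}}$ and $S_{\bar{\Xi}} \widehat{\preccurlyeq}^{\varepsilon,0} S_{\Xi}$ separately, using the same relation $\mathcal{R}$. Compatibility for $\bar{\Xi}$ is immediate from $\bar{\mu}_i \geqslant \mu_i$: any internal input witnessing \eqref{inp_ms} with bound $\mu_i$ also witnesses it with bound $\bar{\mu}_i$.

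Conditions (i) and (ii) are handled in both directions exactly as in Theorem~\ref{THM6}. Both composed systems share the state set and the initial set, so the diagonal pair $(x^0,x^0)$ lies in $\mathcal{R}$, and (ii) is the defining inequality of $\mathcal{R}$.

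For the direction $S_{\Xi} \widehat{\preccurlyeq}^{\varepsilon,0} S_{\bar{\Xi}}$, fix $(x^a,x^b)\in\mathcal{R}$, with $x^a$ a state of $S_{\Xi}$ and $x^b$ a state of $S_{\bar{\Xi}}$, and any $u^{b,ext}\in U_{S_{\bar{\Xi}}}(x^b)$. Set $u^{a,ext}=u^{b,ext}$. Take any successor $x^{a\prime}\in\Delta_{\Xi}(x^a,u^{a,ext})$; it is realised by internal inputs $u_i^{a,int}$ within $\mu_i$ of $y^a_{\mathcal{N}(i)}$.

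I then need to produce internal inputs $u_i^{b,int}$ within $\bar{\mu}_i$ of $y^b_{\mathcal{N}(i)}$ and within $\varepsilon+\bar{\mu}_i-\mu_i$ of $u_i^{a,int}$. I expect to obtain them by the same convexity argument used in Theorem~\ref{THM6}: move from $u_i^{a,int}$ toward $y^b_{\mathcal{N}(i)}$ along a segment of $U^{int}$, using $\mathbf{d}(y^a_{\mathcal{N}(i)},y^b_{\mathcal{N}(i)})\leqslant\varepsilon$.

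The resulting $x^{b\prime}\in\Delta_{\bar{\Xi}}(x^b,u^{b,ext})$ is well defined. The $\delta$-ISS estimate \eqref{D-ISS} at $k=1$ then gives $\mathbf{d}_{Y_i}(x_i^{a\prime},x_i^{b\prime})\leqslant\beta_i(\varepsilon,1)+\gamma_i^{int}(\varepsilon+\bar{\mu}_i-\mu_i)\leqslant\varepsilon$ by \eqref{eqn:thm8}. Taking the maximum over $i$ yields $(x^{a\prime},x^{b\prime})\in\mathcal{R}$.

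For the direction $S_{\bar{\Xi}} \widehat{\preccurlyeq}^{\varepsilon,0} S_{\Xi}$, fix $(x^a,x^b)\in\mathcal{R}$, with $x^a$ a state of $S_{\bar{\Xi}}$ and $x^b$ a state of $S_{\Xi}$, and any $u^{b,ext}\in U_{S_{\Xi}}(x^b)$. Again choose $u^{a,ext}=u^{b,ext}$, so $\mu=0$. For each $x^{a\prime}\in\Delta_{\bar{\Xi}}(x^a,u^{a,ext})$, with internal inputs $u_i^{a,int}$ within $\bar{\mu}_i$ of $y^a_{\mathcal{N}(i)}$, I will choose $u_i^{b,int}$ exactly as in the proof of Theorem~\ref{THM6}. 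That is, $u_i^{b,int}$ lies within $\mu_i$ of $y^b_{\mathcal{N}(i)}$ and within $\varepsilon+\bar{\mu}_i-\mu_i$ of $u_i^{a,int}$; such a point exists by $\Xi$-compatibility and convexity. This defines $x^{b\prime}\in\Delta_{\Xi}(x^b,u^{b,ext})$, and the same $\delta$-ISS computation places $(x^{a\prime},x^{b\prime})$ in $\mathcal{R}$.

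The main obstacle is twofold. First, I must check that the chosen $u^{a,ext}$ is admissible in the quantified-over system. Because the subsystems are deterministic and the same external input is used, admissibility reduces to admissibility of the pairs $(u_i^{ext},u_i^{int})$ in each $U_{S_i(\bar{\Sigma})}(x_i)$. I would handle this as implicitly done in Theorem~\ref{THM6}, by assuming the constructed internal inputs remain admissible, or by noting that $U_{S}$ does not depend on the mismatch.

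Second, the alternating condition requires that every successor of the first system be matched, not just one. Since the composed systems are nondeterministic only through the choice of internal input, I will treat each successor through its own realising internal inputs, and the construction above applies uniformly to each of them. Combining both directions then gives $S_{\bar{\Xi}} \widehat{\approx}^{\varepsilon,0} S_{\Xi}$.
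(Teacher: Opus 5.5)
Your proof is correct. For the direction $S_{\bar{\Xi}} \widehat{\preccurlyeq}^{\varepsilon,0} S_{\Xi}$ it is essentially the paper's argument verbatim:
\begin{itemize}
\item copy the external input;
\item choose $u_i^{b,int}$ within $\mu_i$ of $y^b_{\mathcal{N}(i)}$ and within $\varepsilon+\bar{\mu}_i-\mu_i$ of $u_i^{a,int}$, using convexity;
\item apply the $\delta$-ISS estimate at $k=1$ together with \eqref{eqn:thm8}.
\end{itemize}

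You depart from the paper in the other direction. The paper does not argue $S_{\Xi} \widehat{\preccurlyeq}^{\varepsilon,0} S_{\bar{\Xi}}$ directly. It imports $S_{\Xi} \widehat{\preccurlyeq}^{0,0} S_{\bar{\Xi}}$ from \cite{saoud2021compositional}, then relaxes $0$ to $\varepsilon$ with Proposition~\ref{prop2}. Since $\varepsilon=0$ and $H$ is the identity, that imported relation must lie in the diagonal. It works because every $\Xi$-successor is also a $\bar{\Xi}$-successor: a mismatch of at most $\mu_i$ is also a mismatch of at most $\bar{\mu}_i$. You instead rerun the segment-plus-$\delta$-ISS argument for $\mathcal{R}$ itself.

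The paper's route is shorter. It also shows that this direction needs neither stability nor convexity, which is the point of the remark after the theorem.

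Your route is more faithful to the statement as worded. Because $\mathcal{R}$ is symmetric, you show that this single relation witnesses both directions. By contrast, Proposition~\ref{prop2} leaves the diagonal relation unchanged, and enlarging a simulation relation to $\mathcal{R}$ is not automatic: condition (iii) must be checked on the new pairs. In your first direction the actual mismatch is only $\max(0,\mu_i+\varepsilon-\bar{\mu}_i)\leqslant\varepsilon$, so \eqref{eqn:thm8} is more than sufficient there.

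You flag two points: admissibility of the copied external input, and whether the segment construction is valid. The latter needs $y^b_{\mathcal{N}(i)}$ (or the whole segment) to lie in $U_i^{int}$, with $\mathbf{d}_{U_i^{int}}$ norm-induced. The paper's proof leaves both implicit too, so they are not gaps relative to it.
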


\begin{pf}
    It was shown in \cite{saoud2021compositional} that $S_{\Xi} \widehat \preccurlyeq^{0, 0} S_{\bar{\Xi}}$ is verified for $ \bar{\Xi}\geqslant \Xi$, which implies from Proposition \ref{prop2} that $S_{\Xi} \widehat \preccurlyeq^{\varepsilon, 0} S_{\bar{\Xi}}$. The rest of this proof will focus on showing that the condition \eqref{eqn:thm8} guarantees the symmetrical version $S_{\bar{\Xi}} \widehat \preccurlyeq^{\varepsilon, 0} S_{\Xi}$. The first and second conditions in Definition \ref{DefinitionAlternatingBisimulation} are directly satisfied. Consider $ (x^{a},x^{b})\in \mathcal{R} $, with $x^{a}=(x_{1}^{a}, \dots, x_{N}^{a})$ and $x^{b}=(x_{1}^{b}, \dots, x_{N}^{b})$. Let $u^{b,ext} \in U_{S_{{\Xi}}}(x^{b})$ be any external input for the system $S_{\Xi}$ and choose $u^{a,ext}=u^{b,ext}$ for the system $S_{\bar{\Xi}}$. We must prove that for any $ x^{a\prime} \in \Delta_{\bar{\Xi}}(x^{a},u^{a,ext})$, there exists $x^{b\prime} \in \Delta_{\Xi}(x^{b}, u^{b,ext})$ satisfying $\left(x^{a\prime}, x^{b\prime}\right) \in \mathcal{R}$. 
    
    Consider an arbitrary transition $x^{a\prime} \in \Delta_{\bar{\Xi}}(x^{a},u^{a,ext})$. This implies that for each $i \in I$, there exists $u_{i}^{a,int} \in U_i^{int}$ satisfying $\mathbf{d}_{U_{i}^{int}}(u_{i}^{a,int}, \prod_{j \in \mathcal{N}(i)}\left\{y^a_{j}\right\})\leqslant \bar{\mu}_i$, such that $x_{i}^{a\prime} \in \Delta_{i}(x^a_{i}, u_{i}^{a,ext}, u_{i}^{a,int})$.
    
    Now for $i \in I$, we choose $u_{i}^{b,int} \in U_{i}^{b,int}$ such that $\mathbf{d}_{U_{i}^{int}}(u_{i}^{b,int}, \prod_{j \in \mathcal{N}(i)}\left\{y^b_{j}\right\}) \leqslant {\mu}_{i}$ and $\mathbf{d}_{U_{i}^{int}}(u_{i}^{b,int}, u_{i}^{a,int})  \leqslant \varepsilon + \bar{\mu}_{i} -\mu_i$. Such $u_{i}^{b,int}$ always exists using the fact that \hl{the collection $\{S_i(\bar{\Sigma})\}_{i \in I}$ is compatible for $\Xi$-approximate composition and $U^{int} \subset \R^n$ is convex}, $\mathbf{d}(\prod_{j \in \mathcal{N}(i)}\left\{y^a_{j}\right\},\prod_{j \in \mathcal{N}(i)}\left\{y^b_{j}\right\}) \leqslant \varepsilon$ and $\mathbf{d}_{U_{i}^{int}}\left(u_{i}^{a,int}, \prod_{j \in \mathcal{N}(i)}\left\{y^a_{j}\right\}\right) \leqslant \bar{\mu}_{i}$.
    
    Pick $x^{b\prime} \in \Delta_{\Xi}(x^{b},u^{b,ext})$, defined for $i \in I$ as $x_{i}^{b\prime} \in \Delta_{i}\left(x^b_{i}, u_{i}^{b,ext}, u_{i}^{b,int}\right)$. First, we have that for the chosen values of $u_{i}^{b,int}$, the transition $x^{b\prime}$ is well defined for the approximate composed system $S_{\Xi}$. Let us now show that $\left(x^{a\prime}, x^{b\prime}\right) \in \mathcal{R}$. From the definition of the relation $\mathcal{R}$ and under the assumption that each component satisfies the $\delta$-ISS property, we have for all $ i\in I $:
    
\begin{align*}
    \mathbf{d}(H(x^{a\prime}),H(x^{b\prime})) &= \max _{i \in I}\mathbf{d}_{Y_i}\big(\Delta_{i}(x_{i}^{a},u_{i}^{a,ext}, u_{i}^{a,int}),\Delta_{i}(x^b_{i},u_{i}^{b,ext},u_{i}^{b,int})\big)\\
    &\leqslant \max _{i \in I}\ \beta_{i}(\varepsilon,1) + \gamma^{int}_{i}(\varepsilon+\bar{\mu_{i}}-\mu_i) \leqslant \varepsilon,
\end{align*}
    where the last inequality follows from (\ref{eqn:thm8}). Thus, condition (iii) in Special Case \ref{SS: scase/abisim2} of Definition \ref{DefinitionAlternatingBisimulation} yields $S_{\bar{\Xi}} \widehat \preccurlyeq^{\varepsilon, 0} S_{\Xi}$. $\hfill\qed$
\end{pf}

\begin{rem}
    Note that in \cite{saoud2021compositional}, we have the alternating simulation relation from $S_{\Xi} \widehat \preccurlyeq^{\varepsilon,0} S_{\bar{\Xi}}$ without any stability requirement. The proposed result, Theorem \ref{THM9}, shows the symmetrical version $ S_{\bar{\Xi}} \widehat \preccurlyeq^{\varepsilon,0} S_{\Xi}  $ under the $\delta$-ISS property of each subsystem.
\end{rem}	

\subsection{Approximate (Alternating) Bisimilar Composition of the Global noise-free subsystems}

The compositionality results for the approximate bisimulation relation are stated as follows.

\begin{thm}\label{THM10} 
    Let $\left\langle {S}_{i}(\bar{\Sigma})\right\rangle_{i \in I}^{{\Xi}, \mathcal{I}}$ and $\left\langle \hat{S}_{i}(\bar{\Sigma})\right\rangle_{i \in I}^{\hat{\Xi}, \mathcal{I}}$ be two interconnected transition systems with $S_{i}=(X_{i}, X_{i}^{0}, U_{i}^{ext}, U_{i}^{int}, \Delta_{i}, Y_{i}, H_{i})$ and $\hat{S}_{i}=\left(\hat{X}_{i}, \hat{X}_{i}^{0}, \hat{U}_{i}^{ext}, \right. $ $\left.\hat{U}_{i}^{int}, \hat{\Delta}_{i}, \hat{Y}_{i}, \hat{H}_{i}\right)$. Consider positive constants $\varepsilon_{i}, \mu_{i},$ for $ i \in I$, with $\varepsilon=\max _{i \in I} \varepsilon_{i}$, $\mu=\max_{i \in I} \mu_{i}$ and consider $\Xi=\left(\kappa_{1}, \ldots, \kappa_{N}\right) $ and $\hat{\Xi}=(\mu_1+\kappa_1+\varepsilon,\ldots, \mu_N+\kappa_N+\varepsilon)$. Let the following hold:
    \begin{itemize}
        \item[$(\mathrm{i})$] For all $i \in I$, $ S_{i}(\bar{\Sigma}) $ is $ \delta $-ISS and satisfies the following inequality $ \max_{i \in I} \big(\beta_{i}(\varepsilon,1) + \gamma^{int}_{i}(3\varepsilon+2\mu_i) \big)\leqslant \varepsilon$;
        \item[$(\mathrm{ii})$] For all $i \in I$, $ S_{i}(\bar{\Sigma}) $ is $(\varepsilon_{i},\mu_{i})$-approximately bisimilar to $ \hat{S}_{i}(\bar{\Sigma}) $, and we denote $ S_{i}(\bar{\Sigma}) \approx^{\varepsilon_{i},\mu_{i}} \hat{S}_{i}(\bar{\Sigma}) $;
        \item[$(\mathrm{iii})$] $\left\langle {S}_{i}(\bar{\Sigma})\right\rangle_{i \in I}^{{\Xi}, \mathcal{I}}$ is compatible for $\Xi$-approximate composition with respect to $\mathcal{I}$;
        \item[$(\mathrm{iv})$] $\left\langle \hat{S}_{i}(\bar{\Sigma})\right\rangle_{i \in I}^{\hat{\Xi}, \mathcal{I}}$ is compatible for $\hat{\Xi}$-approximate composition with respect to $\mathcal{I}$;
    \end{itemize}
    then, $S_{\Xi} = \left\langle {S}_{i}(\bar{\Sigma})\right\rangle_{i \in I}^{{\Xi}, \mathcal{I}}$ is $(2\varepsilon,\mu)$-approximately bisimilar to $\hat{S}_{\hat{\Xi}}=\left\langle \hat{S}_{i}(\bar{\Sigma})\right\rangle_{i \in I}^{\hat{\Xi}, \mathcal{I}}$.
\end{thm}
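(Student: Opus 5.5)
The plan is to prove the result in two stages, chained by the transitivity of Proposition~\ref{prop1}. We introduce the intermediate system $S_{\bar\Xi}=\langle S_i(\bar\Sigma)\rangle_{i\in I}^{\bar\Xi,\mathcal I}$ with $\bar\Xi=(\mu_1+\kappa_1+2\varepsilon,\ldots,\mu_N+\kappa_N+2\varepsilon)\geqslant\Xi$. Theorem~\ref{THM6} then gives $S_{\Xi}\approx^{\varepsilon,0}S_{\bar\Xi}$, provided $\beta_i(\varepsilon,1)+\gamma_i^{int}(\varepsilon+\bar\mu_i-\kappa_i)\leqslant\varepsilon$. Since $\varepsilon+\bar\mu_i-\kappa_i=3\varepsilon+\mu_i\leqslant 3\varepsilon+2\mu_i$ and $\gamma_i^{int}$ is increasing, this condition is implied by hypothesis (i). Compatibility of $S_{\bar\Xi}$ follows from (iii) and the first part of Theorem~\ref{THM6}.

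The second stage shows directly that $S_{\bar\Xi}\approx^{\varepsilon,\mu}\hat S_{\hat\Xi}$, using the product relation $\mathcal R=\{(x,\hat x)\mid (x_i,\hat x_i)\in\mathcal R_i\ \forall i\}$ built from the local relations $\mathcal R_i$ of (ii). Conditions (i) and (ii) of Special Case~\ref{SS: scase/bisim2} follow componentwise, since the max-metric \eqref{Eq10} gives output distance $\leqslant\max_i\varepsilon_i=\varepsilon$.

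For condition (iii) in the direction $S_{\bar\Xi}\to\hat S_{\hat\Xi}$, take $(x,\hat x)\in\mathcal R$, an input $u^{ext}$, and a successor $x'$. This successor uses internal inputs $u_i^{int}$ with $\mathbf d(u_i^{int},y_{\mathcal N(i)})\leqslant\bar\mu_i$. Local bisimilarity then provides $(\hat u_i^{ext},\hat u_i^{int})$ within $\mu_i$ and a successor $\hat x_i'$ with $(x_i',\hat x_i')\in\mathcal R_i$. The crux is verifying that $\hat u_i^{int}$ is admissible for the $\hat\Xi$-composition. By the triangle inequality and \eqref{Eq9},
\[
\mathbf d(\hat u_i^{int},\hat y_{\mathcal N(i)})\leqslant \mu_i+\bar\mu_i+\varepsilon .
\]
This bound exceeds $\hat\mu_i=\mu_i+\kappa_i+\varepsilon$, so the naive matching is too loose.

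This mismatch is the main obstacle. It is the reason the intermediate system should really be $S_{\Xi}$ itself in one direction. In the direction $S_\Xi\to\hat S_{\hat\Xi}$, the bound becomes $\mu_i+\kappa_i+\varepsilon=\hat\mu_i$ exactly; this is the standard simulation argument of \cite{saoud2021compositional}, and it gives $S_\Xi\preccurlyeq^{\varepsilon,\mu}\hat S_{\hat\Xi}$, hence with $2\varepsilon$ by Proposition~\ref{prop2}.

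For the converse direction $\hat S_{\hat\Xi}\to S_\Xi$, we start from $\hat u_i^{int}$ with $\mathbf d(\hat u_i^{int},\hat y_{\mathcal N(i)})\leqslant\hat\mu_i$. Local bisimilarity gives $u_i^{int}$ within $\mu_i$, so
\[
\mathbf d(u_i^{int},y_{\mathcal N(i)})\leqslant 2\mu_i+\kappa_i+2\varepsilon=:\tilde\mu_i .
\]
This shows $\hat S_{\hat\Xi}\preccurlyeq^{\varepsilon,\mu}S_{\tilde\Xi}$ with $\tilde\Xi\geqslant\Xi$. We then apply Theorem~\ref{THM6} between $S_{\tilde\Xi}$ and $S_\Xi$; its condition requires $\gamma_i^{int}(\varepsilon+\tilde\mu_i-\kappa_i)=\gamma_i^{int}(3\varepsilon+2\mu_i)$, which is exactly hypothesis (i). This yields $S_{\tilde\Xi}\preccurlyeq^{\varepsilon,0}S_\Xi$.

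Transitivity (Proposition~\ref{prop1}) then gives $\hat S_{\hat\Xi}\preccurlyeq^{2\varepsilon,\mu}S_\Xi$. Combined with the first direction, this gives $S_\Xi\approx^{2\varepsilon,\mu}\hat S_{\hat\Xi}$. The relations in the two directions may differ; we would handle this as the paper does in Proposition~\ref{prop1}, item 2. Well-definedness of successors in each composed system uses (iii), (iv) and admissibility from the local relations.
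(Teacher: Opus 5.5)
Your final argument is correct and is the paper's own proof. You get $S_\Xi\preccurlyeq^{\varepsilon,\mu}\hat S_{\hat\Xi}$ from \cite{saoud2021compositional}, upgraded to $2\varepsilon$ by Proposition~\ref{prop2}; then $\hat S_{\hat\Xi}\preccurlyeq^{\varepsilon,\mu}S_{\tilde\Xi}$ with $\tilde\mu_i=2\mu_i+\kappa_i+2\varepsilon$ via the product of the local relations, then $S_{\tilde\Xi}\preccurlyeq^{\varepsilon,0}S_\Xi$ from Theorem~\ref{THM6} under hypothesis (i), and finally transitivity (Proposition~\ref{prop1}). The opening plan through $\bar\mu_i=\mu_i+\kappa_i+2\varepsilon$ is abandoned midway and contributes nothing, so drop it from the write-up.
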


\begin{pf}
    First, we have from \cite{saoud2021compositional} that $ {S}_{\Xi} \preccurlyeq^{\varepsilon,\mu}  \hat{S}_{\hat{\Xi}}$. Hence, one gets from Proposition \ref{prop2} that ${S}_{\Xi} \preccurlyeq^{2\varepsilon,\mu}  \hat{S}_{\hat{\Xi}}$. Let us now show that $\hat{S}_{\hat{\Xi}}\preccurlyeq^{2\varepsilon,\mu} {S}_{\Xi}$.
    
    For $i \in I$, let $\mathcal{R}_i$ be the $(\varepsilon_i,\mu_i)$-approximate simulation relation from $\hat{S}_i(\bar{\Sigma})$ to $S_i(\bar{\Sigma})$, and let us first show that for $X = \prod_{i \in I}X_i, \hat{X} = \prod_{i \in I}\hat{X}_i$, the relation $\mathcal{R}$ defined by $\mathcal{R}:= \{ \left(x, \hat{x}\right)  \in X \times \hat{X} \: | \: (x_i,\hat{x}_i) \in \mathcal{R}_i, \forall i \in I \}$, with $x=\left(x_{1}, \ldots, x_{N}\right)$ and $\hat{x}=\left(\hat{x}_{1}, \ldots, \hat{x}_{N}\right)$, is an $(\varepsilon,\mu)$-approximate simulation relation from $\hat{S}_{\hat{\Xi}}$ to $S_{\bar{\Xi}}$, with $\bar{\Xi}=(2\mu_1+\kappa_1+2\varepsilon,\ldots, 2\mu_N+\kappa_N+2\varepsilon)$. The first condition is directly satisfied (we suppose that for all $ i\in I $, $ \hat{S}_{i}(\bar{\Sigma}) $ is $(\varepsilon_i,\mu_i)$-approximately similar to $ S_{i}(\bar{\Sigma}) $, thus, for each initial condition in $ \hat{S}_{\hat{\Xi}} $ we can find an initial condition in $ S_{\bar{\Xi}} $). Let $(x, \hat{x}) \in \mathcal{R}$ with $x=\left(x_{1}, \ldots, x_{N}\right)$ and $\hat{x}=\left(\hat{x}_{1}, \ldots, \hat{x}_{N}\right)$. Using the definition of the output map for approximate composition, relation \eqref{Eq10} and condition (ii) of  {Definition \ref{DefinitionBisimulation}}, we have, 
    
\begin{align*}
    \mathbf{d}(H(x), \hat{H}(\hat{x})) &= \mathbf{d}\big((H_{1}(x_{1}), \ldots, H_{N}(x_{N})),(\hat{H}_{1}(\hat{x}_{1}), \ldots, \hat{H}_{N}(\hat{x}_{N}))\big)\\
    &= \max _{i \in I} \mathbf{d}_{Y_{i}}\big(H_{i}(x_{i}), \hat{H}_{i}(\hat{x}_{i})\big) \leqslant \max _{i \in I} \varepsilon_{i}=\varepsilon,
\end{align*}
    where the last inequality follows from the fact that $(x_i,\hat{x}_i) \in \mathcal{R}_i$ for all $i \in I$.
    
    Consider $(x,\hat{x})\in\mathcal{R}$ with $x=(x_1,\ldots,x_N)$ and $\hat{x}=(\hat{x}_1,\ldots,\hat{x}_N)$, any $\hat{u}^{ext} \in \hat{U}_{\hat{S}_{\hat{\Xi}}}(\hat{x})$ with $\hat{u}^{ext}=(\hat{u}_1^{ext},\ldots,\hat{u}_N^{ext})$ and any $\hat{x}' \in \hat{\Delta}_{\hat{\Xi}}(\hat{x},\hat{u}^{ext})$. Let us prove the existence of $u^{ext} \in U_{S_{\bar{\Xi}}}(x)$ with $\mathbf{d}_{U^{ext}}(u^{ext},\hat{u}^{ext}) \leqslant \mu$ and ${x}' \in \Delta_{\bar{\Xi}}({x},{u}^{ext})$ satisfying $(x',\hat{x}')\in \mathcal{R}$.
    
    From the definition of the relation $\mathcal R$, we have for all $i\in I$, $(x_i,\hat{x}_i)\in \mathcal{R}_i$, then from the third condition of Special Case \ref{SS: scase/bisim2} of  {Definition \ref{DefinitionBisimulation}}, we have for all $(\hat{u}_i^{ext},\hat{u}_i^{int}) \in \hat{U}_{\hat{S}_i(\bar{\Sigma})}(\hat{x}_i)$, and for any $\hat{x}_i'\in \hat{\Delta}_i(\hat{x}_i,\hat{u}_i^{ext},\hat{u}_i^{int})$ the existence of $(u_i^{ext},u_i^{int})\in U_{S_i(\bar{\Sigma})}(x_i)$ with $\mathbf{d}_{U_i^{ext}}(u_i^{ext},\hat{u}_i^{ext})\leq\mu_i$ and  $\mathbf{d}_{U_i^{int}}(u_i^{int},\hat{u}_i^{int})\leq\mu_i$ and the existence of ${x}_i'\in \Delta_i({x}_i,{u}_i^{ext},{u}_i^{int})$ such that $(x_i',\hat{x}_i')\in \mathcal{R}_i$.
    
    Let us show that the input ${u}^{int}=({u}^{int}_1,\ldots,{u}^{int}_N)$ satisfies the requirement of the $\bar{\Xi}$-approximate composition of the components $\{{S}_i(\bar{\Sigma})\}_{i\in I}$. The condition $\mathbf{d}_{U_i^{int}}(u_i^{int},\hat{u}_i^{int})\leqslant \mu_i$ implies that,
    \begin{align*}
        &\mathbf{d}_{U_i^{int}}({u}_i^{int},\prod_{j\in \mathcal{N}(i)}\{{y}_j\}) \\
        &\leqslant \mathbf{d}_{U_i^{int}}(\hat{u}_i^{int},u_i^{int})+\mathbf{d}_{U_i^{int}}(\hat{u}_i^{int},\prod_{j\in \mathcal{N}(i)}\{{y}_j\})\\
        & \leqslant \mathbf{d}_{U_i^{int}}(\hat{u}_i^{int},u_i^{int}) +\mathbf{d}_{U_i^{int}}(\hat{u}_i^{int},\prod_{j\in \mathcal{N}(i)}\{\hat{y}_j\}) \\
        &+\mathbf{d}_{U_i^{int}}(\prod_{j\in \mathcal{N}(i)}\{y_j\},\prod_{j\in \mathcal{N}(i)}\{\hat{y}_j\})\\
        &\leqslant \mu_i + \mu_i+\kappa_i+\varepsilon+ \max\limits_{j\in \mathcal{N}(i)} \varepsilon_j \\
        &\leqslant 2\mu_i +\kappa_i+\varepsilon+ \max\limits_{j \in I} \varepsilon_j =2\mu_i+\kappa_i+2\varepsilon.
    \end{align*}
    Hence, from the condition (iii) of the $\bar{\Xi}$-approximate composition with respect to $\mathcal{I}$ of $\{{S}_i(\bar{\Sigma})\}_{i\in I}$ is well defined in the sense of  {Definition \ref{DefComposition}}. Thus, condition (iii) in Special Case \ref{SS: scase/bisim2} of  {Definition \ref{DefinitionBisimulation}} holds with $u^{ext}=(u_1^{ext},\ldots,u_N^{ext})$ satisfying $\mathbf d_{U^{ext}}(u^{ext},\hat{u}^{ext})=\max\limits_ {i \in I}\{\mathbf d_{U_i^{ext}}(u^{ext}_i,\hat{u}^{ext}_i)\}  = \max\limits_{i \in I}\{\mu_i\} =\mu$, and one obtains $\hat{S}_{\hat{\Xi}}\preccurlyeq^{\varepsilon,\mu} {S}_{\bar{\Xi}}$. 
    
\begin{table}[t]
\centering
\color{hlcolor}\caption{Derivation of the internal input mismatch $3\varepsilon+2\mu_i$.}
\label{tab:error-terms}
\begin{tabular}{cll}
\toprule
\textbf{Term} & \textbf{Source} & \textbf{Bound} \\
\midrule
1 & $\mathbf{d}_{U_i^{int}}(\hat{u}_i^{int},u_i^{int})$ & $\mu_i$ \\
2 & $\mathbf{d}_{U_i^{int}}(\hat{u}_i^{int},\prod_{j\in \mathcal{N}(i)}\{\hat{y}_j\})$ & $\mu_i +\kappa_i + \varepsilon$ \\
3 & $\mathbf{d}_{U_i^{int}}(\prod_{j\in \mathcal{N}(i)}\{y_j\},\prod_{j\in \mathcal{N}(i)}\{\hat{y}_j\})$ & $\varepsilon$ \\
\midrule
\multicolumn{3}{l}{Input composition mismatch: $\;\bar\mu_i-\mu_i=2\mu_i+ \kappa_i + 2\varepsilon - \kappa_i$} \\
\midrule
\multicolumn{3}{l}{\textbf{Input mismatch by Thm.~\ref{THM6}: $\;\varepsilon + 2\varepsilon+2\mu_i$}}\\
\bottomrule
\end{tabular}
\end{table}
    Now from (i), using the fact that $\beta_{i}(\varepsilon,1) + \gamma^{int}_{i}(3\varepsilon+2\mu_i) \leqslant \varepsilon,~~  \forall i \in I$, one gets from Theorem \ref{THM6}, \hl{also detailed in Table \ref{tab:error-terms}}, that ${S}_{\bar{\Xi}} \preccurlyeq^{\varepsilon,0}  {S}_{{\Xi}}$. Hence, using the transitivity relation of the simulation relation in Proposition \ref{prop1}, one gets $\hat{S}_{\hat{\Xi}}\preccurlyeq^{2\varepsilon,\mu} {S}_{\Xi}$. $\hfill\qed$
\end{pf}

\begin{rem}
    Definitions \ref{DefinitionBisimulation} and \ref{DefinitionAlternatingBisimulation} define the notions of stochastic bisimulation and alternating bisimulation for subsystems. These definitions extend naturally from transition subsystems to interconnected transition systems by considering systems without internal inputs and with coupling measure $\hl{\varrho} = 0$.
\end{rem}

We now present the analogous result for approximate alternating bisimulation relations.

\begin{thm}\label{THMCompVerification}
    Let $\left\langle {S}_{i}(\bar{\Sigma})\right\rangle_{i \in I}^{{\Xi}, \mathcal{I}}$ and $\left\langle \hat{S}_{i}(\bar{\Sigma})\right\rangle_{i \in I}^{\hat{\Xi}, \mathcal{I}}$ be two interconnected transition systems with $S_{i}=(X_{i}, X_{i}^{0}, U_{i}^{ext}, U_{i}^{int}, \Delta_{i}, Y_{i}, H_{i})$ and $\hat{S}_{i}=\left(\hat{X}_{i}, \hat{X}_{i}^{0}, \hat{U}_{i}^{ext}, \right. $ $\left. \hat{U}_{i}^{int}, \hat{\Delta}_{i}, \hat{Y}_{i}, \hat{H}_{i}\right)$. Consider positive constants $\varepsilon_{i}, \mu_{i},$ for $ i \in I$, with $\varepsilon=\max _{i \in I} \varepsilon_{i}$, $\mu=\max_{i \in I} \mu_{i}$ and consider $\Xi=\left(\kappa_{1}, \ldots, \kappa_{N}\right) $ and $\hat{\Xi}=(\mu_1+\kappa_1+\varepsilon,\ldots, \mu_N+\kappa_N+\varepsilon)$. Let the following hold:
    \begin{itemize}
        \item[$(\mathrm{i})$] For all $i \in I$, $ S_{i}(\bar{\Sigma}) $ is $ \delta $-ISS and satisfies the following inequality $ \max_{i \in I} \big(\beta_{i}(\varepsilon,1) + \gamma^{int}_{i}(3\varepsilon+2\mu_i) \big)\leqslant \varepsilon$;
        \item[$(\mathrm{ii})$] For all $i \in I$, $ S_{i}(\bar{\Sigma}) $ is $(\varepsilon_{i},\mu_{i})$-approximately alternatingly bisimilar to $ \hat{S}_{i}(\bar{\Sigma}) $, and we denote $ S_{i}(\bar{\Sigma}) \widehat \approx^{\varepsilon_{i},\mu_{i}} \hat{S}_{i}(\bar{\Sigma}) $;
        \item[$(\mathrm{iii})$] $\left\langle {S}_{i}(\bar{\Sigma})\right\rangle_{i \in I}^{{\Xi}, \mathcal{I}}$ is compatible for $\Xi$-approximate composition with respect to $\mathcal{I}$;
        \item[$(\mathrm{iv})$] $\left\langle \hat{S}_{i}(\bar{\Sigma})\right\rangle_{i \in I}^{\hat{\Xi}, \mathcal{I}}$ is compatible for $\hat{\Xi}$-approximate composition with respect to $\mathcal{I}$;
    \end{itemize}
    then, ${S}_{\Xi}=\left\langle {S}_{i}(\bar{\Sigma})\right\rangle_{i \in I}^{{\Xi}, \mathcal{I}}$ is $(2\varepsilon,\mu)$-approximately alternatingly bisimilar to $\hat{S}_{\hat{\Xi}}=\left\langle \hat{S}_{i}(\bar{\Sigma})\right\rangle_{i \in I}^{\hat{\Xi}, \mathcal{I}}$.
\end{thm}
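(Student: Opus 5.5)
The plan is to mirror the proof of Theorem~\ref{THM10}, replacing simulation arguments by alternating ones and Theorem~\ref{THM6} by Theorem~\ref{THM9}. I will first get one direction for free. The compositionality result for alternating simulation in \cite{saoud2021compositional} gives $S_{\Xi}\,\widehat\preccurlyeq^{\varepsilon,\mu}\,\hat S_{\hat\Xi}$, and Proposition~\ref{prop2} then weakens this to $S_{\Xi}\,\widehat\preccurlyeq^{2\varepsilon,\mu}\,\hat S_{\hat\Xi}$. The remaining work is the converse $\hat S_{\hat\Xi}\,\widehat\preccurlyeq^{2\varepsilon,\mu}\,S_{\Xi}$. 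I will factor it as $\hat S_{\hat\Xi}\,\widehat\preccurlyeq^{\varepsilon,\mu}\,S_{\bar\Xi}\,\widehat\preccurlyeq^{\varepsilon,0}\,S_{\Xi}$, with $\bar\Xi=(2\mu_i+\kappa_i+2\varepsilon)_{i\in I}$, and conclude by the third item of Proposition~\ref{prop1}.

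For the first link, let $\mathcal R_i$ be the $(\varepsilon_i,\mu_i)$-approximate alternating simulation relation from $\hat S_i(\bar\Sigma)$ to $S_i(\bar\Sigma)$ given by (ii). Take the product relation $\mathcal R=\{(\hat x,x)\mid(\hat x_i,x_i)\in\mathcal R_i\ \forall i\}$.
\begin{itemize}
\item Conditions (i) and (ii) follow componentwise, exactly as in the proof of Theorem~\ref{THM10}, using the max-pseudometric \eqref{Eq10}.
\item For condition (iii) of Special Case~\ref{SS: scase/abisim2}, the quantifiers are reversed: the input is chosen on the side of the system playing the role of $S_2$, here $S_{\bar\Xi}$. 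Given $u^{ext}\in U_{S_{\bar\Xi}}(x)$, the composed admissibility supplies internal inputs $u_i^{int}$ with $\mathbf d_{U_i^{int}}(u_i^{int},y_{\mathcal N(i)})\le\bar\mu_i$.
\item The local alternating relations then give $(\hat u_i^{ext},\hat u_i^{int})\in\hat U_{\hat S_i}(\hat x_i)$ within $\mu_i$. Every local successor $\hat x_i'$ is matched by some $x_i'\in\Delta_i(x_i,u_i^{ext},u_i^{int})$ with $(\hat x_i',x_i')\in\mathcal R_i$.
\end{itemize}

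The main obstacle is compatibility of the chosen inputs with the composition parameters. The internal inputs $\hat u_i^{int}$ produced by the local relations must be admissible for the $\hat\Xi$-composition, i.e.\ $\mathbf d(\hat u_i^{int},\hat y_{\mathcal N(i)})\le\mu_i+\kappa_i+\varepsilon$. By the triangle inequality,
\[
\mathbf d(\hat u_i^{int},\hat y_{\mathcal N(i)})\le \mu_i+\mathbf d(u_i^{int},y_{\mathcal N(i)})+\varepsilon .
\]
With $\bar\mu_i$ in the middle term this bound is too large. I will therefore try to start the quantification from $u_i^{int}$ chosen within $\kappa_i$ of $y_{\mathcal N(i)}$; this is possible by compatibility (iii) whenever the transition of $S_{\bar\Xi}$ can be realized with such an input.

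If the bound fails in that direction, I will reorder the chain instead. The alternative is to bound $\hat S_{\hat\Xi}$ against $S_{\Xi}$ through $S_{\bar\Xi}$ exactly as in Theorem~\ref{THM10}, reusing the bookkeeping of Table~\ref{tab:error-terms}. That table shows the induced internal mismatch stays within $\bar\mu_i$, so the composed successor is a well-defined transition of $S_{\bar\Xi}$. Compatibility (iv) ensures $\hat\Delta_{\hat\Xi}$ is nonempty, so the quantified successors exist.

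For the second link, hypothesis (i) gives $\beta_i(\varepsilon,1)+\gamma_i^{int}(3\varepsilon+2\mu_i)\le\varepsilon$. Since $\bar\mu_i-\kappa_i=2\mu_i+2\varepsilon$, this is exactly condition \eqref{eqn:thm8} of Theorem~\ref{THM9} with $\varepsilon+\bar\mu_i-\kappa_i=3\varepsilon+2\mu_i$, applied with $\Xi$ and $\bar\Xi\ge\Xi$. Theorem~\ref{THM9} thus yields $S_{\bar\Xi}\,\widehat\preccurlyeq^{\varepsilon,0}\,S_{\Xi}$. Transitivity (Proposition~\ref{prop1}) gives $\hat S_{\hat\Xi}\,\widehat\preccurlyeq^{2\varepsilon,\mu}\,S_{\Xi}$, which together with the first direction gives $S_\Xi\,\widehat\approx^{2\varepsilon,\mu}\,\hat S_{\hat\Xi}$.
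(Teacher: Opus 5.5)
\textbf{Overall structure.} Your architecture is the paper's:
\begin{itemize}
\item the direction $S_\Xi\,\widehat\preccurlyeq^{2\varepsilon,\mu}\,\hat S_{\hat\Xi}$ from \cite{saoud2021compositional} and Proposition~\ref{prop2};
\item the factorisation through $S_{\bar\Xi}$ with $\bar\Xi=(2\mu_i+\kappa_i+2\varepsilon)_{i\in I}$;
\item the second link from Theorem~\ref{THM9}, where your computation $\varepsilon+\bar\mu_i-\kappa_i=3\varepsilon+2\mu_i$ is exactly what is needed;
\item Proposition~\ref{prop1} to conclude.
\end{itemize}
Theorem~\ref{THM9} also presupposes convex $U^{int}$, which the paper likewise invokes only tacitly here.

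\textbf{The gap: first link, condition (iii).} You flag this correctly but do not close it. For $\hat S_{\hat\Xi}\,\widehat\preccurlyeq\,S_{\bar\Xi}$ you must fix one $\hat u^{ext}$ and then match \emph{every} $\hat x'\in\hat\Delta_{\hat\Xi}(\hat x,\hat u^{ext})$. By the definition of the composed transition relation, this set contains the successors generated by \emph{all} internal inputs $\hat u_i^{int}$ with $\mathbf d_{U_i^{int}}(\hat u_i^{int},\hat y_{\mathcal N(i)})\le\mu_i+\kappa_i+\varepsilon$. So on the abstract side the internal input is universally quantified, while on the $S_{\bar\Xi}$ side it is yours to choose. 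Hypothesis (ii) in the direction $\hat S_i\,\widehat\preccurlyeq\,S_i$ (Special Case~\ref{SS: scase/abisim2}) has the opposite order: for each $(u_i^{ext},u_i^{int})$ it yields \emph{one} pair $(\hat u_i^{ext},\hat u_i^{int})$ whose successors are matched.

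\textbf{Your first fix.} Taking $u_i^{int}$ within $\kappa_i$ of $y_{\mathcal N(i)}$ makes that single $\hat u_i^{int}$ admissible for the $\hat\Xi$-composition. The successors produced by every other admissible $\hat u_i^{int}$ remain unmatched. A sanity check confirms something is lost: if this fix worked, the matched $x'$ would already lie in $\Delta_\Xi(x,u^{ext})$. That would give $\hat S_{\hat\Xi}\,\widehat\preccurlyeq^{\varepsilon,\mu}\,S_\Xi$ without hypothesis (i), making Theorem~\ref{THM9} superfluous. This signals that the extra behaviours introduced by the enlarged parameter $\hat\Xi$ have been ignored.

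\textbf{Your fallback.} The bookkeeping of Table~\ref{tab:error-terms} rests on two terms, neither of which the alternating hypothesis supplies.
\begin{itemize}
\item Term~2, $\mathbf d_{U_i^{int}}(\hat u_i^{int},\hat y_{\mathcal N(i)})\le\mu_i+\kappa_i+\varepsilon$, holds in Theorem~\ref{THM10} because $\hat u_i^{int}$ is the internal input of a \emph{given} transition of $\hat S_{\hat\Xi}$.
\item Term~1 comes from the simulation $\hat S_i\preccurlyeq S_i$, in the order $\forall\hat u_i\,\exists u_i$.
\end{itemize}
Here, $\hat S_i\,\widehat\preccurlyeq\,S_i$ runs $\forall u_i\,\exists\hat u_i$. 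The other half, $S_i\,\widehat\preccurlyeq\,\hat S_i$, has the order $\forall\hat u_i\,\exists u_i$, but it matches successors the wrong way ($\forall x_i'\,\exists\hat x_i'$). It also does not keep $u_i^{ext}$ equal to the prescribed input.

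The paper's proof handles this step by writing the Theorem~\ref{THM10} chain verbatim, applying Term~2 to the $\hat u_i^{int}$ returned by the local alternating relation. So your fallback reproduces the paper's argument, but the obstacle you identified is exactly what that step leaves unjustified. You cannot discharge it by appeal to the table.

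\textbf{What is missing.} You need a local relation that treats the two kinds of input differently:
\begin{itemize}
\item external inputs in the alternating order: $\forall u_i^{ext}\,\exists\hat u_i^{ext}$ within $\mu_i$, chosen before and independently of the internal input;
\item internal inputs in the simulation order: $\forall\hat u_i^{int}\,\exists u_i^{int}$ within $\mu_i$, with $\forall\hat x_i'\,\exists x_i'$ matched in $\mathcal R_i$.
\end{itemize}
Under such a hypothesis, the triangle inequality of Table~\ref{tab:error-terms} gives $\mathbf d_{U_i^{int}}(u_i^{int},y_{\mathcal N(i)})\le\bar\mu_i$ for every abstract transition. The rest of your proof then goes through unchanged.
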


\begin{pf}
    First, we have from \cite{saoud2021compositional} that $ {S}_{{\Xi}} \widehat \preccurlyeq^{\varepsilon,\mu}  \hat{S}_{\hat{{\Xi}}}$. Hence, one gets from Proposition \ref{prop2} that ${S}_{\Xi} \widehat \preccurlyeq^{2\varepsilon,\mu}  \hat{S}_{\hat{{\Xi}}}$. Let us now show that $\hat{S}_{\hat{\Xi}}\widehat \preccurlyeq^{2\varepsilon,\mu} {S}_{{\Xi}}$.
    
    For $i \in I$, let $\mathcal{R}_i$ be the $(\varepsilon_i,\mu_i)$-approximate alternating simulation relation from $\hat{S}_i(\bar{\Sigma})$ to $S_i(\bar{\Sigma})$, and let us first show that for $X = \prod_{i \in I}X_i, \hat{X} = \prod_{i \in I}\hat{X}_i$, the relation $\mathcal{R}$ defined by $\mathcal{R}:= \{ \left(x, \hat{x}\right)  \in X \times \hat{X} \: | \: (x_i,\hat{x}_i) \in \mathcal{R}_i, \forall i \in I \}$, with $x=\left(x_{1}, \ldots, x_{N}\right)$ and $\hat{x}=\left(\hat{x}_{1}, \ldots, \hat{x}_{N}\right)$, is an $(\varepsilon,\mu)$-approximate alternating simulation relation from $\hat{S}_{\hat{\Xi}}$ to $S_{\bar{\Xi}}$, with $\bar{\Xi}=(2\mu_1+\kappa_1+2\varepsilon,\ldots, 2\mu_N+\kappa_N+2\varepsilon)$.    
    The first condition is directly satisfied (we suppose that for all $ i\in I $, $ \hat{S}_{i}(\bar{\Sigma}) $ is $(\varepsilon_i,\mu_i)$-approximately alternatingly similar to $ S_{i}(\bar{\Sigma}) $, thus, for each initial condition in $ S_{\bar{\Xi}} $ we can find an initial condition in $ \hat{S}_{\hat{\Xi}}$).   
    Let $(x, \hat{x}) \in \mathcal{R}$ with $x=\left(x_{1}, \ldots, x_{N}\right)$ and $\hat{x}=\left(\hat{x}_{1}, \ldots, \hat{x}_{N}\right)$. Using the definition of the output map for approximate composition, relation \eqref{Eq10} and condition (ii) of  {Definition \ref{DefinitionAlternatingBisimulation}}, we have, 
\begin{align*}
    \mathbf{d}(H(x), \hat{H}(\hat{x})) &= \mathbf{d}\big((H_{1}(x_{1}), \ldots, H_{N}(x_{N})),(\hat{H}_{1}(\hat{x}_{1}), \ldots, \hat{H}_{N}(\hat{x}_{N}))\big)\\
    &= \max _{i \in I} \mathbf{d}_{Y_{i}}\big(H_{i}(x_{i}), \hat{H}_{i}(\hat{x}_{i})\big) \leqslant \max _{i \in I} \varepsilon_{i}=\varepsilon,
\end{align*}
    where the last inequality follows from the fact that $(x_i,\hat{x}_i) \in \mathcal{R}_i$ for all $i \in I$. Consider $(x,\hat{x})\in\mathcal{R}$ with $x=(x_1,\ldots,x_N)$ and $\hat{x}=(\hat{x}_1,\ldots,\hat{x}_N)$, and $\hat{u}^{ext} \in \hat{U}_{\hat{S}_{\hat{\Xi}}}(\hat{x})$ with $\hat{u}^{ext}=(\hat{u}_1^{ext},\ldots,\hat{u}_N^{ext})$. Let us prove the existence of $\hat{u}^{ext} \in \hat{U}_{\hat{S}_{\hat{\Xi}}}(\hat{x})$ for any ${u}^{ext} \in {U}_{{S}_{\bar{\Xi}}}({x})$ with $\mathbf{d}_{U^{ext}}(u^{ext},\hat{u}^{ext}) \leq \mu$ such that for all $\hat{x}' \in \hat{\Delta}_{\hat{\Xi}}(\hat{x},\hat{u}^{ext})$, we have the existence of ${x}' \in \Delta_{\bar{\Xi}}({x},{u}^{ext})$ satisfying $(x',\hat{x}')\in \mathcal{R}$. From the definition of the relation $\mathcal{R}$, we have for all $i\in I$, $(x_i,\hat{x}_i)\in \mathcal{R}_i$, then from the third condition of Special Case \ref{SS: scase/abisim2} of  {Definition \ref{DefinitionAlternatingBisimulation}}, we have for all $(u_i^{ext},u_i^{int})\in U_{S_i(\bar{\Sigma})}(x_i)$, the existence of $(\hat{u}_i^{ext},\hat{u}_i^{int}) \in \hat{U}_{\hat{S}_i(\bar{\Sigma})}(\hat{x}_i)$ with $\mathbf{d}_{U_i^{ext}}(u_i^{ext},\hat{u}_i^{ext})\leq\mu_i$ and $\mathbf{d}_{U_i^{int}}(u_i^{int},\hat{u}_i^{int})\leq\mu_i$, such that for all $\hat{x}_i'\in \hat{\Delta}_i(\hat{x}_i,\hat{u}_i^{ext},\hat{u}_i^{int})$, we have the existence of $x_i'\in \Delta_i(x_i,u_i^{ext},u_i^{int})$ satisfying $(x_i',\hat{x}_i')\in \mathcal{R}_i$.

    Let us show that the input ${u}^{int}=({u}^{int}_1,\ldots,{u}^{int}_N)$ satisfies the requirement of the $\bar{\Xi}$-approximate composition of the components $\{{S}_i(\bar{\Sigma})\}_{i\in I}$. The condition $\mathbf{d}_{U_i^{int}}(u_i^{int},\hat{u}_i^{int})\leq \mu_i$ implies that,  
    
\begin{align*}
    \mathbf{d}_{U_i^{int}}\Big({u}_i^{int},\prod_{j\in \mathcal{N}(i)}\{{y}_j\}\Big)
    &\leqslant \mathbf{d}_{U_i^{int}}(\hat{u}_i^{int},u_i^{int})+\mathbf{d}_{U_i^{int}}\Big(\hat{u}_i^{int},\prod_{j\in \mathcal{N}(i)}\{{y}_j\}\Big) \\
    &\leqslant \mathbf{d}_{U_i^{int}}(\hat{u}_i^{int},u_i^{int}) +\mathbf{d}_{U_i^{int}}\Big(\hat{u}_i^{int},\prod_{j\in \mathcal{N}(i)}\{\hat{y}_j\}\Big) \\
    &\quad +\mathbf{d}_{U_i^{int}}\Big(\prod_{j\in \mathcal{N}(i)}\{y_j\},\prod_{j\in \mathcal{N}(i)}\{\hat{y}_j\}\Big)\\
    &\leqslant \mu_i + \mu_i+\kappa_i+\varepsilon+ \max\limits_{j\in \mathcal{N}(i)} \varepsilon_j \\
    &\leqslant 2\mu_i +\kappa_i+\varepsilon+ \max\limits_{j \in I} \varepsilon_j =2\mu_i+\kappa_i+2\varepsilon.
\end{align*}
    Hence, from the condition (iii) of the $\bar{\Xi}$-approximate composition with respect to $\mathcal{I}$ of $\{{S}_i(\bar{\Sigma})\}_{i\in I}$ is well defined in the sense of  {Definition \ref{DefComposition}}. Thus, condition (iii) in Special Case \ref{SS: scase/abisim2} of  {Definition \ref{DefinitionAlternatingBisimulation}} holds with $u^{ext}=(u_1^{ext},\ldots,u_N^{ext})$ satisfying $\mathbf d_{U^{ext}}(u^{ext},\hat{u}^{ext})=\max\limits_ {i \in I}\{\mathbf d_{U_i^{ext}}(u^{ext}_i,\hat{u}^{ext}_i)\}  = \max\limits_{i \in I}\{\mu_i\} =\mu$, and one obtains $\hat{S}_{\hat{\Xi}}\widehat \preccurlyeq^{\varepsilon,\mu} {S}_{\bar{\Xi}}$.    
    Now from (i), using the fact that $\beta_{i}(\varepsilon,1) + \gamma^{int}_{i}(3\varepsilon+2\mu_i) \leq \varepsilon,~~  \forall i \in I$, one gets from Theorem \ref{THM9}, \hl{similar to steps in Table \ref{tab:error-terms}}, that ${S}_{\bar{\Xi}} \widehat \preccurlyeq^{\varepsilon,0}  {S}_{{\Xi}}$. The transitivity relation of the alternating simulation relation in Proposition \ref{prop1}, yields $\hat{S}_{\hat{\Xi}}\widehat \preccurlyeq^{2\varepsilon,\mu} {S}_{{\Xi}}$. $ \hfill\qed $
\end{pf}

\subsection{Approximate Stochastic (Alternating) Bisimilar Relation between the Global Interconnected Stochastic system and noise-free system}

The compositionality results for the approximate bisimulation relation are stated as follows.

\begin{thm}\label{THM16} 
Let $\hl{\left\langle {M}_{i}({\Sigma})\right\rangle_{i \in I}^{\mathbf{0}_{N}, \mathcal{I}}} = \left(X, X^{0}, U^{ext}, \hl{\mathbb{T}_{C}}, Y, \right. $ $\left. H\right)$ be interconnected gMDPs and $\left\langle {S}_{i}(\bar{\Sigma})\right\rangle_{i \in I}^{{\Xi}, \mathcal{I}} = \left(X, X^{0}, U^{ext}, \Delta_{\Xi}, Y, \right. $ $\left. H\right)$ be interconnected transition system with $\hl{M_{i}(\Sigma)}=(X_{i}, X_{i}^{0}, U_{i}^{ext}, U_{i}^{int}, \hl{\mathbb{T}_{i}}, Y_{i}, H_{i})$ and $S_{i}(\bar{\Sigma})=\left(X_{i}, X_{i}^{0},  \right. $ $\left. U_{i}^{ext}, U_{i}^{{int}}, \Delta_{i}, Y_{i}, H_{i}\right)$. Consider positive constants $\vartheta_{i}$, with \hl{$\vartheta^q=\max _{i \in I} \vartheta_{i}^q$}, and consider $\hl{\Xi}=\left(\kappa_{1}, \ldots, \kappa_{N}\right) $. Let the following hold:
\begin{itemize}
	\item[$(\mathrm{i})$]  For all $i \in I$, $ M_{i}(\Sigma) $ is $\delta$-ISS-$M_{q}$ and satisfies the inequality $ \max_{i \in I} \mathcal{B}_{\vartheta}^{i} + \hl{2^{q-1}}(\gamma_i^{int}(\kappa_i) + \gamma_i^{int}(\hl{\vartheta_i}))\leqslant \hl{\vartheta^q}$;
    \item[$(\mathrm{ii})$]  For all $i \in I$, $ M_{i}(\Sigma) $ is $(\vartheta_{i},0,0)$-approximately stochastic bisimilar to $ M(S_{i}(\bar{\Sigma})) $, denoted as $ M_{i}(\Sigma) \approx^{\vartheta_{i},0}_0 M(S_{i}(\bar{\Sigma})) $;
	\item[$(\mathrm{iii})$] $\left\langle M({S}_{i}(\bar{\Sigma}))\right\rangle_{i \in I}^{{\Xi}, \mathcal{I}}$ is compatible for $\Xi$-approximate composition with respect to $\mathcal{I}$;
\end{itemize}
then the concrete system ${M}^{c}=\left\langle {M}_{i}({\Sigma})\right\rangle_{i \in I}^{\mathbf{0}_{N}, \mathcal{I}}$ is $(\vartheta,0,0)$-approximately \hl{stochastic} bisimilar to $M(S_{{\Xi}})=\left\langle M({S}_{i}(\bar{\Sigma}))\right\rangle_{i \in I}^{{\Xi}, \mathcal{I}}$ \hl{as in Special Case \ref{SS: scase/bisim1}.}
\end{thm}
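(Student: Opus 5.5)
We prove the two directions of Special Case \ref{SS: scase/bisim1} separately, using the product relation $\mathcal{R}:=\{(\x,\bar x)\in L^q(\Omega;X)\times X \mid \max_{i\in I}\mathbb{E}[\|\x_i-\bar x_i\|^q]\le \vartheta^q\}$. This relation is consistent with the composed metric \eqref{Eq10} and with Theorem~\ref{THM-NoiseFree}. The strategy mirrors the proof of Theorem~\ref{THM10}: compare the exactly composed stochastic system $M^c$ with the exactly composed noise-free system componentwise via Lemma~\ref{lem10}. Then absorb the $\kappa_i$ mismatch of $\Xi$ and the interconnection error through the $\gamma_i^{int}$ terms of the $\delta$-ISS-$M_q$ bound, as in Theorem~\ref{THM6}.

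\textbf{Initial and output conditions.} Condition (i) follows from assumption (ii). For every $x^0\in X^0=\prod_i X_i^0$ we take $\bar x^0=x^0$, or the componentwise related initial state, which lies in $\mathcal{R}$ since $\vartheta'<\vartheta$. Condition (ii) is immediate from the definition of $\mathcal{R}$ and \eqref{Eq10}, after using $\mathbb{E}[\cdot]\le\mathbb{E}[\cdot^q]^{1/q}$ (Jensen) to pass between first and $q$-th moments.

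\textbf{Successor condition, both directions.} Fix $(\x,\bar x)\in\mathcal{R}$ and a common external input $u^{ext}$, so $\mu=0$. In $M^c$ the internal input of subsystem $i$ is exactly the tuple of random neighbour outputs $\y_{\mathcal{N}(i)}$. In $M(S_\Xi)$ we choose the deterministic $\bar u_i^{int}$ given by the $\Xi$-compatibility of assumption (iii), so that $\mathbf{d}(\bar u_i^{int},\bar y_{\mathcal{N}(i)})\le\kappa_i$. By the triangle inequality and $\mathcal{R}$,
$$\mathbb{E}\big[\mathbf{d}(\y_{\mathcal{N}(i)},\bar u_i^{int})^q\big]^{1/q}\le \kappa_i+\vartheta_i .$$
We then split the one-step deviation of subsystem $i$ with the elementary inequality $\|a+b\|^q\le 2^{q-1}(\|a\|^q+\|b\|^q)$:
\begin{itemize}
\item[(a)] the noisy-versus-noise-free deviation under identical inputs, bounded by $\mathcal{B}^i_\vartheta$ through Lemma~\ref{lem10} and assumption (ii);
\item[(b)] the deviation induced by the internal-input mismatch, bounded via \eqref{D-ISS} by $\gamma_i^{int}$ evaluated at $\kappa_i+\vartheta_i$, which we further split (using the $2^{q-1}$ factor) into $\gamma_i^{int}(\kappa_i)+\gamma_i^{int}(\vartheta_i)$.
\end{itemize}
Assumption (i) then yields $\mathbb{E}[\|\x_i'-\bar x_i'\|^q]\le\vartheta^q$ for all $i$, so $(\x',\bar x')\in\mathcal{R}$.

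\textbf{Coupling.} Because $\mathbb{T}_C$ is a product kernel \eqref{prodkernel} and $M(S_\Xi)$ has a Dirac kernel, the product coupling $\mathbb{W}_{\mathbb{T}}=\mathbb{T}_C\otimes\delta_{\bar x'}$ has the correct marginals. Since membership in $\mathcal{R}$ is the deterministic moment property just verified, $\varrho=0$. The converse direction is symmetric: starting from an input of $M(S_\Xi)$, we use the same $u^{ext}$ and the exact internal inputs of $M^c$.

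\textbf{Main obstacle.} The delicate step is the splitting of $\gamma_i^{int}(\kappa_i+\vartheta_i)$ into $\gamma_i^{int}(\kappa_i)+\gamma_i^{int}(\vartheta_i)$. This requires subadditivity of $\gamma_i^{int}$, which holds for the linear or concave gains of the appendices; otherwise one uses $\gamma(a+b)\le\gamma(2a)+\gamma(2b)$ and absorbs the constant. A second subtlety is that the internal input of $M^c$ is random, so applying \eqref{D-ISS-M-q} with a random input difference needs conditioning on $\mathcal{F}_k$ followed by the tower property. We would handle this by invoking Lemma~\ref{lem10} along runs, with the uniformity in $k$ noted after Theorem~\ref{THM-NoiseFree}, which avoids restarting the argument from arbitrary pairs in $\mathcal{R}$.
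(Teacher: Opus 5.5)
Your architecture is the paper's: a componentwise relation, identical external inputs so $\mu=0$, a product/Dirac coupling with $\varrho=0$, a per-subsystem bound by $\mathcal{B}^i_\vartheta$ plus gain terms, and symmetry for the converse. However, your successor estimate does not close against assumption (i) when $q\ge 2$.

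Splitting the one-step deviation into (a) and (b) via $\|a+b\|^q\le 2^{q-1}(\|a\|^q+\|b\|^q)$ puts the factor $2^{q-1}$ in front of $\mathcal{B}^i_\vartheta$ as well. Assumption (i) only controls $\mathcal{B}^i_\vartheta$ with coefficient $1$. You then need $\gamma_i^{int}(\kappa_i+\vartheta_i)\le\gamma_i^{int}(\kappa_i)+\gamma_i^{int}(\vartheta_i)$, and you assert that subadditivity holds for the appendix gains. It does not: those gains are $\gamma^{int}(r)=4^{q-1}(c\,r)^q$, which are convex and strictly superadditive for $q\ge 2$. What does hold is $\gamma_i^{int}(a+b)\le 2^{q-1}\big(\gamma_i^{int}(a)+\gamma_i^{int}(b)\big)$, from $(a+b)^q\le 2^{q-1}(a^q+b^q)$, and this is exactly the $2^{q-1}$ in (i). With the correct inequality, your two splittings give $2^{q-1}\mathcal{B}^i_\vartheta+4^{q-1}\big(\gamma_i^{int}(\kappa_i)+\gamma_i^{int}(\vartheta_i)\big)$. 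Even granting subadditivity, the coefficient $2^{q-1}$ on $\mathcal{B}^i_\vartheta$ survives. Either way (i) does not bound the result. Your fallback of doubling the arguments and ``absorbing the constant'' proves a theorem with a different hypothesis. As written, your constants match (i) only for $q=1$.

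The paper never separates (a) from (b) by a Minkowski-type split. Following the additive form of \eqref{D-ISS-M-q}, it lets the mismatch enter next to $\mathcal{B}^i_\vartheta$, namely $\mathbb{E}[\|\x_i'-\bar x_i'\|^q]\le\mathcal{B}^i_\vartheta+\mathbb{E}\big[\gamma_i^{int}\big(\kappa_i+\mathbf{d}_Y(H(\x_i),y_i)\big)\big]$. The single factor $2^{q-1}$ is then spent only when the gain of the sum is split. To repair your argument, use this additive estimate with the $q$-th moment gain $\gamma_i^{int}$ of $M_i(\Sigma)$. Do not use the first-power gain of \eqref{D-ISS}, whose $q$-th power is not the function appearing in (i).

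A smaller inconsistency: under your uniform-level relation $\max_i\mathbb{E}[\|\x_i-\bar x_i\|^q]\le\vartheta^q$, the neighbour discrepancy is controlled only by $\vartheta$. So the mismatch bound $\kappa_i+\vartheta_i$ does not follow from your own relation unless all $\vartheta_i$ coincide; you would need $\gamma_i^{int}(\vartheta)$ in (i).
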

\begin{pf}
	For all $ i\in I$, $ M_{i}(\Sigma) $ is $(\vartheta_i,0,0)$-approximately stochastic bisimilar to $ M(S_{i}(\bar{\Sigma})) $. For $i \in I$, let $\mathcal{R}_i$ be the $(\vartheta_i,0,0)$-approximate \hl{stochastic} bisimulation relation between $M({S}_i(\bar{\Sigma}))$ and $M_{i}(\Sigma)$ \hl{as in Special Case \ref{SS: scase/bisim1} where $\x$ as bold letter represents random variable}. Let us first show that the composed relation $\mathcal{R}$, for composed state set $X = \prod_{i \in I}X_i$, defined by $\mathcal{R}= \{ \left(\x, \bar{x}\right)  \in X \times {X} \: | \: (\x_i,\bar{x}_i) \in \mathcal{R}_i, i\in I\}$ \hl{as in Special Case \ref{SS: scase/bisim1}}, with $\x=\left(\x_{1}, \ldots, \x_{N}\right)$ and $\bar{x}=\left(\bar{x}_{1}, \ldots, \bar{x}_{N}\right)$, is a $(\vartheta,0,0)$-approximate stochastic bisimulation relation between $M(S_{\Xi})$ and $M^{c}$. The first condition is directly satisfied by $\mathcal{R}_i$ (for each initial condition in $ M^{c} $ we can find an initial condition in $ M(S_{\Xi}) $ and vice versa).    
    Using the definition of the output map for approximate composition, relation \eqref{Eq10}, and condition (ii) \hl{of Special Case \ref{SS: scase/bisim1}, we have:
	
\begin{align*}
    \mathbb E\big[\mathbf{d}(H(\x), H(\bar{x}))^q\big]^{\frac1q} &= \Big(\mathbb E\Big[\max_{i \in I} \mathbb E\big[\mathbf{d}_{Y_i}\big(H_i(\x_i), H_i(\bar{x}_i)\big)^q\big]\Big]\Big)^{\frac 1q}\\
    &\le \Big(\max_{i \in I} \vartheta_i^q\Big)^{\frac 1q} = \vartheta,
\end{align*}}
	where the last inequality follows from the fact that $(\x_i,\bar{x}_i) \in \mathcal{R}_i$ for all $i \in I$.	
	Consider $(\x,\bar x) \in \mathcal{R}$ and any joint input $u^{ext} = (u_1^{ext},\ldots,u_N^{ext}) \in U_{M^c}(x)$; choose the same input on the other side, $u^{ext} \in U_{M(S_\Xi)}(\bar x)$, so that $\mu = 0$. \hl{By the interconnection structure (condition (iii) in compatibility for $\Xi$-approximate composition), the composed stochastic kernels factor as $\mathbb{T}_C(\cdot \vert x, u^{ext}) = \prod_{i \in I} \mathbb{T}_i(\cdot \vert x_i, u_i^{ext}, u_i^{int})$, and Dirac measure composed stochastic kernel, as defined in Remark \ref{SS: rem/gMDP-TS}, $(\mathbb{T}_{\Delta})_\Xi(\cdot \vert \bar x, u^{ext}) = \prod_{i \in I} \mathbb{T}_{\Delta_i}(\cdot \vert \bar x_i, u_i^{ext}, u_i^{int})$.
    Define the composed coupling as the product measure of the individual couplings,
    
\begin{equation*}
    \mathbb{W}_{\mathbb{T}}(\cdot \vert u^{ext}, x, \bar x) := \prod_{i \in I} \mathbb{W}_i(\cdot \vert u_i^{ext}, u_i^{int}, x_i, \bar x_i).
\end{equation*}
    This is a valid probability measure on $X \times X$ with marginals $\mathbb{T}_C(\cdot\vert x,u^{ext})$ and $(\mathbb{T}_{\Delta})_\Xi(\cdot\vert \bar x,u^{ext})$, satisfying (i)--(ii) of Definition \ref{DefinitionLifting}. Moreover, since $\mathcal{R} = \prod_{i\in I}\mathcal{R}_i$,
    
\begin{equation*}
    \mathbb{W}_{\mathbb{T}}(\mathcal{R} \vert u^{ext},x,\bar x) = \prod_{i \in I} \mathbb{W}_i(\mathcal{R}_i \vert u_i^{ext},u_i^{int},x_i,\bar x_i) \ge \prod_{i\in I} 1 = 1 \ge 1 - 0.
\end{equation*}
    Fix $i \in I$. Under the coupling $\mathbb{W}_i(\cdot\vert u_i^{ext},u_i^{int},x_i,\bar x_i)$, let $(\x_i',\bar x_i')$ denote the coupled successor pair, with $x_i' \sim \x_i'
    $ generated by $\mathbb{T}_i(\cdot\vert x_i,u_i^{ext},u_i^{int})$ and $\bar x_i'$ by ${\mathbb{T}}_{\Delta_i}(\cdot\vert \bar x_i,u_i^{ext},u_i^{int})$. Here $u_i^{int}$ denotes the internal input received by subsystem $i$, generated by the interconnection map (via $\Xi$ on the $M(S_\Xi)$ side and $\mathbf{0}_N$ on the $M^c$ side) as a function of neighboring outputs.} Using the composed output map \eqref{Eq10}, the $\delta$-ISS-$M_q$ property of $M_i(\Sigma)$, and induction used in the proof of Theorem \ref{THM-NoiseFree} \hl{with triangle inequality for input mismatch}, we obtain\hl{
\begin{align}\label{Eq}
            &\mathbb{E}[\mathbf d(H(\x^{\prime}), H(\bar{x}^{\prime}))^q]^{\frac 1q} \nonumber \\
            &\leq \mathbb E [\max _{i \in I}\mathbb{E}[\|\x_i' - {\Delta}_{i}(\bar{x}_{i},u_{i}^{ext},u_{i}^{int})\|^q]]^{\frac 1q} \nonumber\\
            &\leq (\max_{i \in I} \mathcal{B}_{\vartheta}^{i}  + \mathbb E[\gamma_i^{int}(\kappa_i +  \mathbf{d}_{Y}\left(H(\x_{i}), y_i\right))])^{\frac 1q} \nonumber \\
            &\leq (\max_{i \in I} \mathcal{B}_{\vartheta}^{i}  +2^{q-1}(\gamma_i^{int}(\kappa_i) +\gamma_i^{int}(\vartheta_i)))^{\frac1q} \leqslant \hl{\vartheta},
    \end{align}}
 Hence $(\x',\bar x') \in \mathcal{R}$ \hl{with probability 1 under the coupling $\mathbb{W}_{\mathbb{T}}$, i.e., $\mathbb{W}_{\mathbb{T}}(\mathcal{R}\vert u^{ext},x,\bar x) \ge 1 = 1-0$, establishing $\mathbb{T}_C(\cdot\vert x,u^{ext})\,\bar{\mathcal{R}}_0\,{(\mathbb{T}_\Delta)}_\Xi(\cdot\vert\bar x,u^{ext})$, i.e., $M^c \preccurlyeq^{\vartheta,0}_0 M(S_\Xi)$.}
    By symmetry of the construction (swapping the roles of $x,\bar x$ and using the stochastic simulation in the opposite direction from each $\mathcal{R}_i$), the same coupling argument gives \hl{$M(S_\Xi) \preccurlyeq^{\vartheta,0}_0 M^c$}.
    Hence condition (iii) of Definition \ref{DefinitionBisimulation} holds in both directions, and we conclude \hl{$M^c \approx^{\vartheta,0}_0 M(S_\Xi)$}.
    \hfill $\qed$
\end{pf}
We now present the analogous result for approximate stochastic alternating bisimulation relations.
\begin{thm}\label{THM17}
	Let $\left\langle \hl{{M}_{i}({\Sigma})}\right\rangle_{i \in I}^{\mathbf{0}_{N}, \mathcal{I}} = \left(X, X^{0}, U^{ext}, \hl{\mathbb{T}_{C}}, Y, \right. $ $\left. H\right)$ be interconnected gMDPs and $\left\langle {S}_{i}(\bar{\Sigma})\right\rangle_{i \in I}^{{\Xi}, \mathcal{I}} = \left(X, X^{0}, U^{ext}, \Delta_{\Xi}, Y, \right. $ $\left. H\right)$ be interconnected transition system with $\hl{M_{i}(\Sigma)}=(X_{i}, X_{i}^{0}, U_{i}^{ext}, U_{i}^{int}, \hl{\mathbb{T}_{i}}, Y_{i}, H_{i})$ and $S_{i}(\bar{\Sigma})=(X_{i}, X_{i}^{0}, U_{i}^{ext}, U_{i}^{{int}}, \Delta_{i}, Y_{i}, H_{i})$. Consider positive constants $\vartheta_{i}$, with $\hl{\vartheta^q}=\max _{i \in I} \hl{\vartheta_{i}^q}$ and consider $\hl{\Xi}=\left(\kappa_{1}, \ldots, \kappa_{N}\right) $. Let the following hold:
\begin{itemize}
	\item[$(\mathrm{i})$]  For all $i \in I$, $ M_{i}(\Sigma) $ is $\delta$-ISS-$M_{q}$ and satisfies the inequality $ \max_{i \in I} \mathcal{B}_{\vartheta}^{i} + \hl{2^{q-1}}(\gamma_i^{int}(\kappa_i) + \gamma_i^{int}(\hl{\vartheta_i}))\leqslant \hl{\vartheta^q}$;
    \item[$(\mathrm{ii})$]  For all $i \in I$, $ M_{i}(\Sigma) $ is $(\vartheta_{i},0,0)$-approximately stochastic alternating bisimilar to $ M(S_{i}(\bar{\Sigma})) $, denoted as $ M_{i}(\Sigma) \widehat \approx^{\vartheta_{i},0}_0 M(S_{i}(\bar{\Sigma})) $;
	\item[$(\mathrm{iii})$] $\left\langle M({S}_{i}(\bar{\Sigma}))\right\rangle_{i \in I}^{{\Xi}, \mathcal{I}}$ is compatible for $\Xi$-approximate composition with respect to $\mathcal{I}$;
\end{itemize}
then the concrete system ${M}^{c}=\left\langle {M}_{i}({\Sigma})\right\rangle_{i \in I}^{\mathbf{0}_{N}, \mathcal{I}}$ is $(\vartheta,0,0)$-approximately \hl{stochastic} alternating bisimilar to $M(S_{{\Xi}})=\left\langle M({S}_{i}(\bar{\Sigma}))\right\rangle_{i \in I}^{{\Xi}, \mathcal{I}}$ \hl{as in Special Case \ref{SS: scase/abisim1}.}
\end{thm}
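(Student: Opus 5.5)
The plan is to follow the proof of Theorem~\ref{THM16} step by step, changing only the order of input quantification as in Special Case~\ref{SS: scase/abisim1}. This mirrors the way Theorem~\ref{THM-NoiseFree-Alt} was obtained from Theorem~\ref{THM-NoiseFree}.

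\textbf{The relation and conditions (i)--(ii).} For each $i\in I$, let $\mathcal{R}_i$ be the $(\vartheta_i,0,0)$-approximate stochastic alternating bisimulation relation supplied by hypothesis (ii). Define the composed relation
\[
\mathcal{R}=\{(\x,\bar x)\mid (\x_i,\bar x_i)\in\mathcal{R}_i,\ i\in I\}.
\]
Condition (i) of Definition~\ref{DefinitionAlternatingBisimulation} follows componentwise from condition (i) for each $\mathcal{R}_i$. The relaxed condition (ii) of Special Case~\ref{SS: scase/bisim1} is then obtained with the same computation as in the proof of Theorem~\ref{THM16}. Using the max-type metric \eqref{Eq10} and $\vartheta^q=\max_i\vartheta_i^q$, we get $\mathbb E[\mathbf d(H(\x),H(\bar x))^q]^{1/q}\le\vartheta$.

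\textbf{Condition (iii), direction $M(S_\Xi)$ to $M^c$.} Fix $(\x,\bar x)\in\mathcal{R}$. The alternating condition now quantifies first over $u^{ext}\in U_{M(S_\Xi)}(\bar x)$. I choose the identical external input on the $M^c$ side, so $\mu=0$. This is admissible because each $\mathcal{R}_i$ was built with $\mu=0$: as argued in the proof of Theorem~\ref{THM-NoiseFree-Alt}, the bounded noise support gives $U_{M_i(\Sigma)}=U_{M(S_i(\bar\Sigma))}$. The internal inputs are generated by the respective interconnection maps, exact ($\mathbf 0_N$) for $M^c$ and $\Xi$-approximate for $M(S_\Xi)$; compatibility in hypothesis (iii) guarantees the latter exist.

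Next I take the product coupling $\mathbb W_{\mathbb T}=\prod_i\mathbb W_i$ of the local couplings given by the alternating relations $\mathcal{R}_i$. Its marginals are the product kernels $\mathbb T_C$ and $(\mathbb T_\Delta)_\Xi$, by \eqref{prodkernel} and the conditional independence assumed in Definition~\ref{DefComposition}. It remains to show that the coupled successors $(\x',\bar x')$ lie in $\mathcal{R}$ with probability one.

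\textbf{The main obstacle: the internal-input mismatch estimate.} This is estimate \eqref{Eq}. For each $i$, split the distance between $\x_i'$ and $\Delta_i(\bar x_i,u_i^{ext},u_i^{int})$ into two parts:
\begin{itemize}
\item the mismatch with identical inputs, bounded by $\mathcal{B}^i_\vartheta$ via Lemma~\ref{lem10};
\item the contribution of differing internal inputs, bounded through the $\gamma_i^{int}$ term of $\delta$-ISS-$M_q$.
\end{itemize}
The internal-input discrepancy is bounded by the triangle inequality as $\kappa_i+\mathbf d_Y(H(\x_j),\bar y_j)$. Here $\kappa_i$ comes from the $\Xi$-approximate interconnection, and the second term is the neighbours' output error, at most $\vartheta_j$ in $q$-th moment. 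The elementary inequality $(a+b)^q\le 2^{q-1}(a^q+b^q)$, applied after distributing $\gamma_i^{int}$, turns this into $2^{q-1}(\gamma_i^{int}(\kappa_i)+\gamma_i^{int}(\vartheta_i))$. Hypothesis (i) then yields the bound $\vartheta^q$.

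The delicate points are two. First, one must combine moments of a random internal input with a $\mathcal K_\infty$ gain; I would handle this exactly as the paper does in \eqref{Eq}, taking expectation inside and using the deterministic selection of $u_i^{int}$ from \eqref{inp_ms}. Second, this bound must hold at every step; I would obtain this by the same run-based induction used in Theorem~\ref{THM-NoiseFree}.

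\textbf{Reverse direction.} I swap the roles: now quantify over $u^{ext}\in U_{M^c}(x)$, pick the same input for $M(S_\Xi)$, and use the converse alternating simulation contained in each $\mathcal{R}_i$. The identical coupling and moment estimate give $M^c\,\widehat\preccurlyeq^{\vartheta,0}_0\,M(S_\Xi)$ with $\varrho=0$. Together with the first direction, this yields $M^c\,\widehat\approx^{\vartheta,0}_0\,M(S_\Xi)$.
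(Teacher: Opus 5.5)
Your proposal follows the paper's proof essentially step for step: the composed relation $\mathcal{R}=\{(\x,\bar x)\mid(\x_i,\bar x_i)\in\mathcal{R}_i,\ i\in I\}$, conditions (i)--(ii) verified as in Theorem~\ref{THM16}, and condition (iii) via identical external inputs ($\mu=0$), the product coupling $\prod_{i}\mathbb{W}_i$ and the moment chain \eqref{Eq}, with the roles then exchanged. One harmless labeling slip: quantifying first over $u^{ext}\in U_{M^c}(x)$ gives $M(S_\Xi)\,\widehat\preccurlyeq^{\vartheta,0}_0\,M^c$, while your first direction is the one that gives $M^c\,\widehat\preccurlyeq^{\vartheta,0}_0\,M(S_\Xi)$; since you prove both directions, the conclusion $M^c\,\widehat\approx^{\vartheta,0}_0\,M(S_\Xi)$ is unaffected.
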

\begin{pf}
The proof follows the same construction as in the proof of Theorem \ref{THM16} (stochastic bisimulation case), with the composed relation $\mathcal{R} = \{(\x,\bar x) \in X\times X \mid (\x_i,\bar x_i)\in\mathcal{R}_i,\ i\in I\}$, where each $\mathcal{R}_i$ is now the $(\vartheta_i,0,0)$-approximate stochastic  {alternating} bisimulation relation witnessing $M_i(\Sigma) \widehat\approx^{\vartheta_i,0}_0 M(S_i(\bar\Sigma))$, together with the composed coupling $\mathbb{W}_{\mathbb{T}} = \prod_{i \in I}\mathbb{W}_i$. Conditions (i) and (ii) of Definition \ref{DefinitionAlternatingBisimulation} are verified exactly as in Theorem \ref{THM16}.
    The only difference lies in condition (iii): the input quantifiers are swapped, so we must start from $u^{ext} \in U_{M(S_\Xi)}(\bar x)$ and find a matching $u^{ext} \in U_{M^c}(x)$ (rather than the reverse). Since $M_i(\Sigma)$ and $M(S_i(\bar\Sigma))$ share the same external input space $U_i^{ext}$ for each $i\in I$, choosing identical inputs on both sides still gives $\mu=0$ regardless of the quantifier order. \hl{With this input choice, the same coupling $\mathbb{W}_i$ and the same chain of inequalities \eqref{Eq}, using the $\delta$-ISS-$M_q$ property, Lemma \ref{lem10} for the coupled successors $(\x_i',\bar x_i')$,
    
\begin{align*}
    \mathbb{E}[\|H(\x')-H(\bar x')\|^q]^{\frac 1q} &\le \Big(\max_{i\in I}\mathcal{B}_\vartheta^i + 2^{q-1}\big(\gamma_i^{int}(\kappa_i)+\gamma_i^{int}(\vartheta_i)\big)\Big)^{\frac 1q} \le \vartheta,
\end{align*}
    so $(\x',\bar x')\in\mathcal{R}$ with probability 1 under $\mathbb{W}_{\mathbb{T}} = \prod_{i \in I}\mathbb{W}_i$, i.e., $\mathbb{W}_{\mathbb{T}}(\mathcal{R}\vert u^{ext},x,\bar x) \ge 1-0, x \sim \x$. This establishes $\mathbb{T}_C(\cdot\vert x,u^{ext})\,\bar{\mathcal{R}}_0\,(\mathbb{T}_\Delta)_\Xi(\cdot\vert\bar x,u^{ext})$ in the alternating order required by Definition \ref{DefinitionAlternatingBisimulation}, condition (iii), giving $M^c \widehat\preccurlyeq^{\vartheta,0}_0 M(S_\Xi)$.}
    Repeating the argument with the roles of $M^c$ and $S_\Xi$ exchanged gives $M(S_\Xi) \widehat\preccurlyeq^{\vartheta,0}_0 M^c$. Hence $M^c \widehat\approx^{\vartheta,0}_0 M(S_\Xi)$.
    \hfill $\qed$
\end{pf}

\subsection{Approximate Stochastic (Alternating) Bisimilar {Relations for} the Global Interconnected Stochastic System and its noise-free abstraction.}
The compositionality results for the approximate stochastic bisimulation relation {are} presented in the following corollary.
\begin{cor}\label{cor1} 
	Let $\{M_{i}(\Sigma)\}_{i \in I}$, $\{S_{i}(\bar{\Sigma})\}_{i \in I}$ and  $\{\hat{S}_{i}(\bar{\Sigma})\}_{i \in I}$ be three collections of systems with subsystems representing {gMDPs} for DT-SCS as defined in \eqref{dtss-trans}, {their} noise-free transition system {counterparts} as defined in \eqref{dtss-nf-trans}, and the {abstractions} of the noise-free {counterparts} using Definition \ref{DefTransitionSys}, respectively. Here $M_{i}(\Sigma)=(X_{i}, X_{i}^{0}, U_{i}^{ext}, U_{i}^{{int}}$, $ \mathbb{T}_{i}, Y_{i}, H_{i})$, $S_{i}(\bar{\Sigma})=(X_{i}, X_{i}^{0}$, $ U_{i}^{ext}, U_{i}^{{int}}, \Delta_{i}, Y_{i}, H_{i})$ and $\hat{S}_{i}(\bar{\Sigma})=(\hat{X}_{i}, \hat{X}_{i}^{0}, \hat{U}_{i}^{ext}, \hat{U}_{i}^{int}, \hat{\Delta}_{i}$, $ \hat{Y}_{i}, \hat{H}_{i})$. Consider positive constants $\vartheta_{i},\varepsilon_{i}, \mu_{i},$ for $ i \in I$, with $\vartheta=\max _{i \in I} \vartheta_{i}$, $\varepsilon=\max _{i \in I} \varepsilon_{i}$, $\mu=\max_{i \in I} \mu_{i}$ and consider $\Xi=\left(\kappa_{1}, \ldots, \kappa_{N}\right) $ and $\hat{\Xi}=(\mu_1+\kappa_1+\varepsilon,\ldots, \mu_N+\kappa_N+\varepsilon)$. If $\left\langle {M}_{i}({\Sigma})\right\rangle_{i \in I}^{\mathbf{0}_{N}, \mathcal{I}}$, $\left\langle M({S}_{i}(\bar{\Sigma}))\right\rangle_{i \in I}^{{\Xi}, \mathcal{I}}$ satisfy the conditions in Theorem \ref{THM16} and $\left\langle {S}_{i}(\bar{\Sigma})\right\rangle_{i \in I}^{{\Xi}, \mathcal{I}}$, $\left\langle \hat{S}_{i}(\bar{\Sigma})\right\rangle_{i \in I}^{\hat{\Xi}, \mathcal{I}}$ satisfy the conditions in Theorem \ref{THM10}, then $M^c = \left\langle {M}_{i}({\Sigma})\right\rangle_{i \in I}^{\mathbf{0}_{N}, \mathcal{I}}$ is  $(2\varepsilon + \vartheta,\mu,0)$-approximately stochastic bisimilar to $M(\hat{S}_{\hat{\Xi}}) = \left\langle M(\hat{S}_{i}(\bar{\Sigma}))\right\rangle_{i \in I}^{\hat{\Xi}, \mathcal{I}}$.
\end{cor}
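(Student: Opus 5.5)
The plan is to chain the two compositional results through the transitivity property of Proposition~\ref{prop1}. By hypothesis, the pair $\left\langle M_i(\Sigma)\right\rangle_{i\in I}^{\mathbf{0}_N,\mathcal I}$, $\left\langle M(S_i(\bar\Sigma))\right\rangle_{i\in I}^{\Xi,\mathcal I}$ satisfies the conditions of Theorem~\ref{THM16}. That theorem gives $M^c \approx^{\vartheta,0}_0 M(S_\Xi)$ in the sense of Special Case~\ref{SS: scase/bisim1}. Likewise, the pair $\left\langle S_i(\bar\Sigma)\right\rangle_{i\in I}^{\Xi,\mathcal I}$, $\left\langle \hat S_i(\bar\Sigma)\right\rangle_{i\in I}^{\hat\Xi,\mathcal I}$ satisfies the conditions of Theorem~\ref{THM10}. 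That theorem gives $S_\Xi \approx^{2\varepsilon,\mu} \hat S_{\hat\Xi}$ in the sense of Special Case~\ref{SS: scase/bisim2}.

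The first step is to lift the second relation to the gMDP level. By Remark~\ref{SS: rem/gMDP-TS}, $S_\Xi$ and $\hat S_{\hat\Xi}$ are represented as gMDPs $M(S_\Xi)$ and $M(\hat S_{\hat\Xi})$ whose kernels are Dirac measures on the (composed) transition maps. Let $\mathcal R_{23}$ be the relation from Theorem~\ref{THM10}. For any related pair and matched inputs, take the coupling to be the Dirac measure on a pair of successors $(x',\hat x')\in\mathcal R_{23}$, which exists by condition (iii) of Special Case~\ref{SS: scase/bisim2}. This coupling assigns mass one to $\mathcal R_{23}$. Hence $M(S_\Xi)\approx^{2\varepsilon,\mu}_0 M(\hat S_{\hat\Xi})$. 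For nondeterministic transitions one would instead choose the Dirac point at the matched successor for each $x'$; I will assume determinism of the $\Delta_i$, as in \eqref{nfe}, so that this is immediate.

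The second step applies the second item of Proposition~\ref{prop1} to $M^c\approx^{\vartheta,0}_0 M(S_\Xi)$ and $M(S_\Xi)\approx^{2\varepsilon,\mu}_0 M(\hat S_{\hat\Xi})$. Composing the relations through the intermediate state and the couplings through $X$ gives $M^c\approx^{\vartheta+2\varepsilon,\,0+\mu}_{0+0} M(\hat S_{\hat\Xi})$, which is the claim.

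The main obstacle is that the first relation is of mixed type: it relates a random state of $M^c$ to a point, with condition (ii) measured as a $q$-th moment. The second relation is pointwise. I therefore need transitivity to hold for this hybrid notion. For output closeness, Minkowski's inequality gives
\[
\mathbb E\bigl[\mathbf d(H(\x),\hat H(\hat x))^q\bigr]^{1/q}\le \mathbb E\bigl[\mathbf d(H(\x),H(\bar x))^q\bigr]^{1/q}+\mathbf d(H(\bar x),\hat H(\hat x))\le \vartheta+2\varepsilon,
\]
since the second term is deterministic. For inputs, the triangle inequality on $\mathbf d_{U^{ext}}$ gives $0+\mu$. The coupling of a general measure with a Dirac measure, followed by another Dirac measure, is again a product with a Dirac measure, so the failure probability stays $0$. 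I expect to verify these three points explicitly rather than citing \cite{haesaert2017verification}. In each direction I would quantify inputs in the order of Special Case~\ref{SS: scase/bisim1}.
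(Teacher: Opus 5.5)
Your proposal is correct and follows the paper's proof: Theorem~\ref{THM16} and Theorem~\ref{THM10} supply the two links, the transition-system relation is reused for the Dirac-kernel gMDPs with $\varrho=0$, and Proposition~\ref{prop1} composes them into $(2\varepsilon+\vartheta,\mu,0)$; your explicit Minkowski check for the mixed random/point relation is a useful detail that the paper leaves to the transitivity result it cites. One small caveat: determinism of each $\Delta_i$ does not make the composed map $\Delta_\Xi$ single-valued, because the $\kappa_i$-slack on internal inputs generally yields several successors, so the Dirac lifting should be read relative to the selected internal inputs, as in the product-kernel definition \eqref{prodkernel}, which is also how the paper implicitly treats it.
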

\begin{pf}
	\hl{By Theorem \ref{THM16}, we have $M^c \approx^{\vartheta,0}_0 M(S_\Xi).$
    By Theorem \ref{THM10}, we have, at the transition-system level, $S_\Xi \approx^{2\varepsilon,\mu} \hat S_{\hat\Xi}$.
    Since $S_\Xi$ and $\hat S_{\hat\Xi}$ are both transition systems, we represent them as gMDPs $M(S_\Xi)$, $M(\hat S_{\hat\Xi})$ with Dirac transition kernels. The condition (iii) of Special Case \ref{SS: scase/bisim2} (bisimulation between transition systems) is precisely condition (iii) of Definition \ref{DefinitionBisimulation} specialized to Dirac kernels with $\hl{\varrho}=0$, the relation witnessing $S_\Xi \approx^{2\varepsilon,\mu} \hat S_{\hat\Xi}$ also witnesses $M(S_\Xi) \approx^{2\varepsilon,\mu}_0 M(\hat S_{\hat\Xi})$. Applying the transitivity of the stochastic bisimulation relation (Proposition \ref{prop1}) to the chain
    $M^c \approx^{\vartheta,0}_0 M(S_\Xi) \approx^{2\varepsilon,\mu}_0 M(\hat S_{\hat\Xi}),$
    yields
    $M^c \approx^{2\varepsilon+\vartheta,\ \mu}_0 M(\hat S_{\hat\Xi}).$
    \hfill $\qed$}
\end{pf}

We now present the analogous result for approximate stochastic alternating bisimulation relations.
\begin{cor}\label{cor18}
	Let $\{M_{i}(\Sigma)\}_{i \in I}$, $\{S_{i}(\bar{\Sigma})\}_{i \in I}$ and  $\{\hat{S}_{i}(\bar{\Sigma})\}_{i \in I}$ be three collections of systems with subsystems representing {gMDPs} for DT-SCS as defined in \eqref{dtss-trans}, {their} noise-free transition system {counterparts} as defined in \eqref{dtss-nf-trans}, and the {abstractions} of the noise-free {counterparts} using Definition \ref{DefTransitionSys}, respectively. Here $M_{i}(\Sigma)=(X_{i}, X_{i}^{0}, U_{i}^{ext}, U_{i}^{{int}}$, $ \mathbb{T}_{i}, Y_{i}, H_{i})$, $S_{i}(\bar{\Sigma})=(X_{i}, X_{i}^{0}$, $ U_{i}^{ext}, U_{i}^{{int}}, \Delta_{i}, Y_{i}, H_{i})$ and $\hat{S}_{i}(\bar{\Sigma})=(\hat{X}_{i}, \hat{X}_{i}^{0}, \hat{U}_{i}^{ext}, \hat{U}_{i}^{int}, \hat{\Delta}_{i}$, $ \hat{Y}_{i}, \hat{H}_{i})$. Consider positive constants $\vartheta_{i},\varepsilon_{i}, \mu_{i},$ for $ i \in I$, with $\vartheta=\max _{i \in I} \vartheta_{i}$, $\varepsilon=\max _{i \in I} \varepsilon_{i}$, $\mu=\max_{i \in I} \mu_{i}$ and consider $\Xi=\left(\kappa_{1}, \ldots, \kappa_{N}\right) $ and $\hat{\Xi}=(\mu_1+\kappa_1+\varepsilon,\ldots, \mu_N+\kappa_N+\varepsilon)$. If $\left\langle {M}_{i}({\Sigma})\right\rangle_{i \in I}^{\mathbf{0}_{N}, \mathcal{I}}$, $\left\langle M({S}_{i}(\bar{\Sigma}))\right\rangle_{i \in I}^{{\Xi}, \mathcal{I}}$ satisfy the conditions in Theorem \ref{THM17} and $\left\langle {S}_{i}(\bar{\Sigma})\right\rangle_{i \in I}^{{\Xi}, \mathcal{I}}$, $\left\langle \hat{S}_{i}(\bar{\Sigma})\right\rangle_{i \in I}^{\hat{\Xi}, \mathcal{I}}$ satisfy the conditions in Theorem \ref{THMCompVerification}, then $M^c = \left\langle {M}_{i}({\Sigma})\right\rangle_{i \in I}^{\mathbf{0}_{N}, \mathcal{I}}$ is  $(2\varepsilon + \vartheta,\mu, 0)$-approximately stochastic {alternating} bisimilar to $M(\hat{S}_{\hat{\Xi}}) = \left\langle M(\hat{S}_{i}(\bar{\Sigma}))\right\rangle_{i \in I}^{\hat{\Xi}, \mathcal{I}}$.
\end{cor}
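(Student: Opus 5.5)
The plan mirrors the proof of Corollary~\ref{cor1}, with every bisimulation replaced by its alternating counterpart. The corollary is obtained by chaining two alternating relations and applying transitivity.

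\textbf{First link.} Since $\langle M_i(\Sigma)\rangle_{i\in I}^{\mathbf{0}_N,\mathcal I}$ and $\langle M(S_i(\bar\Sigma))\rangle_{i\in I}^{\Xi,\mathcal I}$ satisfy the hypotheses of Theorem~\ref{THM17}, we obtain $M^c \widehat\approx^{\vartheta,0}_0 M(S_\Xi)$ in the sense of Special Case~\ref{SS: scase/abisim1}.

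\textbf{Second link.} Since $\langle S_i(\bar\Sigma)\rangle_{i\in I}^{\Xi,\mathcal I}$ and $\langle\hat S_i(\bar\Sigma)\rangle_{i\in I}^{\hat\Xi,\mathcal I}$ satisfy the hypotheses of Theorem~\ref{THMCompVerification}, we obtain $S_\Xi \widehat\approx^{2\varepsilon,\mu}\hat S_{\hat\Xi}$ in the sense of Special Case~\ref{SS: scase/abisim2}.

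\textbf{Lifting the second link to gMDPs.} This is the only step needing care. Represent $S_\Xi$ and $\hat S_{\hat\Xi}$ as the gMDPs $M(S_\Xi)$ and $M(\hat S_{\hat\Xi})$ with Dirac kernels, as in Remark~\ref{SS: rem/gMDP-TS}. Take the relation $\mathcal R$ witnessing $S_\Xi \widehat\approx^{2\varepsilon,\mu}\hat S_{\hat\Xi}$. Fix a pair $(x,\hat x)\in\mathcal R$ and an input $\hat u$ quantified first on the side being simulated. Condition (iii) of Special Case~\ref{SS: scase/abisim2} provides a matching input $u$ within distance $\mu$ such that every successor $x'$ has a related successor $\hat x'$.

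The systems are deterministic, so the successors are unique and both kernels are Dirac measures at single points. The product coupling $\delta_{x'}\otimes\delta_{\hat x'}$ has the correct marginals and assigns mass $1$ to $\mathcal R$. Hence $\mathbb T(\cdot\,|\,x,u)\,\bar{\mathcal R}_0\,\hat{\mathbb T}(\cdot\,|\,\hat x,\hat u)$, which is condition (iii) of Definition~\ref{DefinitionAlternatingBisimulation} with $\varrho=0$. Conditions (i) and (ii) carry over unchanged, since on points the moment distance reduces to the ordinary distance. The same argument in the reverse direction gives $M(S_\Xi)\widehat\approx^{2\varepsilon,\mu}_0 M(\hat S_{\hat\Xi})$.

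\textbf{Transitivity.} Apply the fourth item of Proposition~\ref{prop1} to the chain
\[
M^c \widehat\approx^{\vartheta,0}_0 M(S_\Xi)\widehat\approx^{2\varepsilon,\mu}_0 M(\hat S_{\hat\Xi}).
\]
The errors add: $\varepsilon$ becomes $\vartheta+2\varepsilon$, $\mu$ becomes $0+\mu$, and $\varrho$ becomes $0+0$. This yields $M^c\widehat\approx^{2\varepsilon+\vartheta,\mu}_0 M(\hat S_{\hat\Xi})$, as claimed.

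\textbf{Main obstacle.} The delicate point is compatibility between the two relation types in the transitivity step. The first link relates random variables to points, via Special Case~\ref{SS: scase/abisim1}; the second relates points to points. I would handle this by composing the relations through the middle point $x_2\in X$ of $S_\Xi$. The first coupling has a Dirac second marginal, so it can be glued directly to the Dirac coupling of the second link.

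The output bound then follows from Minkowski's inequality:
\[
\mathbb E[\mathbf d(\x_1,x_3)^q]^{1/q}\le \mathbb E[\mathbf d(\x_1,x_2)^q]^{1/q}+\mathbf d(x_2,x_3)\le \vartheta+2\varepsilon .
\]
The input selections compose in the alternating order: pick $u_3$ first, then $u_2$ through the second link, then $u_1$ through the first. This matches the order used in the proof of Proposition~\ref{prop1}, and the triangle inequality bounds the input mismatch by $0+\mu=\mu$.
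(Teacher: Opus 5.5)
Your proposal follows the paper's own proof. The paper simply repeats the argument of Corollary~\ref{cor1}: it chains the conclusions of Theorems~\ref{THM17} and~\ref{THMCompVerification} through the Dirac-kernel gMDP $M(S_\Xi)$ and applies the alternating item of Proposition~\ref{prop1}. Your explicit Minkowski gluing through the middle point is a useful addition, since Proposition~\ref{prop1} is phrased for point relations whereas the first link is a $q$-th moment relation.

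One justification needs correcting. $S_\Xi$ and $\hat S_{\hat\Xi}$ are \emph{not} deterministic in general. A $\Xi$-approximate composition with positive tolerances lets each $u_i^{int}$ range over a ball around the neighbours' outputs, so $\Delta_\Xi(x,u^{ext})$ is set-valued. This is exactly the conservatism quantified in Theorem~\ref{THM6}, and it means a single Dirac kernel is not canonically defined.

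What actually closes the chain is the quantifier pattern on the middle system: in each direction, its successor is universally quantified in one of the two links.
\begin{itemize}
\item In the link to $\hat S_{\hat\Xi}$, this comes from Theorem~\ref{THMCompVerification}.
\item In the link to $M^c$, the estimate in the proof of Theorem~\ref{THM17} only uses that the internal input of $S_\Xi$ lies within $\kappa_i$ of the neighbours' outputs, so it holds for every admissible selection.
\end{itemize}
Hence whatever middle successor one link produces can be fed to the other. The paper's Dirac-kernel identification glosses over the same point, so this is not a departure from its proof. Still, determinism is not the reason the step holds, and you should not rest the lifting on it.
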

\begin{pf}
	The proof is straightforward, same as for Corollary \ref{cor1} by applying the transitivity relation of the alternating bisimulation relation in Proposition \ref{prop1} on the conclusions of Theorems \ref{THM17} and \ref{THMCompVerification}. One gets $M^{c} \widehat \approx^{2\varepsilon + \vartheta,\mu}_0 M(\hat{S}_{\hat{\Xi}})$. $ \hfill\qed $ 
\end{pf}

\section{Case study: Temperature Control for Multi-room Building}\label{sec: case study}
This section demonstrates the engineering relevance of the proposed compositional results using a temperature regulation problem for a ten-room building (as shown in Fig. \ref{fig:troom}). The implementation was performed in MATLAB on an AMD Ryzen 9 5950X 16-Core Processor with 125 GB of memory.

\subsection{Model description and control objective}	
Consider the discrete-time stochastic version of the thermal model for a ten-room building \cite{jagtap2020symbolic}, described as:
{\small
	\begin{align}\label{model}
		x_i(k{+}1) &= x_i(k) +  \big[ (-\alpha - \alpha_{ei})\,x_i(k) + \alpha\,x_2(k) \nonumber\\
		&\quad+ \alpha_{ei}\,t_e \big] + gw_i(k), i \in \{1,7,9\},\nonumber \\
		x_2(k{+}1) &= x_2(k) + \big[ (-4\alpha - \alpha_{e2} - \alpha_h u_1(k))\,x_2(k) \nonumber\\ 
		&\quad+ \alpha\,x_1(k) + \alpha\,x_3(k) + \alpha\,x_7(k) + \alpha\,x_9(k) \nonumber\\
		&\quad+ \alpha_{e2}\,t_e + \alpha_h t_h u_1(k) \big] + gw_{2}(k),\nonumber \\
		x_3(k{+}1) &= x_3(k) +  \big[ (-2\alpha - \alpha_{e3})\,x_3(k)\nonumber\\
		&\quad+ \alpha\,x_2(k) + \alpha\,x_4(k) + \alpha_{e3}\,t_e \big] + gw_{3}(k), \nonumber\\
		x_4(k{+}1) &= x_4(k) +  \big[ (-2\alpha - \alpha_{e4})\,x_4(k) \nonumber\\
		&\quad+ \alpha\,x_3(k) + \alpha\,x_5(k) + \alpha_{e4}\,t_e \big] + gw_{4}(k), \nonumber\\
		x_5(k{+}1) &= x_5(k) +  \big[ (-4\alpha - \alpha_{e5} - \alpha_h u_2(k))\,x_5(k)\nonumber\\
		&\quad+ \alpha\,x_4(k) + \alpha\,x_6(k) + \alpha\,x_8(k) + \alpha\,x_{10}(k)\nonumber\\
		&\quad+ \alpha_{e5}\,t_e + \alpha_h t_h u_2(k) \big] + gw_{5}(k), \nonumber\\
		x_i(k{+}1) &= x_i(k) +  \big[ (-\alpha - \alpha_{ei})\,x_i(k) + \alpha\,x_5(k) \nonumber\\
		&\quad+ \alpha_{ei}\,t_e \big] + gw_i(k), i\in\{6,8,10\},
	\end{align}
}
\begin{figure}[h]
	\centering
	\includegraphics[scale=0.23]{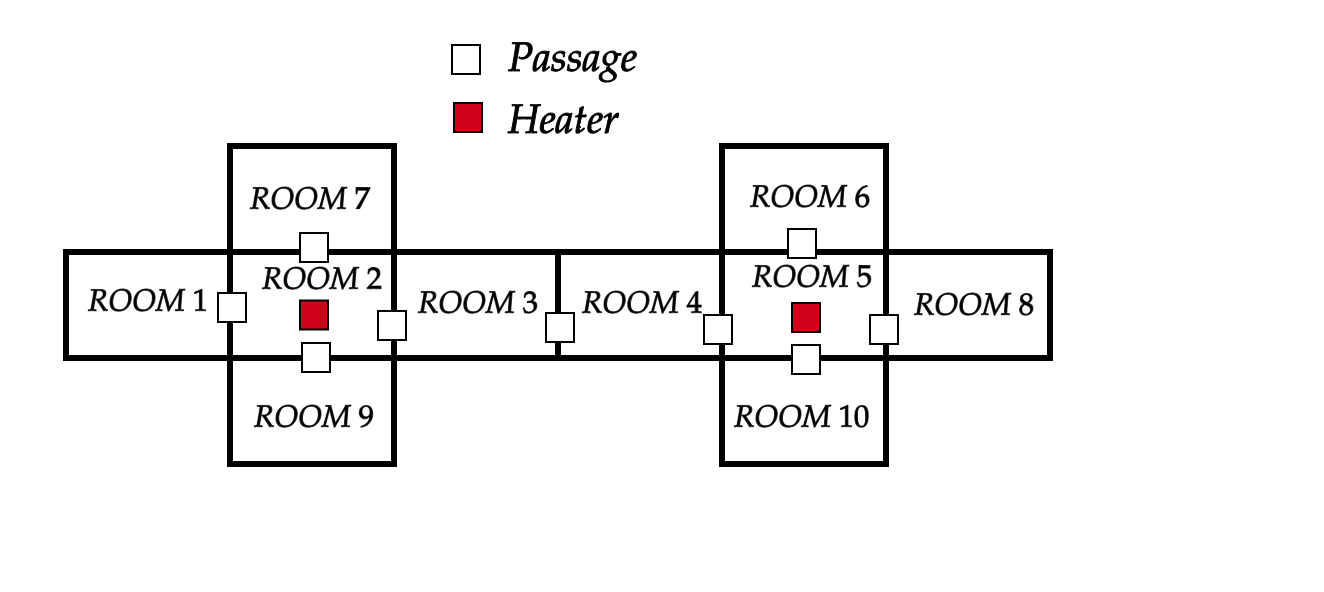}
	\caption{The schematic of a ten-room building.}
	\label{fig:troom}
\end{figure}
where the stochastic process state $x_i(k), i \in I=\{1,2,\dots,10\}$, represents the room temperature in the $i^{t h}$ room in the building, expressed in $^{o}C$, independent different noise \hl{$w_i(k)$ is the truncated gaussian noise of mean $\varpi = 0.2$ and variance $\sigma^2 = 9$, truncated to $\pm 4\sigma$ with negligible discarded mass,} and $g = 0.01$ is the diffusion function. The system parameters are $\alpha = 0.05,\alpha_{e1} = \alpha_{e3} = \alpha_{e4} = \alpha_{e6} = \alpha_{e7} = \alpha_{e8} = \alpha_{e9} = \alpha_{e10} = 0.0033, \alpha_{e2} = \alpha_{e5} = 0.005$ and $\alpha_h = 0.0036$. The external temperature is $t_e = 12 ^{o}C$ and heaters temperature is $t_{h} = 100 ^{o}C$. For each $ i^{th} $ difference equation, the states $ x_{j} $ with $ j \neq i$ represent the internal inputs. The external inputs are given by the joint vector $u(k) = (u_1(k), u_2(k)) \in U$, where $U = \{(1,0),(0,1),(0,0)\}$, where $ 0 $ represents the heater in the off state and $ 1 $ represents the heater in the on state. Here, we are also operating under an additional constraint that both the heaters cannot be switched on at the same time, which results in $(1,1) \not \in U$. We consider the compact state-space $\hl{\bar X}=[0,25]^{10}$, \hl{enlarged as $X=[-0.5,25.5]^{10}$ to ensure validity of kernel $\mathbb T(X|x,u) = 1, \forall x,u,$ under worst-case noise.} The control objective is to synthesize a controller to ensure that the mean room temperatures are inside a safe region $\mathfrak{S}=[18,21.8]^{10}$. \hl{The relation is in the $q^{th}$ moment, so the guarantee covers the mean, not probabilistic safety.}

The model can be seen as an exact composition of $10$ noisy subsystems, as described in \eqref{model}, represented as $M^c = \left\langle {M}_{i}({\Sigma})\right\rangle_{i \in I}^{\mathbf{0}_{N}, \mathcal{I}}$, with
\begin{align*}
    \mathcal{I} =\{ &(1,1),(1,2),(2,2),(2,1),(2,3),(2,7),(2,9),(3,3),(3,2),(3,4),\\
    &(4,4),(4,3),(4,5),(5,5),(5,4),(5,6),(5,8),(5,10),(6,6),(6,5),\\
    &(7,7),(7,2),(8,8),(8,5),(9,9),(9,2),(10,10),(10,5)\}.
\end{align*}	
\hl{First, one can check that each subsystem $M_i(\Sigma)$, $i \in I$, is
$\delta$-ISS-$M_q$ with
$\beta_i(r,k)=4^{q-1}\big((0.947)^k\big)^qr$,
$\gamma^{ext}(r)=4^{q-1}(1.765\,r)^q$,
$\gamma^{int}(r)=4^{q-1}(0.976\,r)^q$ and
$\psi = 4^{q-1}(0.90)^q$.
Each noise-free subsystem $S_i(\bar\Sigma)$, $i \in I$, is $\delta$-ISS with
$\beta_i(r,k)=(0.947)^k r$, $\gamma^{ext}(r)=1.765\,r$ and $\gamma^{int}(r)=0.976\,r$.
For $q=1$ and using the concave function $\bar{\alpha}(r) = 1.011\log_e(1 + r)$ at the noise moment $r=g\,\mathbb E|(1-\lambda)w+\lambda w'|=0.022$ ($\lambda=0.1$, $\|x_1-x_2\|=0.01$) as in Appendix \ref{sec: example}, we obtain a stricter upper bound $\mathcal{B}_{\vartheta}^{(1)} = 0.498 < \mathcal{B}_{\vartheta}^{(2)} = 0.908$, almost 45 \% tighter with respect to $\delta$-ISS-$M_q$ upper bound}, in order to quantify the bisimulation relation parameter $\vartheta_i$ between $M_i(\Sigma)$ and $S_i(\bar{\Sigma})$. 

We compute the local abstraction $\hat{S}_i(\bar{\Sigma})$ for each subsystem $S_i(\bar{\Sigma})$, $i \in I$, using the symbolic approach presented in~\cite{girard2009approximately}. Hence, we have that ${M}^{c}\widehat \preccurlyeq^{(\vartheta + 2\varepsilon,0)}_0M(\hat{S}_{{\Xi}})$,\footnote{Given the safety specification for the original system $\mathfrak{S}$ and since the original system is related to the compositional abstraction by an approximate alternating simulation relation, the abstract specification is a deflated version of the original one.} where $\hat{S}_{\Xi}=\langle \hat{S}_i(\bar{\Sigma}) \rangle_{i\in I}^{\mathbf{0}_{10},\mathcal{I}}$, $\alpha = 0.01, \alpha_e = 0.52, t_e = 18^{\circ} C, \mathcal{B}_{\vartheta}^{(1)} = 0.091$, with a heater in every room, for the abstraction's relation parameter $\varepsilon$ in the case of the first moment (mean).
\hl{
At runtime, the current continuous measurement $x \in X_i$, for each subsystem $i$, is mapped
to a symbol via the deterministic quantizer $q_{\nu}(x)$, defined by
half-open partition cells $\prod_{i} [j_i \nu, (j_i+1)\nu)$ (constructed on the safe set), with
$q_\nu(x)$ as the unique center of partition cell containing $x$. This map is single-valued
by construction, as $\|x - q_\nu(x)\| \;\le\; \nu/2$, and number of discretized cells $n_x = 190$ in the preceding relation study.}
\hl{\hl{Concretely, the local abstraction $\hat S_i(\bar\Sigma)$ is constructed by quantizing the state space $X_i$ via $q_\nu$ and evaluating the successor map on the quantized states, i.e., $\hat x_i' = q_\nu\big(f_i(\hat x_i, u_i^{ext}, u_i^{int})\big)$, which is the standard symbolic-model construction for $\delta$-ISS systems~\cite{girard2009approximately}.}}
\begin{table}[h!]
	\centering
	\caption{Comparison of Computation time for Abstraction Methods across varying discretization parameter $n_x$.}
	\label{tab:abstraction_comparison1}
	\vspace{5pt}
	\begin{tabular}{|p{2.5cm}|c|c|c|c|c|}
		\hline
		\centering Method & \multicolumn{5}{c|}{Computation time (s)} \\
		\hline
		\centering $n_x$ & 2 & 3 & 4 & 5 & 6 \\
		\hline
		\shortstack[l]{\rule{0pt}{12pt}Compositional\\Abstraction $\hat{S}_{\Xi}$} & 1.41 & 1.44 & 1.34 & 1.35 & 1.62\\
		\hline
		\shortstack[l]{\rule{0pt}{12pt}Monolithic\\Abstraction $\hat{S}^{c}$} & 0.176 & 0.49 & 6.05 & 55 & 343\\
		\hline
	\end{tabular}
\end{table}
\hl{ The computation of the symbolic model and the controller synthesis have been performed using the tool SCOTS provided by \cite{rungger2016scots}.}
Figure \ref{fig:resp_traffic} shows the evolution of the mean temperature in each room, for 10000 runs, starting from the initial condition

\[
    x_0 = [19.0, 19.0, 19.0, 19.8, 20.8, 19.8, 19.0, 19.8, 19.0, 19.8]^T,
\]
using a safety controller synthesized for the compositional abstraction constructed of the global interconnected stochastic system $\hat{S}_{{\Xi}}$. The dashed black lines represent the boundary of the safe set for each abstracted section. One can readily see that the mean trajectory evolve within the safe region $\mathfrak{S}$.
\begin{figure}[t!]
	\centering
	\includegraphics[scale=0.2]{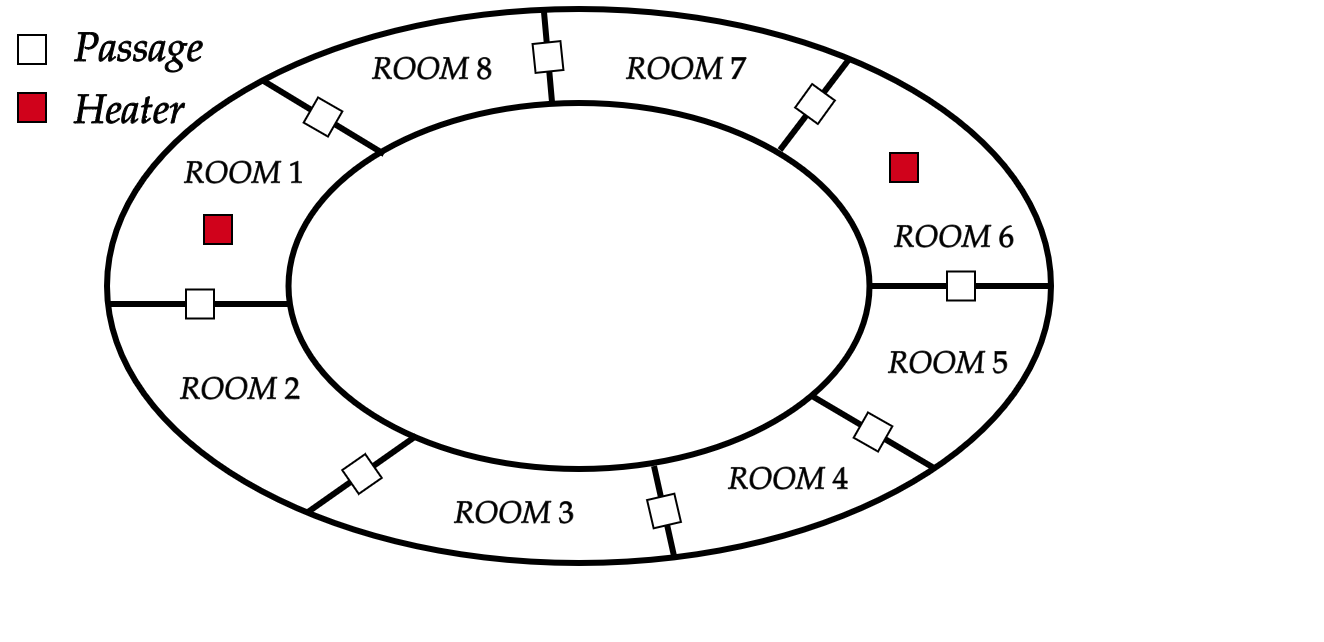}
	\caption{The schematic of an eight-room building in circular cascade interconnection.}
	\label{fig:eroom}
\end{figure}
\hl{The number of discretized cells $n_x$ used in the remainder of this
section are chosen to illustrate the scalability
of the compositional approach relative to the monolithic one, and are
smaller as compared to the one used in relation study $n_x = 190$, as with the number of transitions of order $O(|U| \times n_x^r) = O(3 \times190^{10})$, for monolithic abstraction of $(r=10)$ inter-connected rooms, it breaks the limits by MATLAB memory size preventing any valid comparisons.} 
For the study in Table \ref{tab:abstraction_comparison1}, the computation time of the abstractions of ten rooms individually, with the number of discretized intervals $n_x = 6$, are given by 
\[
    [0.1936, 0.1465, 0.13, 0.1408, 0.1310, 0.1252, 0.1096, 0.1189, 0.1129, 0.1184]
\]
 seconds, and the composition of the global abstraction from local ones using an approximate composition takes less than $0.341$ seconds. Together, this resulted in $1.621$ seconds to compute an abstraction compositionally. Constructing an abstraction monolithically, using the same discretization parameter $n_x = 6$, took $343$ seconds. Hence, the proposed compositional approach is \hl{faster for a higher number of discretized intervals $n_x$} (needed for accuracy) as shown in Table \ref{tab:abstraction_comparison1}. \hl{Here the grid resolution for safe set is $\frac{(21.8 - 18)}{n_x} = 0.63$, the number of abstract states is $n_x^r = (6)^{10}$, where $r$ is the total number of rooms and given $U$ as the input set, the number of transitions is $|U| \times n_x^r = 3 \times 6^{10}$ with memory usage of 0.483 GB.}

\subsection{Case study with Multi-room Building in Circular Cascade Interconnection}
We also compared our proposed compositional abstraction method for different numbers of subsystems $N$ in a circular cascade interconnected global system as shown in Fig. \ref{fig:eroom}, and as defined below $\forall i \in \{1,2,\cdots,N\}$ (where indices are taken modulo $N$):

\begin{align}\label{model_general}
    x_i(k{+}1) &= x_i(k) + (-2\alpha - \alpha_e)\,x_i(k) + \alpha\,x_{i-1}(k) \nonumber\\
    &\quad + \alpha\,x_{i+1}(k) + \alpha_h u_i(k) + \alpha_e\,t_e + g\,w_i(k).
\end{align}
\hl{As the number of subsystems increases, our compositional method scales markedly better than the monolithic approach. It remains comparable at small scale but yields a significant reduction in computation time from eight rooms onward while the monolithic approach fails to converge beyond ten rooms}, as shown in Table \ref{tab:abstraction_comparison2}.

\begin{table}[h!]
	\centering
	\caption{Comparison of Computation time for Abstraction Methods across varying the number of rooms in a circular interconnected multi-room building}
	\label{tab:abstraction_comparison2}
	\vspace{5pt}
	\begin{tabular}{|p{2.5cm}|c|c|c|c|c|}
		\hline
		\centering Method & \multicolumn{5}{c|}{Computation time (s)} \\
		\hline
		Number of rooms & 6 & 8 & 10  & 12  & 14 \\
            \hline
            Grid ($n_x{=}5$) & $5^{6}$ & $5^{8}$ & $5^{10}$ & $5^{12}$ & $5^{14}$ \\
		\hline
		\shortstack[l]{\rule{0pt}{12pt}Compositional\\Abstraction $\hat{S}_{\Xi}$} & 0.79 & 0.93 & 1.25 & 2.19 & 36.46\\
		\hline
		\shortstack[l]{\rule{0pt}{12pt}Monolithic\\Abstraction $\hat{S}^{c}$} & 0.18 & 1.09 & 23 & $\infty^{*}$ & $\infty^{*}$\\
		\hline
	\end{tabular}
\\$\ ^{*}$Here $\infty$ means that the abstraction algorithm did not converge due to exponential runtime growth with respect to the dimensions of the state and inputs.  
\end{table}

\begin{figure}[t!]
	\centering
	\includegraphics[scale=0.5]{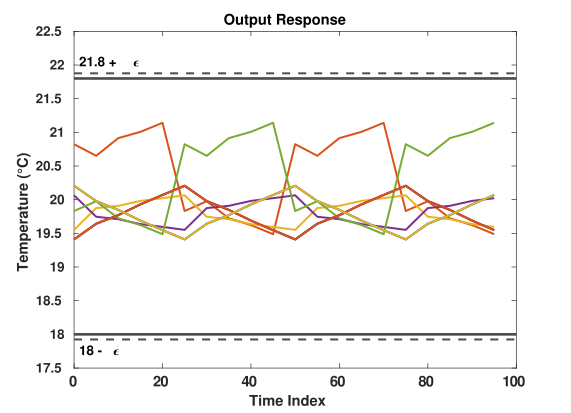}
	\caption{The evolution of mean temperature in a ten-room building.}
	\label{fig:resp_traffic}
\end{figure}

\section{Conclusion}
This paper studied the problem of compositional abstraction of stochastic interconnected nonlinear systems. The compositional framework for constructing abstractions relies on the notion of approximate composition and the $\delta$-ISS-$M_{q}$ property. In particular, given a large system consisting of interconnected components, we provided conditions under which the concept of approximate (alternating) stochastic bisimulation relation is preserved when going from the subsystems to the large interconnected system, with smaller relational conservatism as compared to the one added by $\delta$-ISS-$M_{q}$ property and reduced abstraction computation time with compositional abstraction as compared to monolithic abstraction \hl{only for high-dimensional systems with finer discretization (required for accuracy) and large-scale systems.} Numerical results are presented demonstrating the merits of the theoretical results \hl{with safety in mean.}

\bibliographystyle{plain}
\bibliography{autosam}

\appendix
\section*{Appendix}
\renewcommand{\thesubsection}{A.\arabic{subsection}}

\subsection{$\delta$-ISS-$M_{q}$ for linear DT-SCS}\label{ISSl}

Consider the linear discrete-time stochastic system ($\Sigma$):
\begin{equation}
\label{LinearSys}
    \x(k+1) = A \x(k) + B \uu^{ext}(k) + C \uu^{int}(k) + D \ww(k), 
\end{equation}
where $A \in \mathbb{R}^{n \times n}, B \in \mathbb{R}^{n \times m_{1}}, C \in \mathbb{R}^{n \times m_{2}}$, and $D \in \mathbb{R}^{n \times p}$. The signals $\x, \uu^{ext}, \uu^{int}$ represent the state, external input, and internal input, respectively, while $\ww$ denotes a sequence of $p$-dimensional i.i.d. random variables. 
\begin{thm}\label{THM1}
    Consider the system $\Sigma$ in \eqref{LinearSys}. Let $\varpi_q := (\mathbb{E}[\|\ww(k)\|^q])^{1/q}$. If \hl{$\|A\| < 1$}, then $\Sigma$ is $\delta$-ISS-$M_{q}$ for $q \in \N$. The characterizing functions are defined for $(r,k) \in \mathbb{R}_{0}^+ \times \mathbb{N}_{0}$ by:
    
\begin{align*}
    \beta(r, k) &= 4^{q-1}\big(\|A\|^{k}\big)^{q}\,\hl{r}, & \gamma^{ext}(r) &= 4^{q-1}\Big(\frac{\|B\| r}{1-\|A\|}\Big)^{q},\\
    \gamma^{int}(r) &= 4^{q-1}\Big(\frac{\|C\| r}{1-\|A\|}\Big)^{q}, & \hl{\psi} &\hl{{}= 4^{q-1}\Big(\frac{2\|D\|\varpi_q}{1-\|A\|}\Big)^{q}}.
\end{align*}
\end{thm}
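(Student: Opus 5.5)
We prove Theorem~\ref{THM1} by writing the solution of \eqref{LinearSys} in closed form and bounding each piece separately. Fix two initial states $x_1,x_2$, input pairs $\varsigma_1,\varsigma_2$, and (in general) independent noise sequences $\ww,\ww'$ for the two processes. Unrolling the recursion gives, for $k\ge 1$,
\begin{align*}
\x_{x_1\varsigma_1}(k)-\x_{x_2\varsigma_2}(k) &= A^k(x_1-x_2) + \sum_{j=0}^{k-1}A^{k-1-j}B\big(\uu_1^{ext}(j)-\uu_2^{ext}(j)\big) \\
&\quad + \sum_{j=0}^{k-1}A^{k-1-j}C\big(\uu_1^{int}(j)-\uu_2^{int}(j)\big) + \sum_{j=0}^{k-1}A^{k-1-j}D\big(\ww(j)-\ww'(j)\big).
\end{align*}
This is a sum of four terms. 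We apply the elementary convexity inequality $\|a_1+a_2+a_3+a_4\|^q\le 4^{q-1}\sum_{l}\|a_l\|^q$, valid for $q\in\N$ since $t\mapsto t^q$ is convex. Then we take expectations term by term, which produces the common prefactor $4^{q-1}$ in every characterizing function.

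The first three terms are deterministic. For the first, submultiplicativity of the induced infinity norm gives $\|A^k(x_1-x_2)\|^q\le (\|A\|^k)^q\|x_1-x_2\|^q$. This is $\beta(\|x_1-x_2\|^q,k)$ with $\beta(r,k)=4^{q-1}(\|A\|^k)^q r$, which is $\mathcal{K}\mathcal{L}$ because $\|A\|<1$ (for $A\neq 0$). For the input terms, we use $\|\uu_1^{ext}(j)-\uu_2^{ext}(j)\|\le \|\uu_1^{ext}-\uu_2^{ext}\|_{k-1}$ and the geometric series bound $\sum_{j=0}^{k-1}\|A\|^{k-1-j}\le 1/(1-\|A\|)$. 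Together these give $\gamma^{ext}$, and identically $\gamma^{int}$ with $C$ in place of $B$. Both are $\mathcal{K}_\infty$, and both vanish at $k=0$ since the sums are empty, as Definition~\ref{DefD-ISS-M-q} requires.

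The noise term is the step requiring care, since a uniform-in-$k$ constant $\psi$ is needed. We bound the norm pathwise: $\|\sum_j A^{k-1-j}D(\ww(j)-\ww'(j))\|\le \|D\|\sum_j \|A\|^{k-1-j}\|\ww(j)-\ww'(j)\|$. Next we apply Minkowski's inequality in $L^q$ to the weighted sum, rather than Jensen on the $q$-th power. Using the i.i.d.\ assumption, each summand has $L^q$ norm at most $\|A\|^{k-1-j}\cdot 2\varpi_q$, because $\mathbb{E}[\|\ww(j)-\ww'(j)\|^q]^{1/q}\le 2\varpi_q$, again by Minkowski. Hence the $L^q$ norm of the noise term is at most $2\|D\|\varpi_q/(1-\|A\|)$. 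Raising to the $q$-th power and multiplying by $4^{q-1}$ yields $\psi$, which is finite because $\mathcal{V}_{\ww}$ is bounded. (If the same noise drives both processes, this term vanishes and the bound holds trivially.)

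Finally, we check the $k=0$ case directly: the difference is $x_1-x_2$, and $\|x_1-x_2\|^q\le\beta(\|x_1-x_2\|^q,0)$ since $4^{q-1}\ge 1$. Summing the four contributions gives \eqref{D-ISS-M-q}. The main obstacle is keeping the noise constant independent of $k$. Minkowski applied to the whole geometric sum achieves this, whereas a naive per-term $q$-th-power split would introduce a $k$-dependent factor.
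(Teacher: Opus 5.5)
Your proof is correct and follows essentially the same route as the paper: you unroll the recursion, split the four terms with the $4^{q-1}$ convexity inequality, bound the input terms by the geometric series in $\|A\|$, and apply Minkowski's inequality in $L^q$ to the noise term to get the $k$-uniform constant $2\|D\|\varpi_q/(1-\|A\|)$. Your version is slightly more careful than the paper's, since you write out the pathwise norm bound before invoking Minkowski, treat $k=0$ separately, and flag the degenerate case $A=0$ for the $\mathcal{K}\mathcal{L}$ property.
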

\begin{pf}
    Let $\x^{a}$ and $\x^{b}$ be two trajectories of the system \eqref{LinearSys}. At time $k+1$, the difference between $\x^{a}$ and $\x^{b}$ is
\begin{align*}
    \x^{a}(k+1)-\x^{b}(k+1) &= A (\x^{a}(k)-\x^{b}(k)) + B(\uu^{ext,a}(k)-\uu^{ext,b}(k))\\
    &\quad + C(\uu^{int,a}(k)-\uu^{int,b}(k)) + D(\ww^{a}(k)-\ww^{b}(k)).
\end{align*}
We can rewrite this explicitly as
\begin{align*}
    \x^{a}(k+1)-\x^{b}(k+1) &= A^{k+1} (\x^{a}(0)-\x^{b}(0)) + \sum_{j=0}^{k} A^{k-j}\Big(B (\uu^{ext,a}(j)-\uu^{ext,b}(j))\\
    &\quad + C(\uu^{int,a}(j)-\uu^{int,b}(j)) + D(\ww^{a}(j)-\ww^{b}(j))\Big),
\end{align*}
and, using the generalized Minkowski inequality,
\begin{align*}
    \mathbb{E}\big(\|\x^{a}(k+1)-\x^{b}(k+1)\|^{q}\big)
    &\leqslant 4^{q-1}\Bigg[ \mathbb{E}\big(\|A^{k+1} (\x^{a}(0)-\x^{b}(0))\|^{q}\big)\\
    &\qquad\quad + \mathbb{E}\Big(\Big\|\sum_{j=0}^{k} A^{k-j}B ( \uu^{ext,a}(j)-\uu^{ext,b}(j))\Big\|^{q}\Big)\\
    &\qquad\quad +\mathbb{E}\Big(\Big\|\sum_{j=0}^{k} A^{k-j}C(\uu^{int,a}(j) - \uu^{int,b}(j))\Big\|^{q}\Big)\\
    &\qquad\quad + \mathbb{E}\Big( \Big\|\sum_{j=0}^{k} A^{k-j}D(\ww^{a}(j)-\ww^{b}(j))  \Big\|^{q}\Big)\Bigg].
\end{align*}
Using the Minkowski inequality for the noise term, it follows that
\begin{align*}
    \mathbb{E}\big(\|\x^{a}(k+1)-\x^{b}(k+1)\|^{q}\big)
    &\leqslant 4^{q-1}\Bigg( \|A^{k+1} (\x^{a}(0)-\x^{b}(0))\|^{q} + \Big\|\sum_{j=0}^{k} A^{k-j}B ( \uu^{ext,a}(j)-\uu^{ext,b}(j))\Big\|^{q}\\
    &\qquad\quad +\Big\|\sum_{j=0}^{k} A^{k-j}C(\uu^{int,a}(j) - \uu^{int,b}(j))\Big\|^{q}\\
    &\qquad\quad \hl{{}+\Big\|\sum_{j=0}^{k} A^{k-j}D (2\varpi_q) \Big\|^{q}}\Bigg).
\end{align*}
Hence, if $\|A\| < 1$, system \eqref{LinearSys} satisfies the $\delta$-ISS-$M_{q}$ property with $\beta(r, k), \gamma^{ext}(r), \gamma^{int}(r)$, $\psi$ as defined in Theorem \ref{THM1}. \hfill $\qed$
\end{pf}

\subsection{$\delta$-ISS-$M_{q}$ for Lipschitz nonlinear DT-SCS}\label{ISSnl}

Consider the discrete-time stochastic nonlinear system \eqref{dt_sys}. We posit the following Lipschitz continuity assumption on the nonlinear maps $f$ and $g$.

\begin{assum}
    \label{assum:4}
    There exist positive constants $L_{f}^{x}, L^{u^{ext}}_f,$ $ L^{u^{int}}_f, L^{x}_g \in \mathbb{R}^{+}$ such that for all $x^{a}, x^{b} \in \mathcal{X}$, $u^{int,a}, u^{int,b} \in \mathcal{U}^{int}$, and $u^{ext,a}, u^{ext,b} \in \mathcal{U}^{ext}$, we have
\begin{align}
    \|f(x^{a}, u^{ext,a},u^{int,a})-f(x^{b}, u^{ext,b},u^{int,b})\| &\leqslant  L_{f}^{x}\|x^{a}-x^{b}\| + L_{f}^{u^{ext}}\|u^{ext,a}-u^{ext, b}\| + L_{f}^{u^{int}}\|u^{int,a}-u^{int,b}\|, \label{Eq28}\\
    \|g(x^{a})-g(x^{b})\| &\leqslant L_{g}^{x}\|x^{a}-x^{b}\|.
\end{align}
\end{assum}

\begin{thm}\label{THM2}
    Let $\Sigma$ be a nonlinear system as in \eqref{dt_sys} satisfying Assumption \ref{assum:4}. \hl{Let $\varpi_q := (\mathbb{E}[\|\ww(k)\|^q])^{1/q}, G := \sup_{x \in \mathcal X}\|g(x)\| < \infty$.} If the contraction condition $L_{f}^{x} + L_{g}^{x}\varpi_q < 1$ holds, then $\Sigma$ is $\delta$-ISS-$M_{q}$ for $q \in \N$. The characterizing functions are defined for $(r,k) \in \mathbb{R}_{0}^+ \times \mathbb{N}_{0}$ as
    
\begin{align*}
    \beta(r, k) &= 4^{q-1}\Big( (L_{f}^{x}+L_{g}^{x}\varpi_q)^{k}  \Big)^{q}\,\hl{r},\\
    \gamma^{ext}(r) &= 4^{q-1}\Big(\frac{L_{f}^{u^{ext}} r}{1-(L_{f}^{x}+L_{g}^{x}\varpi_q)}\Big)^{q},\\
    \gamma^{int}(r) &= 4^{q-1}\Big(\frac{L_{f}^{u^{int}} r}{1-(L_{f}^{x}+L_{g}^{x}\varpi_q)}\Big)^{q},\\
    \hl{\psi} &\hl{{}= 4^{q-1}\Big(\frac{ 2G\varpi_q}{1-(L_{f}^{x}+L_{g}^{x}\varpi_q)}\Big)^{q}}.
\end{align*}
\end{thm}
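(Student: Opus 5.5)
The plan mirrors the proof of Theorem \ref{THM1}, replacing the explicit convolution formula with a one-step recursive contraction estimate that is then unrolled. Let $\x^{a},\x^{b}$ be two solution processes of \eqref{dt_sys} with inputs $\varsigma_a,\varsigma_b$ and independent noise sequences $\ww^a,\ww^b$. Write $e(k):=\x^{a}(k)-\x^{b}(k)$ and $\rho:=L_f^x+L_g^x\varpi_q<1$. The one-step difference is
\begin{align*}
e(k+1) &= \big[f_k(\x^a,\uu^{ext,a},\uu^{int,a})-f_k(\x^b,\uu^{ext,b},\uu^{int,b})\big] + \big[g_k(\x^a)-g_k(\x^b)\big]\ww^a(k)\\
&\quad + g_k(\x^b)\big(\ww^a(k)-\ww^b(k)\big).
\end{align*}

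First, I will bound each of the three terms in $L^q$ norm rather than in the $q$th moment directly. Working with $\|\cdot\|_{L^q}:=\mathbb E[\|\cdot\|^q]^{1/q}$ keeps everything linear, so the constant $4^{q-1}$ only appears at the very end. Assumption \ref{assum:4} gives
\begin{equation*}
\|f\text{-term}\|_{L^q}\le L_f^x\|e(k)\|_{L^q}+L_f^{u^{ext}}\|\Delta u^{ext}(k)\|+L_f^{u^{int}}\|\Delta u^{int}(k)\|.
\end{equation*}
For the second term, $\ww^a(k)$ is independent of $\mathcal F_k$ and $\x^a(k),\x^b(k)$ are $\mathcal F_k$-measurable. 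Conditioning on $\mathcal F_k$ therefore yields $\|[g(\x^a)-g(\x^b)]\ww^a(k)\|_{L^q}\le L_g^x\varpi_q\|e(k)\|_{L^q}$. For the third term I use the uniform bound $G$ and Minkowski: $\|g(\x^b)(\ww^a-\ww^b)\|_{L^q}\le 2G\varpi_q$. Combining these gives the recursion
\begin{equation*}
\|e(k+1)\|_{L^q}\le \rho\,\|e(k)\|_{L^q}+L_f^{u^{ext}}\|\Delta u^{ext}(k)\|+L_f^{u^{int}}\|\Delta u^{int}(k)\|+2G\varpi_q .
\end{equation*}

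Second, I unroll this by induction:
\begin{equation*}
\|e(k)\|_{L^q}\le \rho^{k}\|x_1-x_2\|+\sum_{j=0}^{k-1}\rho^{k-1-j}\big(L_f^{u^{ext}}\|\Delta u^{ext}(j)\|+L_f^{u^{int}}\|\Delta u^{int}(j)\|+2G\varpi_q\big).
\end{equation*}
Each input difference is bounded by its truncated sup norm $\|\cdot\|_{k-1}$, and the geometric sums are bounded by $1/(1-\rho)$. Raising the resulting four-term sum to the power $q$ and applying $(a_1+\dots+a_4)^q\le 4^{q-1}\sum a_i^q$ produces exactly the stated $\beta$ (note $(\rho^k r^{1/q})^q=\rho^{kq}r$ with $r=\|x_1-x_2\|^q$), $\gamma^{ext}$, $\gamma^{int}$ and $\psi$. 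For $k=0$ the input terms vanish, as Definition \ref{DefD-ISS-M-q} requires. It then remains to check the class properties: $\beta$ is $\mathcal K$ in $r$ and strictly decreasing to $0$ in $k$ because $\rho<1$, and the $\gamma$'s are $\mathcal K_\infty$.

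The step I expect to be the main obstacle is the multiplicative noise term $[g(\x^a)-g(\x^b)]\ww^a(k)$. The conditioning argument requires $\ww^a(k)$ to be independent of $\mathcal F_k$, which goes beyond mere adaptedness, and it yields the factor $\varpi_q$ only after using the induced norm, $\|M w\|\le\|M\|\|w\|$. I will therefore state explicitly that the i.i.d. noise $\ww(k)$ is independent of $\mathcal F_k$. The intended conditional estimate is $\mathbb E[\|Mw\|^q\mid\mathcal F_k]\le \|M\|^q\varpi_q^q$.
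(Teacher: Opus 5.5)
Your proposal is correct and follows essentially the same route as the paper's proof. Both derive a one-step $L^q$ contraction estimate from Assumption \ref{assum:4} and Minkowski with rate $\rho=L_f^x+L_g^x\varpi_q$, unroll it, and raise the four-term sum to the $q$th power at the cost of $4^{q-1}$. Your splitting $g(\x^a)\ww^a-g(\x^b)\ww^b=[g(\x^a)-g(\x^b)]\ww^a+g(\x^b)(\ww^a-\ww^b)$ and your independence argument for the factor $L_g^x\varpi_q$ are details the paper's sketch leaves implicit, and you handle them correctly, provided $\mathcal F_k$ is taken as the $\sigma$-algebra generated by both mutually independent noise sequences strictly before time $k$.
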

\begin{pf}
    Let $\x^{a}$ and $\x^{b}$ be two trajectories of system \eqref{dt_sys}, satisfying Assumption \ref{assum:4}. At time $k+1$, we have
\begin{align*}
    \mathbb{E}\big[\|\x^{a}(k+1)-\x^{b}(k+1)\|^q\big]^{\frac{1}{q}}
    = \mathbb{E}\Big[\Big\|&f(\x^{a}(k),\uu^{ext,a}(k),\uu^{int,a}(k)) - f(\x^{b}(k),\uu^{ext,b}(k),\uu^{int,b}(k))\\
    &+ g(\x^{a}(k))\ww^{a}(k)-g(\x^{b}(k))\ww^{b}(k) \Big\|^q \Big]^{\frac{1}{q}}.
\end{align*}
Using Assumption \ref{assum:4} and applying the Minkowski inequality, if $\rho \triangleq (L_{f}^{x}+L_{g}^{x}\varpi_q)<1$ is satisfied, the recurrence is stable. Iterating for $q \in \N$, we obtain
\begin{align*}
    \mathbb{E}\big[\|\x^{a}(k+1)-\x^{b}(k+1)\|^{q}\big]
    \leqslant 4^{q-1}\Bigg[ &\big(\rho^{k+1}\|\x^{a}(0)-\x^{b}(0)\|\big)^q
    + \Big(\sum_{j=0}^{k}\rho^{k-j} L_{f}^{u^{ext}}\|\uu^{ext,a}(j)-\uu^{ext,b}(j)\|\Big)^q\\
    &+ \Big(\sum_{j=0}^{k}\rho^{k-j} L_{f}^{u^{int}}\|\uu^{int,a}(j)-\uu^{int,b}(j)\|\Big)^q
    \hl{{}+\Big(\sum_{j=0}^{k}\rho^{k-j}\,2G\varpi_q\Big)^q}\Bigg].
\end{align*}
If $(L_{f}^{x}+L_{g}^{x}\varpi_q) < 1$, the $\delta$-ISS-$M_{q}$ property is satisfied and $\beta(r, k), \gamma^{ext}(r)$, $\gamma^{int}(r), \psi,$ are defined as in Theorem \ref{THM2}. \hfill $\qed$
\end{pf}

\begin{rem}
\hl{    
The characterizing functions of the $\delta$-ISS-$M_q$ property in
Theorems~\ref{THM1} and~\ref{THM2} are derived under the condition that
both compared trajectories are driven by different noise
realizations $\ww^a(k), \ww^b(k), k \in \mathbb{N}_0$. Under the assumption that the compared trajectories are driven by the same noise realization, independent of initial conditions, the $\delta$-ISS-$M_q$ upper bound on their distance is obtained with the same $\beta, \gamma$ functions but
$\psi=0$.}
\end{rem}

\subsection{Verification procedure for Assumption \ref{Assum8}}\label{sec: example}
\label{app:verification-assum10}
{
Assumption \ref{Assum8} requires the existence of a scalar $\lambda \in (0,1)$ and a
function $\bar\alpha(k)$.
It can be checked through the following five steps,
applicable to any DT-SCS $\Sigma$ satisfying the $\delta$-ISS-$M_q$ property of
Definition \ref{DefD-ISS-M-q} with known characterizing functions $\beta$, $\psi$
(e.g.\ as given in Theorem \ref{THM1} for linear systems or Theorem \ref{THM2} for
Lipschitz nonlinear systems).}
\begin{enumerate}{\color{hlcolor}
    \item {Choose $\lambda$ so that the target bound is strictly positive.}
    The right-hand side of \eqref{mainkey} contains the term
    $(4^{-(q-1)} - \lambda)\psi$. Since $4^{-(q-1)} > 0$ for any $q \in \N$ and
    $\lambda$ is a design parameter, selecting
    $\lambda \in \big(0,\, 4^{-(q-1)}\big)$
    ensures this term is strictly positive whenever $\psi > 0$.
    \item {Bound the noise-combination term.}
    Bound
    $r_k := \mathbb{E}\big[\|(1-\lambda)g_k(\x_1)\ww_k + \lambda g_k(\x_2)\ww'_k\|^q\big]$
    depending on the model class:
    \begin{itemize}
        \item  {Linear systems} \eqref{LinearSys}: using $g(x) \equiv D$,
        $ r_k \;\leq\; \|D\|^q\, \mathbb{E}\big[\|(1-\lambda)\ww_k + \lambda \ww'_k\|^q\big]. $
        \item  {Lipschitz nonlinear systems} (Assumption \ref{assum:4}): using
        $\|g(x)\| \leq L_g^x \|x\| + \|g(0)\|$ for $x \in \mathcal{X}$ (compact) and
        the noise moment $\varpi_q := (\mathbb{E}[\|\ww(k)\|^q])^{1/q}$,
        $ r_k \;\leq\; \big(L_g^x \sup_{x \in \mathcal{X}} \|x\| + \|g(0)\|\big)^q (\varpi_q)^q. $
    \end{itemize}
    In both cases the bound is independent of $k$, owing to the stationarity of the
    noise sequence and the compactness of $\mathcal{X}$.
    \item {Compute the envelope constant $R$.}
    Set $R$ equal to the (finite, $k$-independent) bound obtained in Step 2, i.e.
    $R := \sup_{k \in \mathbb{N}_0} r_k$. 
    Finiteness of $R$ is guaranteed by
    compactness of $\mathcal{X}$ and boundedness of the noise moment $\varpi_q$, so this
    step never fails for a well-posed DT-SCS.
    \item {Check feasibility and fix the concave majorant.}
    Consider $L_f^x \leq 0.5, (L_f^x)^q\bar{S}_{\alpha} \leq \bar \alpha(R)$. Verify that there exists $a \in \mathbb R^{+}$ such that
    
\[
    R \leq \exp\!\left(\frac{T_1 + T_2 + (4^{-(q-1)} - \lambda)\psi}{2a}\right) - 1, \qquad a \geq \frac{R}{\log(1+R)},
\]
    where $T_1 = (4^{-(q-1)} - \lambda)\beta(\|x_1 - x_2\|^q, k), T_2 = (\lambda - 1)\big(\beta(\|x_1 - x_2\|, k)\big)^q$. 
    In
    particular, choosing the same $a$
    guarantees $\bar\alpha(r_k) := a\log(1+r_k)$ satisfies $\bar\alpha(r_k) \geq r_k$ for all
    $r \in [0, R]$, by concavity of $\log(1+\cdot)$. This defines the required
    concave function $\bar\alpha$.
    \item {Confirm Assumption \ref{Assum8}.}
    With $\lambda$ from Step 1 and $\bar\alpha$ from Step 4, confirm that
    
\begin{align*}
    2\bar\alpha(r_k) \leq{}& (4^{-(q-1)} - \lambda)\beta(\|x_1 - x_2\|^q, k)\\
    &+ (\lambda - 1)\big(\beta(\|x_1 - x_2\|, k)\big)^q + (4^{-(q-1)} - \lambda)\psi
\end{align*}
    holds for all $k \in \mathbb{N}_0$. If the inequality fails for the chosen
    $\lambda$, $R$, and $a$, tighten $\lambda$ (moving it closer to $0$) or reduce
    $R$ (e.g.\ by shrinking $\mathcal{X}$ or improving the Lipschitz/noise-moment
    estimates in Step 2), and repeat from Step 1.}
\end{enumerate}
\begin{rem} {\color{hlcolor}The
right-hand side of \eqref{mainkey} can be made strictly positive
(whenever $\psi \in \mathbb R^{+}$) by an appropriate choice of $\lambda$, and that a valid
majorant $\bar\alpha$ can be constructed once the state space is compact
and the noise moment is finite, conditions already assumed throughout the paper.
}
\end{rem}

\end{document}